\newcommand{\bee}{\begin{eqnarray}}
\newcommand{\eee}{\end{eqnarray}}
\newcommand{\be}{\begin{eqnarray*}}
\newcommand{\ee}{\end{eqnarray*}}
\newcommand{\R}{\mathbb {R}}
\newcommand{\I}{\mathbb {I}}
\newcommand{\C}{\mathbb {C}}
\newcommand{\N}{\mathbb {N}}
\newcommand{\E}{\mathcal E}
\newcommand{\Ss}{\mathcal{S}}
\newcommand{\Oo}{\mathcal{O}}
\newcommand{\U}{\mathcal U}
\newcommand{\Hh}{\mathcal H}
\newcommand{\K}{\mathcal K}
\newcommand{\Rc}{\mathcal R}
\newcommand{\rr}{\mathbf {r}}
\newcommand{\RR}{\mathbf {R}}
\newcommand{\LL}{\mathbf {L}}
\newcommand{\x}{\hat {\mathbf {x}}}
\newcommand{\y}{\hat {\mathbf {y}}}
\newcommand{\z}{\hat {\mathbf {z}}}
\newcommand{\D}{{\rm D}}
\def\re{\mathop{\rm Re}\nolimits}
\def\im{\mathop{\rm Im}\nolimits}
\newtheorem{theorem}{Theorem}[section]
\newtheorem{lemma}[theorem]{Lemma}
\newtheorem{proposition}[theorem]{Proposition}
\newtheorem{definition}[theorem]{Definition}
\newtheorem{remark}[theorem]{Remark}
\newtheorem{corollary}[theorem]{Corollary}
\begin{document}

\title {The Stark effect on $H_2^+$-like molecules}

\author {V. Grecchi}

\address {Vincenzo Grecchi, Dipartimento di Matematica \\ Universit\'a degli Studi di Bologna \\ Piazza di Porta San Donato 5, Bologna 40126, Italy}

\email {grecchi@dm.unibo.it}

\author {H. Kova\v{r}\'{\i}k}

\address {Hynek Kova\v{r}\'{\i}k, Dipartimento di Matematica \\ Politecnico di Torino \\Corso Duca degli Abruzzi 24, Torino
10129, Italy}

\email {hynek.kovarik@polito.it}

\author {A. Martinez}

\address {Andr\'e Martinez, Dipartimento di Matematica \\ Universit\'a degli Studi di Bologna \\ Piazza di Porta San Donato 5, Bologna 40126, Italy}

\email {martinez@dm.unibo.it}

\author {A. Sacchetti}

\address {Andrea Sacchetti, Dipartimento di Matematica Pura ed Applicata\\ Universit\'a
degli studi di Modena e Reggio Emilia\\ Via Campi 213/B, Modena 41100,
Italy}

\email {andrea.sacchetti@unimore.it}

\author {V. Sordoni}

\address {Vania Sordoni, Dipartimento di Matematica \\ Universit\'a degli Studi di Bologna \\ Piazza di Porta San Donato 5, Bologna 40126, Italy}

\email {sordoni@dm.unibo.it}

\date {\today}

\thanks {H.K. was supported by the German Research Foundation (DFG) under Grant KO 3636/1-1.}

\begin {abstract}

We consider the vibrational energy levels of the first two electronic states of the molecule ion $H_2^+$. \ The Born-Oppenheimer method applied to the case of the Stark effect on a $H_2^+$-like molecule gives existence of sharp resonances localized in the same interval of energy of the vibrational levels.

\end{abstract}

\subjclass {Primary {35Q40}; Secondary {81Q20}}

\maketitle
\hfill {\textbf{\textit{In Memory of Pierre Duclos}}
\section {Introduction}

We consider the Stark effect on a diatomic ionized molecule of the type of $H_2^+$ due to an external field.

Up to now, this problem has been treated by many authors in a heuristic way in the coaxial case where the external field and the nuclear axis of the molecule have same direction \cite{CMM,Hiskes,MPS}. \ This attitude is justified since, in such a case, the effect of the external field on the diatomic molecule is the strongest one.

Our problem consists of a non isolated three-body system, where one of the masses is much smaller than the other ones, under the effect of a  coaxial and locally uniform external electric field. \ Under some reasonable assumptions on the problem, we prove (see Theorem \ref {TeoremaPrincipale}) the existence of resonances with the associated metastable states at mean energy, in certain bands that contain the energy levels of the isolated molecule, too. \ Resonances are defined as complex eigenvalues of a distorted Hamiltonian; it is worth pointing out that our definition of resonances includes, as a special case, the notion of embedded eigenvalues where the imaginary part is exactly zero. \ In fact we prove that the imaginary part of the resonances is non positive and, in absolute value, smaller than $C h^2$, where $C$ is a given positive constant and $h$ is the semiclassical parameter defined below. \ Our result does not exclude the case of resonances with imaginary part exac
 tly zero (in fact, embedded eigenvalues). \ Finally, our result still holds true even in absence of the external field; in such a case we don't need to define the distorted Hamiltonian and we simply have discrete eigenvalues instead of resonances.

Because of the heavy mass of the nuclei (denoted by $h^{-2}$ in the text, with $0<h\ll 1$), at a first stage it is possible to consider the position of the nuclei as  fixed, in order to determine the electronic states. \ This approach is  known as  the Born-Oppenheimer's one \cite{KMSW}. \ Following Woolley, \cite{Wo}, we can say that the assumption of an almost fixed structure of the molecule is more appropriate for  metastable states, which is our case, than in the case of  definitely stationary states.

The first two levels of the electron, $E_1(R)$ of the state $1\sigma_g$ and $E_2(R)$ of the state $1\sigma_u$, as functions of the nuclear distance $R$, contribute to the effective potentials used for the determination of the nuclear dynamics. \ Such behavior of the electronic levels are well known by the explicit asymptotic expansions for large $R$ \cite{Silverstone} and their distributional Borel sums \cite{CGM,GG}. \
It is a reasonable hypothesis that each effective potential function, $W_j(R)=(1/R)+E_j(R),$ $j=1,2$,  has only one minimum point where a certain number of  nucleonic states are trapped, identified with  the first vibrational energy levels of the molecule. \ Such levels have been  experimentally observed \cite{CMM}. \ In the case of the coaxial Stark effect  we have  different behaviors of the electronic level functions $W_j(R)$,  see Fig. 2 of \cite{CMM}. \ The perturbation effect is an $R$ dependent splitting of the  asymptotically  degenerated levels, or shortly, an avoided crossing. \ The picture of the behavior of such effective potentials changes at large distances, giving a maximum of the first one $W_1(R)$. \ As a consequence, we have  metastability of the vibrational  states of the molecule.

Since we expect exponentially large mean life of the metastable states, we use a small complex distortion on the nuclear relative position variable \cite{Hu}.

We also use the theory of the twisted pseudodifferential operators introduced in \cite{MaSo2}. \ The theory of pseudodifferential operators goes back to the quantization rule of Hermann Weyl and is now  well established \cite{Ma1}. \ The recent theory of twisted pseudodifferential operators \cite{MaSo2} is a formalization and extension of the method of regularization going back to Hunziker \cite{Hu,KMSW}. \ This theory is able to regularize  the Coulomb singularity of the nuclei-electrons potentials of the interaction. \ The Grushin-Feshbach method is now a standard method for defining and computing a finite number of expected  eigenvalues \cite{KMSW}.

The Hamiltonian operator is the following:
\bee
H = - h^2 \Delta_{\RR}+ \frac {1}{R} + H_e   \label {Eq1}
\eee
where $h^2 \ll 1$ is the inverse of the nuclei mass (where we fix, for the sake of definiteness, the unit of mass such that the electron mass is equal to $1$) , and $H_e$ is the \emph {electronic Hamiltonian}, formally defined on $L^2 (\R^3_\rr )$ as
\bee
H_e := H_e (\RR ) = -  \Delta_{\rr} - \frac {1 }{|\rr - \frac 12 \RR |} - \frac {1 }{|\rr + \frac 12 \RR |} + V \, ,  \label {Eq2}
\eee
where $V$ is the external potential.

The three-body operator (\ref {Eq1}) acts on the Hilbert space  of square integrable sections in the trivial fiber bundle
\be
\K = L^2 (\R^3_\RR ; L^2( \R^3_\rr )) \, .
\ee
 In this picture the operator $H$ decomposes into two terms. \ The first one, the nuclear kinetic energy, acts on the base space. The second one operates on the fiber only,
 \be
 \tilde{H}_e=\int^\oplus H_e (\RR )d\RR\,,
 \ee
where $H_e (\RR )$ is the electronic Hamiltonian for fixed nuclei. \  The small parameter $h$ allows the use of semiclassical approximation. For our purposes, the  second order is enough.

Since $H_e$ is not simply a multiplication operator, we use the pseudodifferential calculus with operator valued symbols. \ For that, it is useful to translate the eigenvalue problem for $H$ by the Grushin method into the problem of inverting a $2\times 2$ matrix operator. \ Moreover we define the spectral projector $\Pi_e (\RR )$ of $H_e(\RR)$ up to a fixed value of the energy, so that,
\be
\Pi=\int^\oplus \Pi_e (\RR )d\RR\,,
\ee
is a projector on the molecular space $\K$. \ The lower part of the spectrum of the compressed operator $\Pi H\Pi$ is expected near of part of the spectrum of $H$. \ The eigenvalues are given by the generalized eigenvalues, $Q(E)\psi=E\psi,$ where $Q(E)$ is the  Feshbach operator,
\be
Q(E)=\Pi H\Pi-\Pi H (\Pi^\perp (H-E)^{-1})\Pi^\perp H \Pi.
\ee
Furthermore,  a smooth relationship, with respect to  $\RR$, is requested between the first eigenvectors of $H_e(\RR)$ and the corresponding final generalized eigenvectors of the Feshbach operator.

The paper is organized as follows.

In Section 2 we introduce the model and we state our main assumptions. \ In particular we discuss the $H_2^+$ molecule under the effect of a Stark effect and we show that this physical problem substantially fits with our model.

In Section 3 we consider the analytic distortion and regularization of the operator. \ Analytic distortion is a standard way to define resonances \cite {BCD}. \ Because of the singularity of the Coulomb potential we have to regularize our effective Hamiltonian. \ If we denote by $\tilde H_\mu$ ($\mu$ is the complex distortion constant) the regularized operator then we see (see Theorem 3.8) that part of its spectrum coincides with the spectrum of a reduced problem denoted by $\tilde P_\mu$. \ The reduced problem consists of two coupled Schr\"odinger operator.

In Section 4 we study the spectrum of the reduced problem denoted by $P^\sharp_\mu$, which coincides with $\tilde P_\mu$ up to a bounded operator with norm less that $C h^2$ for some $C>0$. \ We separately consider the spectrum associated to first level alone, and the part of the spectrum located in the bottom of the second level.

In Section 5 we compare the spectrum of the two operators $P^\sharp_\mu$ and $\tilde H^0_\mu$, where $\tilde H^0_\mu$ is the restriction of the regularized and distorted operator on the eigenspace of the vibrational spectrum.

In Section 6 we compare the spectrum of the two operators $\tilde H^0_\mu$ and $\tilde H^0_\mu$, where $H^0_\mu$ is the restriction of the distorted operator on the eigenspace of the vibrational spectrum.

In Section 7 we finally state our main results.

\subsection {Notations}

Here we list the main notations, meaning $j\in\{1,2\}$:

\begin {itemize}

\item [-] $H$ denotes the Hamiltonian operator (\ref {Eq1});

\item [-] $H_e$ denotes the electronic Hamiltonian operator (\ref {Eq2}) with eigenvalues ${\mathcal E}_j (R)$ depending on $R$;

\item [-] ${\mathcal H}_0 = Ker (\LL_\RR + \LL_\rr )$ where $\LL_\RR$ and $\LL_\rr$ respectively denote the angular momentum with respect to the variables $\RR$ and $\rr$;

\item [-] $W_j (R)= \frac 1R + {\mathcal E}_j (R)$ denotes the effective potential;

\item [-] $m_1$ and $m_2$ respectively are the non degenerate minima of $W_1 (R)$ and $W_2 (R)$ at $R_{1,m}$ and $R_{2,m}$, $M_1$ is the non degenerate maximum of $W_1 (R)$ at $R_{1,M}$ (see Remark \ref {Silverstone});

\item [-] $P_j$ is the operator formally defined by
\be
- h^2 \frac {d^2}{d R^2} + W_j (R)
\ee
on $L^2 (\R , d R)$ with  Dirichlet boundary conditions at $R=0$;

\item [-] ${\mathcal S}_\mu$ denotes the analytic distortion operator (\ref {eq16bis});

\item [-] $H_{\mu}$ and $H_{\mu ,e}$ denote the distorted operators
\be
H_{\mu} = {\mathcal S}_\mu H {\mathcal S}_\mu^{-1} \ \mbox { and } \ H_{\mu ,e} = {\mathcal S}_\mu H_e {\mathcal S}_\mu^{-1} \, ;
\ee

\item [-] $\widetilde H_{\mu , e}$ is the regularization of $H_{\mu ,e}$ as defined in Proposition \ref {qtilde};

\item [-] $\widetilde H_\mu$ is the regularization of $H_{\mu}$ as defined in Definition \ref {Hmumod};

\item [-] $H_\mu^0$ and $\widetilde H_\mu^0$ respectively are the restriction of $H_\mu$ and $\widetilde H_\mu$ to the invariant subspace $Ker (\LL_\RR + \LL_\rr )$;

\item [-] $P_\mu^\sharp$ is the reduced problem defined by equation (\ref {PmuBis}) on the Hilbert space
\be
\Hh^\sharp = L^2 \left ( [0 , + \infty ) , dR \right )\oplus L^2 \left ( [0 , + \infty ) , dR \right )
\ee
with  Dirichlet boundary conditions at $R=0$;

\item [-] $P_{j,\mu}$ is the operator formally defined by
\be
 h^2 {\mathcal S}_\mu \D_R^2 {\mathcal S}_\mu^{-1} + W_{j,\mu } (R)
\ee
on $L^2 (\R , d R)$ with  Dirichlet boundary conditions at $R=0$, where $\D_R$ and $W_{j,\mu }$ are defined at the beginning of \S 4;

\item [-] $P_D^\sharp$ is the Dirichlet realization of $P_0^\sharp$ on the interval $[0,R_{1,M} ]$;

\item [-] $\widetilde P_\mu^\sharp$ and $\widetilde P_j$ are respectively obtained by $P_\mu^\sharp$ and $P_j$ by substituting $\widetilde W_j$ to $W_j$, that is we "fill the well";

\item [-] $\widetilde P_\mu^0 (z)$ is the restriction of $\widetilde P_\mu (z)$, defined by equation (\ref {Pmu}), to $Ker (\LL_\RR )$;

\item [-] when this fact does not cause misunderstanding $\| \cdot \|$ denotes the usual norm on the Hilbert space $L^2$ or the norm of linear operators defined on the Hilbert space $L^2$.

\end {itemize}

\section {The model}

\subsection {The three-body problem}

The analysis of the three-body problem (\ref {Eq1}) is a very difficult task and we have to introduce here some suitable assumptions.

{\bf Hypothesis 1.} {\it We assume that the potential $V$ only depends on the component of the vector $\rr$ along the direction $\RR$; that is
\bee
V (\RR , \rr ) = \chi \left ( \left \langle \frac{\RR}{R},\rr \right \rangle \right )\, , \ R = |\RR |\, , \label {Eq2Bis}
\eee
where $\chi$ is a real-valued function bounded from below. \ The function $\chi$ admits an analytic extension in a complex strip containing the real axis.}

\begin {remark}\label{rem:O}\sl
Given a rotation $O$ in ${\R^3}$, let us consider the unitary operators $S_O$ and $ T_O$ on $L^2({\R}_{\RR}^{3})$ and $L^2({\R}_{\RR}^{3})\otimes L^2({\R_{\rr}}^{3}) $ respectively, given by,
\be
S_O\phi(\RR )=\phi(O \RR ), \quad \forall \phi\in L^2({\R}_{\RR}^{3})
\ee
\be
T_O:=S_O\otimes S_O,  \quad T_O \psi(\RR, \rr)=\psi(O\RR, O\rr), \quad \forall \psi\in L^2({\R}_{\RR}^{3})\otimes L^2({\R_{\rr}}^{3})
\ee
Then $H$ commutes with $ T_O$, i.e. $T_OH=HT_O$ and therefore, the spectrum of the electronic Hamiltonian operator $H_e (\RR) $ depends only on $R:=\vert \RR\vert$.
\end {remark}

\begin {remark} \label {INVH}\sl Now, let us denote by $\LL_\RR$ and $\LL_\rr$ the angular momentum with respect to  the variables $\RR$ and $\rr$ respectively. \ By the previous remark, we see that we have,
\bee
\label{invH}
[H, \LL_\RR +\LL_\rr ] =0.
\eee
In the sequel, we will be particularly interested on the eigenvalues and resonances of the restriction of $H$ to the invariant subspace
\be
{\mathcal H}_0 := Ker (\LL_\RR + \LL_\rr).
\ee%
This somehow corresponds to  fix to 0 the rotational energy of the molecule. As we will see, after the Born-Oppenheimer reduction to an effective Hamiltonian $P=P(\RR, hD_\RR)$, this is equivalent to study the restriction of $P$ to $Ker(\LL_\RR)$. Therefore, this will also permit us to reduce the study to a one-dimensional operator.
\end {remark}

\subsection {Effective Potential}

For any fixed $\RR\in{\R}^3$, we denote by ${\mbox {Sp}} \left ( H_e  (\RR) \right )$ the spectrum of the electronic Hamiltonian operator $H_e (\RR) $ defined on the Hilbert space $L^2 (\R^3_{\rr} )$. \ This spectrum actually depends on $R$ (see Remark \ref {rem:O}) and we assume that

{\bf Hypothesis 2.} {\it The discrete spectrum of the electronic Hamiltonian operator $H_e (\RR )$ contains at least two eigenvalues, and the first two eigenvalues ${\mathcal E}_1 (R )$ and ${\mathcal E}_2 (R )$ are non degenerate, extend holomorphically to complex values of $R$ in a domain of the form  $\Gamma_\delta:= \{ R\in\C\,;\, \re R \geq \delta^{-1},\, |\im R| < \delta \re R\}$ with $\delta >0$ constant, and are such that,
\bee
\lim_{{|R|\to + \infty},\, {R\in\Gamma_\delta} } {\mathcal E}_j (R) = {\mathcal E}_j^\infty ,
\eee
where,
\bee
{\mathcal E}_1^\infty < {\mathcal E}_2^\infty \, . \label {pippo}
\eee
Furthermore, there is a gap between ${\mathcal E}_j (R )$, $j=1,2$, and the remainder of the spectrum:
\be
\min_{R>0} \mbox {\rm dist} \left [ \left \{ {\mathcal E}_{1} (R), {\mathcal E}_{2} (R) \right \} , {\mathcal E}_3 (R)\right ] \ge C
\ee
for some positive constant $C>0$, where
\be
{\mathcal E}_3 (R)=  \left \{ \mbox {\rm Sp} (H_e (\RR)) - \left \{ {\mathcal E}_1 (R),\, \E_2(R) \right \} \right ] \} , .
\ee
}
We observe that, for any rotation $O$ in ${\R^3}$, one has (with obvious notations),
\be
H_e (O\RR , O\rr , O^{-1} D_\rr) =H_e (\RR, \rr, D_\rr)\, , \ \ D_\rr = -i \nabla_\rr .
\ee
As a consequence, the first two normalized eigenfunctions
\bee
H_e(\RR)\, \psi_{j} (\rr , \RR) = \E_j(R)\, \psi_j(\rr, \RR)\, , \ j=1,2\, . \label {Eq6}
\eee
can be taken real-valued and verify $\psi_j (O \RR,O \rr)= \psi_j (\RR ,\rr )$, and thus
\bee
(\LL_\RR + \LL_\rr)\psi_j =0.
\eee
We also denote by,
\be
W_j (R) =  \frac {1}{R} + {\mathcal E}_j (R ) \, ,\ j=1,2\, ,
\ee
the \emph {effective potential} associated with the $j$-th eigenvalue.

By Hypothesis 2 we observe that the effective potential satisfies to the following properties

\begin {enumerate}

\item The effective potentials $W_j (R)$, $j=1,2$, are analytic functions;

\item There exists a positive constant $C>0$ such that
\be
\min_{R>0} [ W_3 (R) - W_2 (R) ] \ge C
\ee
where $W_3 = \frac 1R + \mbox {\rm inf} \left [ {\mathcal E}_3 (R ) \right ]$.

\item The following limits hold true
\be
\lim_{R\to 0^+} W_j (R) = + \infty \, ,\ j=1,2 \, .
\ee

\end {enumerate}

Here, we introduce the following assumptions on the effective potentials $W_1 (R)$ and $W_2 (R)$.

{\bf Hypothesis 3.} {\it The effective potential $W_1$ has a single well shape, with local nondegenerate minimum value $m_1$ at some point $R_{1,m}$, with a barrier with local nondegenerate maximum value $M_1$ at some point $R_{1,M}$; beside, $W_1$ does not admit other critical points in the domain $W^{-1}_1 \left ( [ m_1 , M_1 ] \right )$. \ The effective potential $W_2$ has a single well shape, with local minimum value $m_2$ at some point $R_{2,m}>R(1,m)$.}

\begin {remark} \label {Silverstone}\sl
In absence of the external field  the local maximum value $M_1$ disappears and we only have two local minimum values \cite {Silverstone}, in such a case ${\mathcal E}_1^\infty = {\mathcal E}_2^\infty$ and we could treat the spectral problem for eigenvalues belonging to the interval $[m_1 , \widetilde M_1 ]$, for any $\widetilde M_1 < {\mathcal E}_1^\infty$. \ If the external field is small enough, but not zero, then we expect to observe a local maximum value such that $m_1 < m_2 < M_1$ and $R_{1,m} < R_{2,m} < R_{1,M}$ as in Fig. \ref {Fig1}. \ For increasing external field, as considered by \cite {MS}, can happen  to have  $m_1 <M_1 < m_2$.
\end {remark}

\begin{figure}
\begin{center}
\includegraphics[height=8cm,width=8cm]{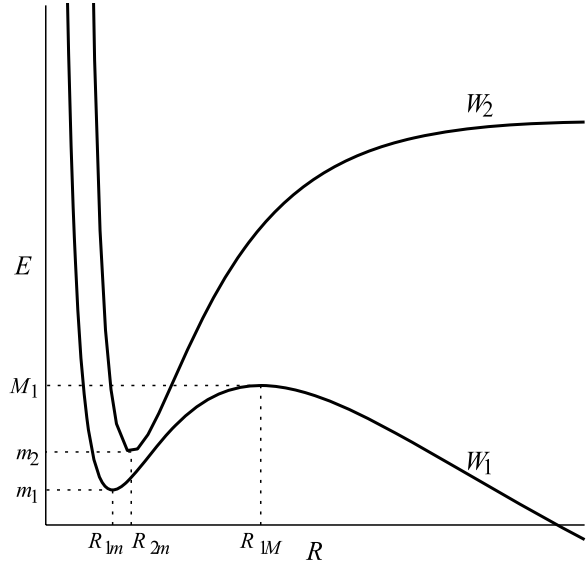}
\caption{Graph of the effective potentials $W_1 (R)$ and $W_2 (R)$ with single well shapes. \ The effective potential $W_1 (R)$ has a barrier and it does not admit other critical points in the domain $W^{-1}_1 \left ( [ m_1 , M_1 ] \right )$; where $m_1$ and $M_1$ are the values of the local maximum and minimum point of $W_1$.}
\label {Fig1}
\end{center}
\end{figure}

\subsection {Spectrum of the reduced operator}

In polar coordinates, Hamiltonian (\ref {Eq1}) takes the form,
\bee
H = - h^2 \left [ \frac {\partial^2}{\partial R^2} + \frac {2}{R} \frac {\partial}{\partial R} \right ] - h^2 \frac {1}{R^2} \Lambda^2  + \frac 1R + H_e (\RR ) \label {Equa8Bis}
\eee
where $\Lambda^2$ is the Legendrian operator,
\be
\Lambda^2 = \frac {1}{\sin \theta } \frac {\partial}{\partial \theta} {\sin \theta } \frac {\partial}{\partial \theta} + \frac {1}{\sin^2 \theta} \frac {\partial^2}{\partial \varphi^2}.
\ee
The operator $- h^2 \frac {1}{R^2} \Lambda^2$ has eigenvalues $h^2 \frac {1}{R^2} \ell (\ell +1)$, $\ell \in \{0,1,2,\ldots\}$. \ As a consequence, using Remark \ref {INVH}, a suitable choice of the rotation $O$ makes the operator $H$ take the form,
\be
H = - h^2 \left [ \frac {\partial^2}{\partial R^2} + \frac {2}{R} \frac {\partial}{\partial R} \right ] + h^2 \frac {\ell (\ell+1)}{R^2}  + \frac 1R + H_e (R )
\ee
on $L^2( \R_+, R^2 dR; L^2 (\R^3_{\rr} ) )$. \ Finally, by taking $\ell =0$, that is, by considering the restriction of $H$ on $Ker (\LL_\RR )$ (still denoted by $H$), and by performing the change $\psi (R , \rr ) \to R \psi (R , \rr)$, the Hamiltonian $H$ takes the form,
\be
H_0 = - h^2 \frac {\partial^2}{\partial R^2} + \frac 1R + H_e (R )
\ee
on $L^2 (\R_+ ,dR; L^2( \R^3_\rr ))$ with Dirichlet boundary condition at $R=0$.

Let $P_j$, $j=1,2$, be the \emph {reduced operator} formally defined by
\bee
\label{defPj0}
P_j = -h^2 \frac {d^2}{d R^2}+ W_j (R), \ \ W_j (R) = \frac {1}{R} + {\mathcal E}_j (R) ,
\eee
on the Hilbert space $L^2 (\R_{+},dR ) $ with Dirichlet boundary condition at $R=0$.

Then, it follows that for $h $ small enough and for small external field, the discrete spectra of $P_j$ in the interval $[m_j , {\mathcal E}_j^\infty )$, $j=1,2$, is not empty (see, e.g., \cite {Landau} in the case without the external field), and we denote it by
\be
{\mbox {Sp}}_{\text {d}} \left ( P_j \right ) = \left \{ e_k^j ,\ k\geq 1 \right \} \, , \ j=1,2\, .
\ee
In particular, in the case of non degenerate minima points $m_1$ and $m_2$, combining results from \cite{HeRo} and \cite{HeSj1}, we know that the gap $e_{k+1}^j-e_k^j$ between two consecutive eigenvalues of $P_j$ ($j=1,2$) is of order $h$ as $h\rightarrow 0_+$, in the sense that $c_{j}h\leq e_{k+1}^j-e_k^j\leq C_{j}h$ with $c_{j}, C_{j} >0$ independent of $h$ and $k =\Oo (h^{-1})$.

\subsection {Physical motivation}
It is well known \cite {CMM} that the dynamics of the three particle system called molecule-ion $H_2^+$, referred to its center of mass, and under the effect of an external homogeneous field, is described by  a Hamiltonian operator of the form,
\bee
H = - h^2 \Delta_{\RR}+ \frac {Z^2}{R} + H_e   \label {Eq1BIS}
\eee
where $h \ll 1$ is the effective semiclassical parameter given by the square root of the ratio between the light mass of the electron $e$ and the heavy mass (when compared with the electron mass) of the hydrogen nuclei. \ Moreover,
\be
\RR = x \x + y \y + z \z
\ee
is the relative position  of  the two hydrogen nuclei of $H_2^+$, and $Z$ is the electron charge (hereafter, for the sake of definiteness we assume that the units are such that, $Z =1$). \ The electronic Hamiltonian (\ref {Eq2}) describes the relative motion of the electron $e$ referred to the \emph {fixed} nuclei, and it actually depends on the nuclear distance $R$, where $V$ is the  potential of the external force.

Now we consider the isolated molecule. The asymptotic behavior for large $R$ of the functions $W_j(R)$ is dominated by the Van der Waals force  given by,
$$
W_j(R)=-\frac{c_4}{R^4}+O(R^{-6}),\,\, W_j'(R)=4\frac{c_4}{R^5}+O(R^{-7}),
$$ 
for a constant $c_4>0$ (see the constant $E^{(4)}$ of \cite {Silverstone}). \ The energy binding  of the molecule, $\mathcal{E}_1^\infty-m_1>0$ is much smaller than the separation distance  of the fundamental level of the atom $\mathcal{E}_3^\infty-\mathcal{E}_1^\infty>0$.

Following \cite {Hiskes, MS}, we consider the case where the external field is a Stark-like field directed along the axes of the two nuclei, in agreement with Hyp. 1, with potential
\bee
V (\RR , \rr ) = - \chi \left ( \left \langle \frac{\RR}{R},\rr \right \rangle \right )\, , \ R = |\RR |\, ,
\eee
where $\chi$ is a real-valued function bounded from below.  \ In such a case, the Hamiltonian $H$ commutes with the angular momentum $\LL_\RR + \LL_\rr$. \ In fact, the effective potentials $W_1 (R)$ and $W_2 (R)$ show a shape as in Fig. \ref {Fig1} (see, e.g., \cite {MS}, and   Fig. 2 of \cite{CMM}).

The function $\chi$ admits an analytic extension to a complex strip containing the real axis. \ More precisely, we  assume, $$\chi(x)=\chi_d(x)= \nu\frac{x}{\sqrt{1+(x/d)^2}}=\frac{d\nu }{\sqrt {1+(d/x)^2}}=d\nu(1-\frac{d^2}{2x^2}+\frac{3d^4}{8x^4}+O(\frac{d^6}{x^6})),$$  for $x^2>d^2,$ where $d>0$ is a parameter much larger than the molecular size.

For 
$$\Delta \mathcal{E}(R)=\mathcal{E}_2(R)-\mathcal{E}_1(R),$$ 
small enough for $R^2>>d^2$, we can approximate with a degenerate perturbation  with effective Stark potential for the nucleons given by the matrix element of the potential on the two electronic states $\psi_{j,R}=\psi_{j,R}(r)$,
\bee
V_S (R)=|<\psi_{1,R},V (\RR , . )\psi_{2,R}>|=C\chi(R) =\delta(1-\frac{d^2}{2R^2}+\frac{3d^4}{8R^4}+O(\frac{d^6}{R^6})) ,
\eee
where $\delta =C d\nu>0$.
 Thus, for large $R$, the two nuclear potentials are bounded and behaves as in Fig. 1,
$$
W_1(R)= \mathcal{E}_1^\infty-\delta(1-\frac{d^2}{2R^2})-
(c_4+\delta\frac{3d^4}{8})\frac{1}{R^4}+O(1/R^6),$$$$ W_2(R)= \mathcal{E}_1^\infty+\delta(1-\frac{d^2}{2R^2})-
(c_4-\delta\frac{3d^4}{8})\frac{1}{R^4}+O(1/R^6).
$$
If we admit derivation of the asymptotic series, we have a maximum $R_{1,M}$ of  $
W_1(R)$ diverging as $\delta\rightarrow 0$, $$R_{1,M}^2\sim \frac{4c_4}{\delta d^2}+\frac{3d^2}{2}.$$
\section {Analytic distortion and regularization of the operator}

\subsection {Analytic distortion}

Let  $s \in C^{\infty}({\R})$,  $0\leq s \leq 1$ with $ s (x)=0$ in  an arbitrarily large compact set containing 0, and $ s (x)=1$ if $|x|$ is large enough. \ For $\mu$  real small enough, we set,
\begin{eqnarray}
\label{Imu}
&& I_\mu:{\R}^3\rightarrow {\R}^3, \quad I_\mu(\RR)=\RR(1+\mu s(R)) \\
\label{Jmu}
&& J_\mu:{\R}^6\rightarrow {\R}^3, \quad J_\mu(\RR,\rr)=\rr\left [ 1+\mu s\left ( \left \langle \frac{\RR}{R},\rr \right \rangle \right ) \right ],
\end{eqnarray}
and we define the analytic distortion on the test function $\varphi$, by the formula,
\bee
\left ( \Ss_\mu \varphi \right ) (\RR, \rr)= \vert J(\RR, \rr)\vert^{1/2} \varphi (I_\mu(\RR), J_\mu(\RR,\rr)), \label {eq16bis}
\eee
where we have set $R=|\RR|$, and $J(\RR, \rr)$ is the Jacobian of the transformation $F_\mu$ given by,
\begin{eqnarray}
&& F_\mu:{\R}^6\rightarrow {\R}^6, F_\mu(\RR,\rr)=(I_\mu(\RR), J_\mu(\RR,\rr)).
\end{eqnarray}
We also set
\be
\phi_\mu:{\R}_+\rightarrow {\R}_+, \phi_\mu(R)=R(1+\mu s(R)).
\ee
Then, the analytic distortion applied to the operator (\ref {Eq1}) defined on the Hilbert space $\K$ takes the form,
\bee
H_{\mu}=\Ss_\mu H \Ss_\mu ^{-1}=-h^2 \Ss_\mu \Delta_{\RR} \Ss_\mu^{-1}+\frac{1} {\phi_\mu(R)}+H_{\mu , e} (\RR)\, , \label {Eq14}
\eee
with $H_{\mu , e} (\RR)$ given by,
\be
H_{\mu , e} (\RR) = -  \Ss_\mu \Delta_{\rr} \Ss_\mu^{-1} - \frac {1 }{| J_\mu(\RR,\rr) - \frac 12 I_\mu(\RR)|} -
\frac {1 }{| J_\mu(\RR, \rr) + \frac 12  I_\mu(\RR) |} + V^{\mu},
\ee
where the distorted external potential is given by,
\be
V_\mu (\RR,\rr) = V \left [ \left \langle \frac{\RR}{R},\rr \right \rangle \left (1+\mu s\left ( \left \langle \frac{\RR}{R},\rr \right \rangle \right ) \right ) \right ].
\ee
Thus, $H_{\mu , e} (\RR)$ can be extended to small enough complex values of $\mu$ as an analytic family of type A.

\begin {remark} \sl
We also observe that, if $O$ is a rotation in $\R^3$, then,
\be
I_\mu (O\RR) = O I_\mu (\RR)\quad ;\quad J_\mu (O\RR, O\rr) = OJ_\mu (\RR,\rr)\, .
\ee
As a consequence,
\bee
\label{invrotS}
[ \Ss_\mu, \LL_\RR +\LL_\rr] = 0,
\eee
and,
\be
H_{\mu ,e} (O\RR , O\rr , O^{-1} D_\rr) =H_{\mu ,e} (\RR , \rr, D_\rr ).
\ee
We denote by $H_{\mu ,0}$ the restriction of $H_{\mu ,e} (\RR , \rr, D_\rr )$ to the invariant subspace $Ker (\LL_\RR + \LL_\rr)$.
\end {remark}

\subsection {Regularization of $H_{\mu}$}

In this section, we want to regularize the operator $H_{\mu}$ with respect to the $\RR$-variable. \ Having in mind the representation (\ref {Equa8Bis}) of the Laplacian in polar coordinates, we denote
\be
\Omega(1/M):=\left\{\RR\in{\R}^3\; : \;R>\frac{1}{M}\right\}, \ \ \Omega_0 (1/M):=\left\{\RR\in{\R}^3\; : \;R<\frac{1}{M}\right\} \, ,
\ee
and $S^2$ is the unit sphere in $\R^3_\RR$, and
\be
L_0:=-\Delta_\rr+C_0
\ee
with $C_0, M>0$ large enough. \ We have the following preliminary technical lemma:

\begin{lemma}\sl
\label{Lem2}
Under the previous assumptions, there exists a finite family of conical open sets $(\Omega_\ell)_{\ell=1}^m$ in ${\R}^3$, of the form $\Omega_\ell=]\frac{1}{M}, +\infty[ \times \omega_\ell$ with $\omega_\ell$ bounded open set of $S^{2}$, and a corresponding family of unitary operators $\U_\ell(\RR)$ ($\ell=1,\cdots ,m$, $\RR\in \Omega_\ell$) on $L^2 (\R^3_\rr )$, such that (denoting by $U_\ell$ the unitary operator on $L^2(\Omega_\ell;L^2({\R}^{3}_\rr) ) \simeq L^2(\Omega_\ell)\otimes{L^2({\R}^{3}_\rr})$ induced by the action of $\U_\ell(\RR)$ on $L^2({\R}^{3}_\rr)$), one has,

\begin{enumerate}

\item $\Omega(1/M)=\mathop{\displaystyle\cup_{\ell=1}^m\Omega_\ell}$;

\item For all $\ell=1,\cdots, m$ and $\RR\in  \Omega_\ell$, $\U_\ell(\RR)$ leaves $H^2({\R}^{3}_\rr)$ invariant;

\item For all $\ell$, the operator $\U_\ell(-h^2\Ss_\mu\Delta_\RR\Ss_\mu^{-1})\U_\ell^{-1}$  is a semiclassical differential operator with operator-valued symbols, of the form,
\begin{equation}
\label{conjomega}
-h^2\Ss_\mu\Delta_\RR\Ss_\mu^{-1}+h\sum_{|\beta|= 1}\omega_{\beta ,\ell}(\RR)(hD_{\RR})^\beta+
h^2\omega_{0 ,\ell}(\RR) \, , \ D_\RR = -i \nabla_\RR
\end{equation}
where $\omega_{\beta ,\ell}L_0^{\frac{|\beta|}{2} -1}\in C^\infty (\Omega_\ell; {\mathcal  L}(L^2({\R}^{3}_\rr))$, and,
for any $\gamma\in\N^3$, the quantity  $\Vert\partial_x^\gamma \omega_{\beta ,\ell}(x) L_0^{\frac{|\beta|}{2} -1}\Vert_{{\mathcal  L}(L^2({\R}^{3}_\rr))} $ is bounded uniformly with respect to $h$ small enough and locally uniformly with respect to $x\in\Omega_\ell$;

\item For all $\ell$, the operators $\U_\ell H_{\mu , e}\U_\ell^{-1}$  are in $C^\infty (\Omega_\ell; {\mathcal  L}
(H^2({\R}^{3}_\rr), L^2({\R}^{3}_\rr))$.

\end{enumerate}
\end{lemma}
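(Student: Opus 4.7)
The plan is to implement a Hunziker-type regularization: on each sector of a conical cover of $\Omega(1/M)$, I introduce an $\RR$-dependent unitary transformation in the electronic variable $\rr$ that rotates the nuclear axis onto a fixed direction $\e_\ell$, so that after conjugation the Coulomb singularities of $H_{\mu,e}$, as well as the $\rr$-dependence through $\langle\RR/R,\rr\rangle$ in $V^\mu$ and in $J_\mu$, become independent of the angular variable $\hat\RR:=\RR/R$, leaving only a smooth dependence on $R$. The topological nonexistence of a global smooth section $S^2\to SO(3)$ is precisely what forces the use of a finite cover.

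Concretely, I would cover $S^2$ by open sets $\omega_1,\ldots,\omega_m$ small enough that on each of them there exists a smooth map $\omega_\ell\ni\hat\RR\mapsto O_\ell(\hat\RR)\in SO(3)$ satisfying $O_\ell(\hat\RR)\hat\RR=\e_\ell$ for some fixed unit vector $\e_\ell$. Setting $\Omega_\ell:=\{R\hat\RR\,;\,R>1/M,\,\hat\RR\in\omega_\ell\}$ yields item (1), and $\U_\ell(\RR)\phi(\rr):=\phi(O_\ell(\hat\RR)\rr)$ is manifestly unitary on $L^2(\R^3_\rr)$ and preserves $H^2(\R^3_\rr)$, so item (2) holds. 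Since $I_\mu(\RR)=R(1+\mu s(R))\hat\RR$ and $O_\ell(\hat\RR)\hat\RR=\e_\ell$, conjugation by $\U_\ell(\RR)$ replaces $\langle\hat\RR,\rr\rangle$ by $\langle\e_\ell,\rr\rangle$ everywhere and freezes the two Coulomb singularities at positions $\pm\tfrac12 R(1+\mu s(R))\e_\ell$, which depend smoothly on $R$ only. Combined with the fact that $|\rr-\x|^{-1}$, viewed as an element of $\mathcal L(H^2(\R^3_\rr),L^2(\R^3_\rr))$, is smooth in $\x$, this gives item (4).

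For item (3), the chain rule gives $\U_\ell\partial_{R_i}\U_\ell^{-1}=\partial_{R_i}+B_{i,\ell}(\RR,D_\rr)$, where $B_{i,\ell}$ is a first-order differential operator in $\rr$ whose coefficients are smooth in $\RR\in\Omega_\ell$; they involve $\partial_{R_i}O_\ell(\hat\RR)$, which is bounded on $\Omega_\ell$ thanks to the constraint $R>1/M$. Expanding the square $(\partial_{R_i}+B_{i,\ell})^2$, and inserting also the analogous expansion arising from the presence of $\Ss_\mu$ (for which the same rotation trick produces only coefficients smooth in $\RR$), yields the representation (\ref{conjomega}), with $\omega_{\beta,\ell}$ of $D_\rr$-order $2-|\beta|$. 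The main obstacle is then to verify the operator-norm bounds $\omega_{\beta,\ell}L_0^{|\beta|/2-1}\in\mathcal L(L^2(\R^3_\rr))$ locally uniformly in $\RR\in\Omega_\ell$ and uniformly in $h$. I would handle this by noting that any smooth first- (resp.\ second-) order differential operator in $\rr$ with bounded coefficients is bounded from $H^1$ (resp.\ $H^2$) into $L^2$, and that $L_0^{1/2}$ (resp.\ $L_0$) are isomorphisms from $H^1$ (resp.\ $H^2$) onto $L^2$; the $h$-uniformity is automatic since $h$ enters only as the explicit semiclassical prefactor in (\ref{conjomega}), while the $\RR$-derivative bounds follow by differentiating the explicit formulas for $O_\ell$ and $s$.
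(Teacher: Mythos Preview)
Your approach has two genuine gaps, and both stem from the fact that a bare rotation $\U_\ell(\RR)\phi(\rr)=\phi(O_\ell(\hat\RR)\rr)$ is too crude a change of variables.

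First, your claim that $\x\mapsto |\rr-\x|^{-1}$ is smooth as a map into $\mathcal L(H^2(\R^3_\rr),L^2(\R^3_\rr))$ is false. The formal $\x$-derivative produces a singularity of type $|\rr-\x|^{-2}$, and in dimension three the multiplication operator $|\rr|^{-2}$ is \emph{not} bounded from $H^2$ to $L^2$ (the Rellich inequality $\int|\rr|^{-4}|u|^2\leq C\int|\Delta u|^2$ fails for $n<5$; test with any $u\in H^2$ equal to $1$ near the origin). Hence after your rotation the Coulomb centres still sit at $\pm\tfrac12 R(1+\mu s(R))\e_\ell$, which move with $R$, and item (4) does not follow. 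The paper avoids this by composing two changes of variables: a radial rescaling $U(\RR)$ (from \cite{MaMe}) that brings the singularities onto the unit sphere, followed by a compactly supported diffeomorphism $\widetilde U_\ell(\RR)$ (from \cite{KMSW}) that moves them to \emph{fixed} points $\pm z_\ell$. After this, near each singularity one has $|\gamma_\ell(\RR,\rr)-\tfrac12 I_\mu(\RR)|^{-1}=g(\RR,\rr)\,|\rr-z_\ell|^{-1}$ with $g$ smooth and bounded away from zero, so $\RR$-derivatives hit only the smooth prefactor and never worsen the Coulomb singularity.

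Second, your generators $B_{i,\ell}=\partial_{R_i}O_\ell(\hat\RR)\,O_\ell(\hat\RR)^{-1}\rr\cdot\nabla_\rr$ are infinitesimal rotations, i.e.\ angular-momentum-type vector fields with coefficients \emph{linear} in $\rr$. They are not ``first-order differential operators with bounded coefficients'' as you assert, and $\rr\cdot\nabla_\rr\,L_0^{-1/2}$ is unbounded on $L^2(\R^3_\rr)$. Thus the required estimate $\omega_{\beta,\ell}L_0^{|\beta|/2-1}\in\mathcal L(L^2)$ in item (3) fails. The paper's diffeomorphism $\gamma_\ell(\RR,\rr)$ is deliberately constructed so that $\gamma_\ell(\RR,\rr)=\rr/(2M)$ for $|\rr|$ large, independently of $\RR$; consequently $\partial_\RR\gamma_\ell$ is compactly supported in $\rr$, the commutator terms $\omega_{\beta,\ell}$ have bounded coefficients, and the $L_0$-boundedness goes through. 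In short, you need a change of variables that both freezes the singular points and is $\RR$-independent outside a compact set in $\rr$; a rotation does neither.
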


\begin{proof}
At first, let us make  a  change of variables  as in \cite{MaMe}, that localizes into a compact set  the  $\RR$ - dependent singularities appearing into the interaction potential. \ Let $\chi\in C^{\infty}({\R}^+)$ satisfying $0\leq \chi\leq 1, \chi'\leq 0$, such that,
$$
\chi(s)=1, \;\;\;\;\mbox{if}\;\;  0\leq s\leq 1, \quad\quad \chi(s)=0, \;\;\mbox{if}\;\; s\geq 2
$$
For $\tau>1/2M$ and  $\tau>0$, we consider the function,
$$
\rho(\tau,t)=\frac{t}{\tau}\chi \left ( \frac{t}{\tau} \right )+2Mt \left ( 1-\chi \left ( \frac{t}{\tau} \right ) \right ) .
$$
Then, it is easy to check that
\begin{eqnarray*}
&&\frac{\partial \rho}{\partial t}>0 \;\;\mbox{on } \left ] \frac{1}{2M}, +\infty \right [ \times {\R}_+,\\
&&\rho \;\;\mbox{is surjective onto }{\R}_+, \\
&&\frac{\partial^k \rho}{\partial \tau^k} \;\;\mbox{is uniformly bounded on } ]\frac{1}{2M}, +\infty[\times {\R}_+, \forall k\geq 1.
\end{eqnarray*}
Therefore we can define $\alpha_\tau$ as the inverse diffeomorphism on ${\R}_+$ of the function $t\rightarrow \rho (\tau,t)$. \ in particular, by construction we have,
$$
\alpha_\tau(t)=\frac{t}{2M}\;\;\mbox{if}\;\; t\geq 4M\tau,\quad \alpha_\tau(t)=\tau t \;\;\mbox{if} \;\;t\leq 1.
$$
Now, for $\RR\in \Omega (1/M)$, we define
$$
\theta(\RR,.):{\R}^3\rightarrow {\R}^3, \quad \theta(\RR, \rr)=\alpha_{R/2}(\vert \rr\vert)\frac{\rr}{\vert \rr\vert}.
$$
Then, for any $\RR\in \Omega(1/M)$, the function $ \theta(\RR,.)$ is a diffeomorphism of ${\R}^3$, it depends smoothly on $\RR$, and is such that $\partial_\RR^{\alpha}\theta(\RR,\rr)$ is uniformly bounded on $ \Omega(1/2M)\times {\R}^3$, for any $\alpha\in{\N}^3\setminus\{0\}$ (see \cite{MaMe}, Lemma 3.1). \ Moreover
\begin{eqnarray*}
&& \theta \left ( \RR,\frac{\RR}{R} \right ) =\frac{\RR}{2},\\
&& \theta(\RR,\rr)=\frac{\rr}{2M}\; \;\;\mbox {for } \vert \rr\vert\geq 2MR,\\
&& \theta(\RR,\rr)=\frac{R }{2}\rr \;\;\;\mbox {for } \vert \rr\vert\leq 1.
\end{eqnarray*}
For $R >\frac{1}{M}$,  we consider the unitary transformation on $L^2({\R}^3_\rr)$, given by,
$$
( U(\RR)\phi)(\rr)=\phi(\theta(\RR,\rr)) \vert\partial_{\rr}\theta(\RR,\rr)\vert^{1/2}\, .
$$
The advantage of performing this change of variables is that the $\RR$-depending singularities of the potential are now localized in some compact subset of ${\R}^3_\rr$. \ Now, following the arguments of \cite{KMSW}  and with the help of the previous change of variables, let us show that, by a patch and cut procedure, one can localize the singularities of the potential at some fixed ($\RR$-independent) points.

For any fixed $ z_0\in S^{2}$ (the unit sphere in ${\R}^3$), we choose a functions $f_{z_0} \in C_0^\infty (\R^3; \R)$, such that,
$$
f_{z_0}(z_0) = 1,\quad\quad f_{z_0}(-z_0)=0
$$
and, for $z$ close enough to $z_0$ and $s\in\R^3$, we define
$$
F_{z_0}(z,s) := s + (z-z_0)(f_{z_0}(s)-f_{z_0}(-s))\in \R^3.\\
$$
For $z$ in a smooth neighborhood $\omega_{z_0}$ of $z_0$, the application $s\mapsto F_{z_0}(z,s)$ is a diffeomorphism of $\R^{3}$, and we have,
$$
F_{z_0}(z, z_0)=z, \quad F_{z_0}(z, -z_0)=-z.
$$
Moreover, for any $\alpha\in {\R}^3$, there exists $C_{\alpha}>0$ such that, for any $z\in\omega_{z_0}$, for any $s,s'\in {\R}^3$
\begin{eqnarray*}
&& \frac{1}{C_0}\vert s -s'\vert\leq \vert F_{z_0}(z,s) -F_{z_0}(z,s') \vert\leq  C_0\vert s -s'\vert\\
&&\vert \partial_x^{\alpha}F_{z_0}(z,s) -\partial_x^{\alpha}F_{z_0}(z,s')\vert\leq C_{\alpha}\vert s -s'\vert\\
&&\vert \partial_x^{\alpha}F_{z_0}(z,s)\vert\leq C_0,\quad \vert \alpha\vert\geq 1
\end{eqnarray*}
If $(\omega_\ell)_{\ell=1}^m:= (\omega_{z_\ell})_{\ell=1}^m$ is a family of  such open sets that covers  $S^2$, we set $F_\ell(z, .):=F_{z_\ell}(z,.)$, and we define,
$$\Omega_{\ell}:=\left ]\frac{1}{M}, +\infty \right [ \times \omega_{\ell}.
$$
For $\RR\in\Omega_{\ell}$, we also set,
$$(\widetilde U_\ell(\RR)\phi)(\rr)=\left \vert {\rm det} (\partial_\rr F_\ell ) \left ( \frac{\RR}{R}, \rr \right ) \right \vert^{1/2}\phi \left (F_\ell \left (\frac{\RR}{R},\rr \right )\right ),
$$
and,
$$\U_\ell(\RR):=\widetilde U_\ell(\RR)U(\RR);$$
$$(\U_\ell(\RR)\phi)(\rr))=\phi( \gamma_\ell(\RR,\rr))\vert {\rm det}(\partial_\rr \gamma_\ell)(\RR,\rr)\vert,$$
where
$$\gamma_\ell(\RR,\rr)=\theta \left ( \RR, F_\ell \left ( \frac{\RR}{R},\rr \right ) \right ).
$$
Then, it  is easy to check (see \cite{MaMe}) that  $ \U_\ell$ satisfy (1), (2), (3), and (4). This completes the proof of the lemma.
\end{proof}

Now, let us consider the spectral projection $\Pi_0(\RR)$ associated to $\{{\mathcal E}_1(R), {\mathcal E}_2(R)\}$ of $H_e(\RR)$, where $\E_1 (R)$ and $\E_2 (R)$ are the first two (simple) eigenvalues of $H_e(\RR)$. \ If one denote by $\gamma(R)$ a continuous simple loop in $\C$ enclosing $\{{\mathcal E}_1(R), {\mathcal E}_2(R)\}$ and having the rest of ${\mbox {Sp}}\left (H_{e} (\RR)\right )$ in its exterior,  one can write $\Pi_0(\RR)$ as,
\be
\Pi_0(\RR) =\frac{1}{2\pi i}\int_{\gamma(R)}(H_{e} (\RR)-z)^{-1}\;dz .
\ee
Moreover, for $\mu $ complex small enough, one can define the projector,
\be
\Pi_{\mu ,0} (\RR)=\frac{1}{2\pi i}\int_{\gamma(R)}(H_{\mu , e} (\RR)-z)^{-1}\;dz
\ee
satisfying $(\Pi_{\mu ,0})^*=\Pi_{\bar\mu ,0}$. \ We have the following

\begin{lemma} \label{Lem3}\sl
There exist two functions,
\be
w_1^{\mu}(\RR,\rr), w_2^{\mu}(\RR,\rr)\in C^0({\R}_\RR^3;H^2({\R}^3_\rr))
\ee
depending analytically on $\mu$, and real-valued for $\mu$ real, such that,

\begin{itemize}

\item [i.] $\langle w_k^{\mu}(\RR,\rr), w_{l}^{\bar \mu}(\RR,\rr)\rangle_{L^2({\R}^3_\rr)}=\delta_{k,l},
\quad k,l=1,2$;

\item [ii.] $w_j^{\mu} \in C^\infty(\Omega_0 (2/M) ;H^2({\R}^3_\rr))$, $j=1,2$, and, for $\RR \in \Omega (3/M)$, $w_1^{\mu}(\RR,\rr)$ and $w_2^{\mu}(\RR,\rr)$ form a basis of ${\rm Ran}\Pi_0^{\mu}$;

\item [iii.] For $\RR \in \Omega (3/M)$, $w_1^{\mu}(\RR,\rr) $ and $ w_2^{\mu}(\RR,\rr)$ are eigenfunctions of $H_{\mu , e} (\RR)$ associated to ${\mathcal E}_1(\phi_\mu(R))$ and $ {\mathcal E}_2( \phi_\mu(R))$ respectively;

\item [iv.] For all $\ell=1,\dots, m$, one has $\U_\ell(\RR)w_j^{\mu}(\RR,\rr) \in C_b^{\infty}(\Omega_\ell(M), H^2({\R}^3_\rr))$, $j=1,2$.

\item [v.] $w_1^{\mu}$ and $w_2^{\mu}$ can be chosen in such a way that,

$(\LL_\RR +\LL_\rr)w_1^{\mu}=(\LL_\RR +\LL_\rr)w_2^{\mu}=0$.

\end{itemize}

\end{lemma}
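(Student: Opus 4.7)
I will construct $w_j^\mu$ from the individual rank-one spectral projectors onto the simple eigenvalues $\E_j(\phi_\mu(R))$ of $H_{\mu,e}(\RR)$, using the real, rotationally invariant eigenfunctions $\psi_j(\RR,\rr)$ of Eq.\ (\ref{Eq6}) as sources, the regularizations $\U_\ell$ of Lemma \ref{Lem2} to extract $\RR$-smoothness, and a smooth extension followed by a biorthogonal Gram--Schmidt step to handle the small-$R$ region.

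Since $\E_1(R),\E_2(R)$ are simple and separated from each other and from the rest of the spectrum of $H_e(\RR)$ by Hypothesis 2, and $H_{\mu,e}(\RR)$ is an analytic family of type A in $\mu$, for $\mu$ small and $R>3/M$ one can enclose each $\E_j(\phi_\mu(R))$ by a disjoint loop $\gamma_j(R)\subset\gamma(R)$ and define
$$\Pi_{\mu,j}(\RR) := \frac{1}{2\pi i}\int_{\gamma_j(R)}(H_{\mu,e}(\RR)-z)^{-1}\,dz,\qquad j=1,2.$$
These are analytic rank-one projectors with $\Pi_{\mu,j}^{\ast} = \Pi_{\bar\mu,j}$ and $\Pi_{\mu,1}+\Pi_{\mu,2} = \Pi_{\mu,0}$. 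Point (4) of Lemma \ref{Lem2} shows that $\U_\ell H_{\mu,e}\U_\ell^{-1}$ is $C^\infty$ on $\Omega_\ell$ with values in $\mathcal{L}(H^2(\R^3_\rr),L^2(\R^3_\rr))$, uniformly for $\mu$ small; hence $\U_\ell\Pi_{\mu,j}\U_\ell^{-1}$ is bounded-$C^\infty$ on $\Omega_\ell$ with values in $\mathcal{L}(L^2,H^2)$.

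On $\Omega(3/M)$ I then set
$$\tilde w_j^\mu(\RR,\cdot) := \Pi_{\mu,j}(\RR)\psi_j(\RR,\cdot),\qquad w_j^\mu := \frac{\tilde w_j^\mu}{\sqrt{\langle\tilde w_j^\mu,\tilde w_j^{\bar\mu}\rangle_{L^2(\R^3_\rr)}}},$$
where the denominator is close to $1$ for $\mu$ small by continuity from $\mu = 0$, at which point $\tilde w_j^0 = \psi_j$ and the denominator equals $1$. By construction $w_j^\mu \in \text{Ran}\,\Pi_{\mu,j}(\RR)$, the eigenspace of $H_{\mu,e}(\RR)$ at $\E_j(\phi_\mu(R))$, which gives (iii); property (v) is inherited from $\psi_j(O\RR,O\rr) = \psi_j(\RR,\rr)$ combined with the rotational covariance $H_{\mu,e}(O\RR,O\rr,O^{-1}D_\rr) = H_{\mu,e}(\RR,\rr,D_\rr)$; the off-diagonal part of (i) holds since $w_k^\mu \in\text{Ran}\,\Pi_{\mu,k}$ and $w_l^{\bar\mu}\in\text{Ran}\,\Pi_{\mu,l}^{\ast}$ annihilate each other for $k\neq l$, the diagonal part being the normalization itself; and (iv) restricted to $\Omega_\ell\cap\Omega(3/M)$ follows from the bounded-$C^\infty$ behavior of $\U_\ell\Pi_{\mu,j}\U_\ell^{-1}$ together with the fact that $\U_\ell\psi_j$ is a smooth $\RR$-family in $H^2(\R^3_\rr)$, the latter because $\U_\ell \psi_j$ is the eigenfunction of the $\RR$-smooth family $\U_\ell H_e\U_\ell^{-1}$ at its simple, isolated eigenvalue $\E_j(R)$.

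On the ball $R<2/M$ I extend $w_j^\mu$ as an arbitrary fixed $C^\infty$ axially symmetric $H^2(\R^3_\rr)$-valued section, interpolate via a smooth rotation-invariant cutoff on $2/M<R<3/M$, and finally enforce (i) globally by a biorthogonal Gram--Schmidt step; on $\Omega(3/M)$ the pair is already biorthonormal so this correction is the identity there and leaves (ii)--(v) undisturbed. The main obstacle is establishing the uniform bounded smoothness (iv) on the \emph{unbounded} set $\Omega_\ell$: it rests crucially on the uniform-in-$\RR$ estimates provided by Lemma \ref{Lem2} together with standard analytic perturbation theory applied to a simple eigenvalue bounded away from the rest of the spectrum.
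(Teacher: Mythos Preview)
Your construction is correct and is essentially a hands-on realization of what the paper obtains by citing \cite{MaSo2}, Lemma~3.1, \cite{MaMe}, Proposition~5.1, and \cite{KMSW}, Theorem~2.1: the paper's own proof is only a pointer to those references, together with the observation that $\lim_{R\to\infty}\E_1(R)\neq\lim_{R\to\infty}\E_2(R)$ (Eq.~(\ref{pippo})), which is what permits the splitting of $\Pi_{\mu,0}$ into two \emph{uniformly bounded} rank-one pieces $\Pi_{\mu,j}$. Your proposal implicitly uses this separation when you say the eigenvalues are ``separated from each other by Hypothesis~2'', but it is worth making explicit, since Hypothesis~2 only gives a gap between $\{\E_1,\E_2\}$ and $\E_3$; the uniform gap between $\E_1$ and $\E_2$ for all $R>0$ comes from combining simplicity on compacts with (\ref{pippo}) at infinity, and without it your $\Pi_{\mu,j}$, and hence (iv), would fail to be uniformly bounded on the unbounded $\Omega_\ell$.

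Two minor points of care. First, analyticity of $w_j^\mu$ in $\mu$ requires that your normalizing factor $\langle\tilde w_j^\mu,\tilde w_j^{\bar\mu}\rangle$ be holomorphic in $\mu$; this holds because $\mu\mapsto\tilde w_j^\mu$ is holomorphic while $\mu\mapsto\overline{\tilde w_j^{\bar\mu}}$ is the composition of two antiholomorphic maps, hence holomorphic---you should say this. Second, in the small-$R$ extension you write ``axially symmetric'', but what you need for (v) is invariance under the \emph{full} diagonal rotation action, i.e.\ $(\LL_\RR+\LL_\rr)$-annihilation; a constant-in-$\RR$ choice must therefore also be radial in $\rr$. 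With these clarifications your argument matches the construction underlying the cited references.
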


\begin {proof} Taking into account that (see (\ref {pippo})),
\be
\lim_{R\mapsto +\infty}{\mathcal E}_1(R)\not=\lim_{R\mapsto +\infty}{\mathcal E}_2(R),
\ee
the points (i)-(iv) follow from  Lemma 3.1 of \cite{MaSo2}  and  from  the arguments of  Proposition 5.1 in \cite{MaMe}. Moreover, since ${\mathcal E}_1(R)$ and ${\mathcal E}_2(R)$ are  non degenerate, the last point (v) follows from \cite{KMSW}, Theorem 2.1, and from (\ref{invrotS}).
\end {proof}

\vskip 0.3cm\noindent

Thanks to the previous lemma, we see that the family $(    U_\ell,\Omega_\ell)_{\ell=0,m}$ (with $\Omega_0 = \Omega_0 (2/M),  \U_0=\I $ and $ \Omega_\ell, \U_\ell$  defined  in Lemma \ref{Lem2} and Lemma \ref{Lem3}), is a regular unitary covering of $L^2 ({\R}^3_\RR; L^2({\R}^3_\rr))$ in the sense of  \cite{MaSo2}, Definition 4.1.

We set,
\be
\widetilde \Pi_{\mu ,0}  (\RR) = \langle \cdot ,w_1^{\mu}(\RR)\rangle_{L^2(\R_{\rr}^{3})}w_1^{\mu}(\RR)+
\langle \cdot ,w_2^{\mu}(\RR)\rangle_{L^2(\R_{\rr}^{3})}w_2^{\mu}(\RR)
\ee
so that  $\widetilde \Pi_{\mu ,0} (R)$ coincides with $\Pi_{\mu ,0} (R)$ for $\RR \in \Omega (3/M)$, and  verify,
\be
\U_\ell(\RR)\widetilde \Pi_{\mu ,0} (\RR) \U_\ell(\RR)^{-1}\in  C^\infty (\Omega_\ell,{\mathcal  L}(L^2(\R_{\rr}^{3}))),
\ee
for all $\ell=0,\dots, m$. \ Also observe that, for any rotation $O$,
$$T_O\widetilde \Pi_{\mu ,0} (\RR)=\widetilde \Pi_{\mu ,0} (\RR)T_O$$
or, equivalently, $[\Pi_{\mu ,0} (\RR), \LL_\RR+\LL_\rr]=0$.

We also denote by  $ \widetilde\Pi_0(\RR)$ the value of $ \widetilde\Pi_{\mu ,0}(\RR)$ for $\mu=0$.

Now, with the help of  $ \widetilde \Pi_{\mu ,0}(\RR)$,  we modify $H_{\mu , e} (\RR)$ outside a neighborhood of $\Omega (5/M)$ as follows (see Proposition 3.2 in \cite{MaSo2}).

\begin{proposition} \sl
\label{qtilde}
We choose a function $\zeta\in  C^{\infty}(\R_+ ; [0,1])$, such that $\zeta=1$ for  $R\geq 3/M$ and $\mbox {\rm supp} \zeta \subseteq ]2/M , + \infty [$. \ Then, for all $\RR\in \R^3$, and $\mu$ complex small enough, there exists an operator $\widetilde H_{\mu , e} (\RR)$ on $L^2(\R_\rr^3)$, with domain $H^2(\R_\rr^3)$,  depending analytically on $\mu$, such that,
\be
&& \widetilde H_{\mu , e} (\RR)=H_{\mu , e} (\RR)\quad \;\;\mbox{if}\,\,  \RR \in \Omega (4/M) ;\\
&&[\widetilde H_{\mu , e} (\RR), \widetilde\Pi_{\mu ,0}(\RR)] =0\quad \;\;\mbox{for all } \RR \in {\R}^3 ,
\ee
and the application $\RR\mapsto \U_\ell(\RR)\widetilde H_{\mu , e} (\RR) \U_\ell(\RR)^{-1}$ is in $C^\infty (\Omega_\ell; {\mathcal L}(H^2(\R_{\rr}^3), L^2(\R_{\rr}^{3}))$  for all  $\ell=0,\dots, m$. \ Moreover, $\widetilde H_{\mu , e} (\RR)$ commutes with $\LL_\RR +\LL_\rr$, in the sense that, for any $\varphi\in C_0^\infty (\R^6)$, one has,
$$
(\LL_\RR +\LL_\rr)\widetilde H_{\mu , e} (\RR)\varphi = \widetilde H_{\mu , e} (\RR)(\LL_\RR +\LL_\rr)\varphi.
$$
Hence, the spectrum of $\widetilde H_{\mu , e} (\RR)$ actually depends only on $R\in \R_+$. Moreover, for $\mu$ real, $\widetilde H_{\mu , e} (\RR)$ is selfadjoint, uniformly semibounded from below, and the bottom of its spectrum consists in two  eigenvalues,
\be
\widetilde {\mathcal E}_j^\mu (R) = \widetilde {\mathcal E}_j (\phi_\mu(R))\, ,\  j=1,2 ,
\ee
where
\be
\widetilde {\mathcal E}_j (R) = \zeta (R) {\mathcal E}_j (R)\, .
\ee
Furthermore, $\widetilde H_{\mu , e} (\RR)$ admits a global gap in its spectrum, in the sense that,
\be
\inf_{R \in \R_+}{\mbox {\rm dist}}( \{\widetilde {\mathcal E}_1^\mu(R), \widetilde {\mathcal E}_2^\mu(R)\}, \widetilde {\mathcal E}_3^\mu(R)) >0.
\ee
where we set
\be
\widetilde {\mathcal E}_3^\mu(R) = {\mbox {\rm Sp}} (\widetilde H_{\mu ,e} (\RR))\backslash \{\widetilde {\mathcal E}_1^\mu (R), \widetilde {\mathcal E}_2^\mu (R)\}
\ee
\end{proposition}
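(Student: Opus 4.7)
I would closely follow the construction of Proposition 3.2 in \cite{MaSo2}. The idea is to define $\widetilde H_{\mu,e}(\RR)$ in block form with respect to the biorthogonal splitting $L^2(\R^3_\rr)=\mathrm{Ran}\,\widetilde\Pi_{\mu,0}(\RR)\oplus\mathrm{Ran}(I-\widetilde\Pi_{\mu,0}(\RR))$ furnished by Lemma \ref{Lem3}. Explicitly, for $\varphi\in H^2(\R^3_\rr)$, set
\[
\widetilde H_{\mu,e}(\RR)\varphi:=\sum_{j=1,2}\widetilde{\mathcal E}_j(\phi_\mu(R))\langle\varphi,w_j^{\bar\mu}(\RR)\rangle w_j^\mu(\RR)+\bigl(I-\widetilde\Pi_{\mu,0}(\RR)\bigr)\bigl(H_{\mu,e}(\RR)+\Lambda\chi_0(R)\bigr)\bigl(I-\widetilde\Pi_{\mu,0}(\RR)\bigr)\varphi,
\]
where $\chi_0\in C_0^\infty([0,4/M];[0,1])$ equals $1$ on $[0,3/M]$ and $\Lambda>0$ is a large constant to be fixed at the end. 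The first summand forces the $w_j^\mu(\RR)$ to be generalized eigenvectors with eigenvalues $\widetilde{\mathcal E}_j(\phi_\mu(R))$; the second compresses the original operator (shifted by $\Lambda\chi_0$) to the complementary biorthogonal subspace.

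The algebraic statements then follow readily. On $\Omega(4/M)$ and for $\mu$ small, $\phi_\mu(R)>3/M$, so $\zeta(\phi_\mu(R))=1$ and $\chi_0(R)=0$; Lemma \ref{Lem3}(iii) ensures that the $w_j^\mu$ are honest eigenvectors of $H_{\mu,e}(\RR)$ and, since $\widetilde\Pi_{\mu,0}$ commutes with $H_{\mu,e}$ there, a short manipulation collapses both blocks back into $H_{\mu,e}(\RR)$. The commutation $[\widetilde H_{\mu,e}(\RR),\widetilde\Pi_{\mu,0}(\RR)]=0$ is automatic from the block form. Smoothness of $\RR\mapsto\U_\ell(\RR)\widetilde H_{\mu,e}(\RR)\U_\ell(\RR)^{-1}$ on each chart $\Omega_\ell$ follows by combining Lemma \ref{Lem2}(4) (for $H_{\mu,e}$), Lemma \ref{Lem3}(iv) (for $w_j^\mu$, hence for $\widetilde\Pi_{\mu,0}$), and the smoothness of $\phi_\mu$, $\zeta$, $\chi_0$. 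Commutation with $\LL_\RR+\LL_\rr$ comes from Lemma \ref{Lem3}(v) together with (\ref{invrotS}); a rotation argument as in Remark \ref{rem:O} then forces $\mathrm{Sp}(\widetilde H_{\mu,e}(\RR))$ to depend only on $R$.

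The spectral claims are the heart of the argument, and isolating a uniform gap is the main obstacle. For $\mu$ real, orthonormality $\langle w_k^\mu,w_l^\mu\rangle=\delta_{kl}$ makes both blocks selfadjoint, so $\widetilde H_{\mu,e}(\RR)$ is selfadjoint with $w_j^\mu$ as eigenvectors of eigenvalues $\widetilde{\mathcal E}_j(\phi_\mu(R))$. On $\Omega(3/M)$ the complementary block has spectrum above $\mathcal E_3(R)$ by Hypothesis 2, and the global gap is immediate there. The delicate region is $R\le 3/M$, where $\zeta(\phi_\mu(R))$ may vanish and the complementary Coulombic compression $(I-\widetilde\Pi_{\mu,0})H_{\mu,e}(I-\widetilde\Pi_{\mu,0})$ could a priori dip arbitrarily close to the prescribed eigenvalues $\widetilde{\mathcal E}_j^\mu(R)$. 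Choosing $\Lambda$ larger than $\sup_{R\le 4/M}|\mathcal E_2(\phi_\mu(R))|$ plus a margin, the added term $\Lambda\chi_0(R)(I-\widetilde\Pi_{\mu,0})$ lifts the complementary spectrum uniformly above $\widetilde{\mathcal E}_2^\mu(R)+c$ for some $c>0$, producing the required global gap and uniform semiboundedness. Analyticity in $\mu$ is inherited directly from the analyticity of $\widetilde\Pi_{\mu,0}$ and of $w_j^\mu$ given by Lemma \ref{Lem3}.
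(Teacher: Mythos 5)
Your construction is correct in spirit and achieves the same goals as the paper's, but it takes a recognizably different route. The paper defines (for $\mu=0$, then distorts)
\[
\widetilde H_{e}(\RR) = \zeta(R)\,H_{e}(\RR) + \bigl(1-\zeta(R)\bigr)\,\widetilde\Pi_0^{\perp}(\RR)\,(-\Delta_\rr + C_0)\,\widetilde\Pi_0^{\perp}(\RR),
\]
that is, it linearly interpolates between $H_e$ and the free shifted Laplacian compressed to $\mathrm{Ran}\,\widetilde\Pi_0^\perp$, instead of your prescription $\widetilde\Pi_{\mu,0}^\perp(H_{\mu,e}+\Lambda\chi_0)\widetilde\Pi_{\mu,0}^\perp$. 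The two constructions share the block-diagonal mechanism for commutation with $\widetilde\Pi_{\mu,0}$ and they both freeze the $\widetilde\Pi_{\mu,0}$-block to have eigenvalues $\zeta\,{\mathcal E}_j$, so the algebraic, smoothness, rotation, and analyticity claims go through the same way in both proofs. The genuine difference is in how the uniform lower bound and gap are obtained on $\{R \leq 3/M\}$: the paper replaces the operator in the complementary block by $-\Delta_\rr + C_0 \geq C_0 > \bar{\mathcal E}_3$, which is manifestly and trivially bounded below, whereas you keep the Coulombic compression $\widetilde\Pi_{\mu,0}^\perp H_{\mu,e}\widetilde\Pi_{\mu,0}^\perp$ and try to lift it with a constant. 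That buys you a closer relation to the original operator, at the cost of having to control that compression near $R=0$ where $\widetilde\Pi_{\mu,0}$ is no longer a spectral projection.

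That is exactly where your argument is imprecise. Choosing $\Lambda$ larger than $\sup_{R\leq 4/M}|{\mathcal E}_2(\phi_\mu(R))|$ plus a margin does \emph{not} by itself put the complementary block above $\widetilde{\mathcal E}_2^\mu+c$: you also need a uniform lower bound on $\mathrm{Sp}\bigl(\widetilde\Pi_{\mu,0}^\perp H_{\mu,e}(\RR)\widetilde\Pi_{\mu,0}^\perp\bigr)$ for $R\leq 3/M$, which can be as negative as $\inf_{\RR}\inf\mathrm{Sp}\,H_{\mu,e}(\RR)$ (of the order of the Coulomb ground state of a merged nucleus, plus the $\mu$-distortion). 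The correct choice is $\Lambda > \bigl|\inf_{\RR}\inf\mathrm{Sp}\,H_{\mu,e}(\RR)\bigr| + \sup_{R\leq 4/M}|\widetilde{\mathcal E}_2^\mu(R)| + c$, and you need to actually justify that the first supremum is finite and uniform in $\mu$ small (it is, by relative form-boundedness of the Coulomb terms with bound zero, uniformly in $\RR$ and analytically in $\mu$, together with boundedness below of $V$). The paper's replacement by $-\Delta_\rr+C_0$ is precisely a device to avoid having to make that estimate; if you prefer your formula, you should state and prove the uniform semi-boundedness of $H_{\mu,e}(\RR)$ rather than leaving it tacit.
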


\begin {proof} The proof is similar to that of Proposition 3.2 in \cite{MaSo2}, and we write it for $\mu=0$ only (the general case is obtained by just substituting $H_{\mu ,e}$ to $H_e$ and $\widetilde\Pi_{\mu ,0}$ to $\widetilde\Pi_0$). \ We set $\widetilde\Pi_0^{\perp}(\RR)=1-\widetilde\Pi_0(\RR)$ and
\be
\widetilde H_{e} (\RR)=\zeta(R) H_{e} (\RR)+(1-\zeta(R))\widetilde\Pi_0^{\perp}(\RR) (-\Delta_\rr+C_0) \widetilde \Pi_0^{\perp}(\RR) .
\ee
with $C_0>0$ large enough and such that $C_0>{\bar{\mathcal E}_3}$, where
\bee
{\bar {\mathcal E}_3}:=\inf_R {\mathcal E}_3(R) \, . \label {Eq15}
\eee
Since $\widetilde\Pi_0(\RR)=\Pi_0(\RR)$ on ${\rm Supp}\hskip 1pt \zeta$, we see that  $\widetilde\Pi_0(\RR)$ commutes with $\widetilde H_{e} (\RR)$, and it is also clear that $\widetilde H_{e} (\RR)$ is selfadjoint with domain $H^2({\R}^{3})$. \ Moreover,
\be
\widetilde\Pi_0(\RR)H_{e} (\RR)\widetilde\Pi_0(\RR)= \zeta (\RR)\Pi_0(\RR) H_{e} (\RR)\Pi_0(\RR),
\ee
and,
\begin{equation}
\label{reducedop}
\widetilde \Pi_0^{\perp}(\RR)\widetilde H_{e} (\RR)\widetilde \Pi_0^{\perp}(\RR)\geq \left(\zeta (R) {\mathcal E}_3(R)+ (1-\zeta (R))C_0\right)\widetilde\Pi_0^{\perp}(\RR)\geq {\bar {\mathcal E}_3}\widetilde \Pi_0^{\perp}(\RR).
\end{equation}
In particular, the bottom of the spectrum of $\widetilde H_{e}(\RR)$ consists in two eigenvalues $\widetilde {\mathcal E}_j (R) = \zeta (R) {\mathcal E}_j (R)$ with associated eigenvectors $\widetilde \Pi_0 (\RR) \psi_j $, $j=1,2$, where $\E_j$ and $\psi_j$ are the first two eigenvalues and eigenvectors of (\ref {Eq6}). \ Furthermore, one has
\be
&&\inf_{R>2/M}{\rm dist }(\widetilde {\mathcal E}_3(R), \{\widetilde {\mathcal E}_1(R), \widetilde {\mathcal E}_2(R)\})\\
&&\hskip 3cm \geq \inf_{R>2/M} \left ( \zeta (R)( \mbox { inf } \left [ {\mathcal E}_3(R) \right ] -{\mathcal E}_2(R)) + (1-\zeta (R))C_0 \right ) >0,
\ee
and
\be
\inf_{0<R\leq2/M}{\rm dist}(\widetilde {\mathcal E}_3(R), \{\widetilde {\mathcal E}_1(R), \widetilde {\mathcal E}_2(R)\})\geq C_0.
\ee
In particular, $\widetilde H_{e} (\RR)$ admits a fix global gap in its spectrum as stated in the proposition. \ Finally, we see that $\widetilde H_{e} (\RR)$ commutes with $\LL_\RR +\LL_\rr$, and $\U_\ell\widetilde H_{e} (\RR)\U_\ell^{-1}$ depends smoothly on $\RR$ in $\Omega_\ell$ for all $\ell=0,\dots, m$. \end {proof}

\subsection{Regularization of the operator}

\begin {definition} [Regularization of $H_{\mu}$] \sl
\label{Hmumod}
Let $\Ss_\mu$ be the analytic distortion defined in (\ref {eq16bis}) for $\mu$ in some small complex neighborhood of zero, and let $\widetilde H_{\mu ,e} (\RR )$ and $\zeta (R)$ be defined as in Proposition \ref {qtilde}.  Then, we define the regularization of $H_{\mu}$ as,
\bee
\label {Eq16}
\widetilde H_{\mu}= -h ^2 \Ss_\mu \Delta_\RR \Ss_\mu^{-1}+\widetilde H_{\mu , e} (\RR)+\frac{\zeta(R)} {\phi_\mu(R)}+\frac{M}{3}(1-\zeta (R)).
\eee
\end {definition}

Taking into account Definition 4.4 in \cite{MaSo2}, we see that Lemma \ref{Lem2}, Proposition \ref{qtilde} and (\ref{invrotS}) imply,

\begin{lemma} \label{Lem4}\sl The operator $\widetilde H_{\mu}$ is a twisted
PDO (of degree 2) on $L^2(\R_\RR^3,L^2(\R_\rr^3))$ (in the sense of Definition 5.1 in \cite{MaSo2}),
 associated with the regular unitary covering $(\U_\ell,\Omega_\ell)_{\ell=0,\dots, m}$. Moreover, it commutes with $\LL_\RR +\LL_\rr$.
\label{lem:Utwisted2}
\end{lemma}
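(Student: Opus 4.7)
The plan is to decompose $\widetilde H_\mu$ into its three constituent pieces,
\[
\widetilde H_\mu = \bigl(-h^2\Ss_\mu\Delta_\RR\Ss_\mu^{-1}\bigr) + \widetilde H_{\mu,e}(\RR) + m(R),
\]
where $m(R):=\zeta(R)/\phi_\mu(R)+(M/3)(1-\zeta(R))$ is a bounded smooth function of $R=|\RR|$, and to check both claims on each summand separately. To verify the twisted-PDO property it is enough, by Definition~4.4 / 5.1 of \cite{MaSo2}, to show that for every chart $(\U_\ell,\Omega_\ell)$ the conjugated operator $\U_\ell \widetilde H_\mu \U_\ell^{-1}$ is a semiclassical differential operator of order $\leq 2$ in $hD_\RR$, with operator-valued coefficients that are smooth in $\RR\in\Omega_\ell$ and satisfy the required uniform bounds as maps from $H^2(\R^3_\rr)$ to $L^2(\R^3_\rr)$, together with the usual compatibility on overlaps. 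The first summand is handled directly by Lemma~\ref{Lem2}(3), which gives exactly the second-order semiclassical expansion in each chart. For the second summand we combine Proposition~\ref{qtilde} with Lemma~\ref{Lem2}(4) applied to $\widetilde H_{\mu,e}$: the key point is that outside $\Omega(4/M)$ the operator $\widetilde H_{\mu,e}$ has been defined so as to smooth out the singularities, while on $\Omega(4/M)$ it coincides with $H_{\mu,e}$, so $\U_\ell\widetilde H_{\mu,e}\U_\ell^{-1}\in C^\infty(\Omega_\ell;\mathcal{L}(H^2(\R^3_\rr),L^2(\R^3_\rr)))$; this is a twisted PDO of degree $0$. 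The third summand $m(R)$ is a bounded smooth scalar multiplier, trivially twisted PDO of degree $0$ since the $\U_\ell(\RR)$ act fiberwise in $\rr$ and commute with scalar multiplication by a function of $\RR$.

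For compatibility on overlaps $\Omega_\ell\cap\Omega_{\ell'}$, the transition operators $\U_\ell\U_{\ell'}^{-1}$ are the ones already used in Lemma~\ref{Lem2} and Proposition~\ref{qtilde}, so no new verification is needed: it suffices to combine the two lemmas and observe that sums of twisted PDOs associated to the same regular unitary covering are twisted PDOs, with degree equal to the maximum of the degrees. This gives degree~$2$ for $\widetilde H_\mu$.

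For the commutation with $\LL_\RR+\LL_\rr$, each piece is treated separately. By (\ref{invrotS}) one has $[\Ss_\mu,\LL_\RR+\LL_\rr]=0$, and since $\Delta_\RR$ commutes with $\LL_\RR$ and acts as the identity in $\rr$, we get $[-h^2\Ss_\mu\Delta_\RR\Ss_\mu^{-1},\LL_\RR+\LL_\rr]=0$. Proposition~\ref{qtilde} already states the commutation for $\widetilde H_{\mu,e}(\RR)$. Finally $m(R)$ is a function of $R=|\RR|$ alone, hence invariant under all rotations $T_O$ and commuting with both $\LL_\RR$ and $\LL_\rr$. Summing gives the claim.

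The main (and essentially only) technical point is the verification of the twisted-PDO estimates for the second summand: one has to check that the modification of $H_{\mu,e}$ performed in Proposition~\ref{qtilde} does not destroy the smooth $\RR$-dependence of $\U_\ell \widetilde H_{\mu,e}\U_\ell^{-1}$ as an $\mathcal{L}(H^2,L^2)$-valued symbol, uniformly in the relevant parameters. Once this is granted — and it is exactly the content of Proposition~3.2 in \cite{MaSo2}, reproduced here as Proposition~\ref{qtilde} — the rest of the argument is a bookkeeping of degrees and a direct application of the invariance properties established earlier in the section.
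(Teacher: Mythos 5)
Your proof is correct and follows exactly the route the paper intends: the paper's own ``proof'' is a one-line citation of Lemma~\ref{Lem2}, Proposition~\ref{qtilde} and (\ref{invrotS}), and your argument is a faithful unpacking of that citation, decomposing $\widetilde H_\mu$ into the distorted kinetic term (handled by Lemma~\ref{Lem2}(3)), the regularized electronic Hamiltonian (Proposition~\ref{qtilde} together with Lemma~\ref{Lem2}(4)), and the bounded scalar multiplier, then summing twisted PDOs of degrees $2$, $0$, $0$ and using $[\Ss_\mu,\LL_\RR+\LL_\rr]=0$ for the rotational invariance. No gap, no meaningful deviation from the authors' intended argument.
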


Now, we define
\be
Z_{\mu}^+ : L^2(\R^6)  \rightarrow  L^2(\R_\RR^3) \oplus L^2(\R_\RR^3)\ee
by the formula,
\be
\left ( Z_\mu^+\psi \right ) (\RR)=\langle \psi (\RR , \cdot ), w_1^{\bar\mu}(\RR , \cdot )\rangle_{L^2(\R_{\rr}^{3})}\oplus \langle \psi (\RR ,\cdot ),  w_2^{\bar \mu}(\RR ,\cdot )\rangle_{L^2({\R}_{\rr}^{3})},
\ee
and,
\be
Z_{\mu}^-=(Z_{\bar \mu}^+)^* :L^2(\R_\RR^3) \oplus L^2(\R_\RR^3) \rightarrow L^2(\R^6),
\ee
by
\be
\left ( Z_\mu^-(u_1\oplus u_2)\right ) (\RR,\rr)=u_1(\RR) w_1^{\mu}(\RR,\rr)+u_2(\RR) w_2^{ \mu}(\RR,\rr).
\ee

Following \cite{MaMe}, we consider the Grushin problem,
 \be
\widetilde {\mathcal G}_\mu(z) =\left( \begin{array}{cc} \widetilde H_{\mu}-z & Z_{ \mu}^-\\
Z_\mu^+ & 0  \end{array} \right),
\ee
that sends $H^2(\R^6)\oplus \left(L^2(\R^3) \oplus L^2(\R^3) \right)$ into $L^2(\R^6) \oplus \left( H^2(\R^3)\oplus H^2(\R^3 )\right)$.

Thanks to Lemma \ref{Lem3} and Lemma \ref{Lem4}, we see that $\widetilde {\mathcal G}_\mu(z)$ is a twisted PDO (of degree 2) on $L^2\left(\R^3_\RR ;  L^2(\R^3_\rr) \oplus \C\oplus\C \right)$,  associated with the regular unitary covering $({\mathcal V}_\ell,\Omega_\ell)_{\ell=0,\dots, m}$, where we have set
$$
{\mathcal V}_\ell := \left( \begin{array}{cc} \U_\ell & 0\\
0 & 1_2  \end{array} \right).
$$
We also have,

\begin{lemma} \sl
\label{invrotG}
For all $\mu\in\C$ small enough and $z\in\C$, the operator
$\widetilde {\mathcal G}_\mu(z)$ commutes with  ${\mathcal L}:= \left( \begin{array}{ccc} L_\RR +L_\rr & 0 \\
0 &   L_\RR \end{array} \right)$.
\end{lemma}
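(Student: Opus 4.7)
The plan is to write out the commutator $[\widetilde{\mathcal G}_\mu(z),{\mathcal L}]$ as a $2\times 2$ operator matrix and show that each of the four entries vanishes. The inputs are (i) Lemma \ref{Lem4}, which gives $[\widetilde H_\mu,\LL_\RR+\LL_\rr]=0$; (ii) Lemma \ref{Lem3}(v), which says $(\LL_\RR+\LL_\rr)w_j^\mu=0$, and hence, by analyticity in $\mu$, the same identity for $w_j^{\bar\mu}$; and (iii) the self-adjointness of $\LL_\rr$ on $L^2(\R^3_\rr)$. The $(1,1)$-entry of the commutator is $[\widetilde H_\mu-z,\LL_\RR+\LL_\rr]$, which is zero by (i), and the $(2,2)$-entry is trivially zero, so everything reduces to checking two intertwining identities for $Z_\mu^{\pm}$.

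For the $(1,2)$-entry, I would verify $(\LL_\RR+\LL_\rr)Z_\mu^-=Z_\mu^-\LL_\RR$ (with $\LL_\RR$ acting diagonally on $L^2(\R^3_\RR)\oplus L^2(\R^3_\RR)$). Starting from $Z_\mu^-(u_1\oplus u_2)(\RR,\rr)=u_1(\RR)w_1^\mu(\RR,\rr)+u_2(\RR)w_2^\mu(\RR,\rr)$, Leibniz in the variables $\RR,\rr$ yields
\[
(\LL_\RR+\LL_\rr)(u_j w_j^\mu)=(\LL_\RR u_j)\,w_j^\mu+u_j\,(\LL_\RR+\LL_\rr)w_j^\mu=(\LL_\RR u_j)\,w_j^\mu,
\]
where the second term drops by (ii); this is exactly the asserted intertwining.

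For the $(2,1)$-entry, I would show $Z_\mu^+(\LL_\RR+\LL_\rr)=\LL_\RR Z_\mu^+$. Writing the $j$-th component of $Z_\mu^+\psi$ as $\langle \psi(\RR,\cdot),w_j^{\bar\mu}(\RR,\cdot)\rangle_{L^2(\R^3_\rr)}$ and differentiating in $\RR$, a careful Leibniz computation (with $\LL_\RR=-i\RR\times\nabla_\RR$ passing through the complex conjugate in the sesquilinear pairing) produces
\[
\LL_\RR\langle\psi,w_j^{\bar\mu}\rangle = \langle\LL_\RR\psi,w_j^{\bar\mu}\rangle-\langle\psi,\LL_\RR w_j^{\bar\mu}\rangle.
\]
On the other hand, using (iii) together with (ii) applied to $w_j^{\bar\mu}$,
\[
\langle\LL_\rr\psi,w_j^{\bar\mu}\rangle=\langle\psi,\LL_\rr w_j^{\bar\mu}\rangle=-\langle\psi,\LL_\RR w_j^{\bar\mu}\rangle.
\]
Adding these two lines gives $\langle(\LL_\RR+\LL_\rr)\psi,w_j^{\bar\mu}\rangle=\LL_\RR\langle\psi,w_j^{\bar\mu}\rangle$, which is exactly the intertwining, component by component.

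The only obstacle is bookkeeping: one must be careful with the sign that appears when the differential operator $\LL_\RR$ is moved across the sesquilinear pairing $\langle\cdot,\cdot\rangle_{L^2(\R^3_\rr)}$, and one must justify the extension of Lemma \ref{Lem3}(v) from $w_j^\mu$ to $w_j^{\bar\mu}$. The latter is a direct consequence of the analyticity in $\mu$ of the family $w_j^\mu$: the identity $(\LL_\RR+\LL_\rr)w_j^\mu=0$ is a holomorphic identity that holds on the real axis, hence everywhere in the neighborhood where $w_j^\mu$ is defined. Once both points are settled, the four entries of $[\widetilde{\mathcal G}_\mu(z),{\mathcal L}]$ are all zero and the lemma is proved.
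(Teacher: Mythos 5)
Your proof is correct and follows essentially the same route as the paper: reduce to the two intertwining identities $(\LL_\RR+\LL_\rr)Z_\mu^-=Z_\mu^-\LL_\RR$ and $\LL_\RR Z_\mu^+=Z_\mu^+(\LL_\RR+\LL_\rr)$, obtain the first by Leibniz together with Lemma \ref{Lem3}(v), and the second by moving $\LL_\rr$ across the $L^2(\R^3_\rr)$-pairing. The only cosmetic difference is a convention: you treat $\LL_\rr$ as self-adjoint (i.e.\ with the $-i$ factor), whereas the paper's proof invokes that the formal adjoint of $L_\rr$ is $-L_\rr$ (i.e.\ no $-i$ factor); the two conventions differ by a unimodular scalar, so both bookkeepings close consistently and the intertwining comes out the same. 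Also note that Lemma \ref{Lem3}(v) is already stated for complex $\mu$, so your analytic-continuation aside, while harmless, is not actually needed to pass from $w_j^\mu$ to $w_j^{\bar\mu}$.
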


\begin {proof} By Lemma \ref{Lem4}, we only need to study the commutation rules between $Z_\mu^\pm$ and the operators $L_\RR$ and $L_\rr$. But, using Lemma \ref{Lem3}, v., plus the fact that the formal adjoint of $L_\rr$ is $-L_\rr$, we  immediately obtain,
$$
(L_\RR +L_\rr)Z_\mu^- = Z_\mu^- L_\RR\quad ; \quad L_\RR Z_\mu^+ = Z_\mu^+(L_\RR +L_\rr),
$$
and the result follows.
\end{proof}

Moreover, we see as in \cite{MaMe}, Section 5, that, for $z\in\C$ with ${\rm Re}z < \inf_R \widetilde {\E}_3 (R)$ and ${\rm Im}z$ sufficiently small, the operator $\widetilde {\mathcal G}_\mu(z)$ is invertible, and its inverse $\widetilde {\mathcal G}_\mu(z)^{-1}$ is such that the operators,
$$
 \left( \begin{array}{cc} 1 & 0\\
0 & \langle -\Delta_\RR\rangle^{-1}  \end{array} \right)\widetilde {\mathcal G}_\mu(z)^{-1}, \quad \widetilde {\mathcal G}_\mu(z)^{-1} \left( \begin{array}{cc} 1 & 0\\
0 & \langle -\Delta_\RR\rangle^{-1}  \end{array} \right)
$$
are twisted (bounded) $h$-admissible operators associated with the regular unitary covering $({\mathcal V}_\ell,\Omega_\ell)_{\ell=0,\dots, m}$. As a consequence, $\widetilde {\mathcal G}_\mu(z)^{-1}$ can be written as,
$$
\widetilde {\mathcal G}_\mu(z)^{-1} =\left( \begin{array}{cc} E_\mu (z) & E_\mu^+ (z)\\
E_\mu^- (z) & z-\widetilde P_\mu (z) \end{array} \right),
$$
where $\widetilde P_\mu(z)$ is an unbounded $h$-admissible operator on $L^2(\R_\RR^3) \oplus L^2(\R_\RR^3)$ with domain $H^2(\R_\RR^3)\oplus H^2(\R_\RR^3)$, and $E_\mu (z)$, $E_\mu^\pm (z)$ are (bounded) twisted $h$-admissible operators (all depending in a holomorphic way on $z$).
 \vskip 0.2cm
More precisely, it results from \cite{MaMe}, formula (2.11), that the operator $\widetilde P_\mu(z)$ is given by the formula,
\begin{equation}
\label{defopeff}
\widetilde P_\mu(z) = Z_\mu^+\widetilde H_{\mu}Z_\mu^- - Z_\mu^+[ h ^2 \Ss_\mu \Delta_\RR \Ss_\mu^{-1},\widetilde \Pi_{\mu ,0}](\widetilde H_{\mu}'-z)^{-1}[ \widetilde \Pi_{\mu ,0}, h ^2 \Ss_\mu \Delta_\RR \Ss_\mu^{-1}]Z_\mu^-,
\end{equation}
where $\widetilde \Pi_{\mu ,0}$ stands for the projection on $L^2({\R}^6)$ induced by the action of $\widetilde \Pi_{\mu ,0}(\RR)$ on $L^2(\R_{\rr}^{3})$, and $\widetilde H_{\mu}'$ is the restriction of $(1-\widetilde \Pi_{\mu ,0})\widetilde H_{\mu}(1-\widetilde \Pi_{\mu ,0})$ to the range of $1-\widetilde \Pi_{\mu ,0}$. \ In particular, $\widetilde H_{\mu}'-z$ is invertible in virtue of (\ref{reducedop}), and $\widetilde \Pi_{\mu ,0}$ is a twisted $h$-admissible operator on $L^2(\R_\RR^3,L^2(\R_\rr^3))$ (in the sense of Definition 4.4 in \cite{MaSo2}), associated with the regular unitary covering $(\U_\ell,\Omega_\ell)_{\ell=0,\dots, m}$.
 \vskip 0.2cm
By Lemma \ref{invrotG}, we also know that $\widetilde P_\mu (z)$ commutes with $L_\RR$, and thus, gathering all the previous information on $\widetilde P_\mu (z)$, we finally obtain that it can be written as,
\begin{eqnarray}
\label{Pmu}
\widetilde P_\mu(z) =  -h ^2 \Ss_\mu \Delta_\RR \Ss_\mu^{-1} +{\mathcal M}_\mu(R)+h {\mathcal A}_\mu (\RR,h \D_\RR)+ h ^2 {\mathcal B}_\mu(\RR,h \D_\RR ; z,h )
\end{eqnarray}
where, for any  $R>\frac{3}{M}$, ${\mathcal M}_\mu$ is given by,
\begin{eqnarray}
\label{Mmu}
{\mathcal M}_\mu(R)=
 \left( \begin{array}{cc}
W_1 ( \phi_\mu(R)) & 0 \\ 0 & W_2( \phi_\mu(R))
\end{array} \right) \, ,
W_j (R) = {\mathcal E}_j (R)+\frac{1}{R} \, , \label {eq27bis}
\end{eqnarray}
and,  for $R\leq \frac{3}{M}$ and $\mu$ sufficiently small, it satisfies,
\begin{eqnarray}
\label{Mmu1}
\re {\mathcal M}_\mu (R) \geq \frac{M}{4} + \inf_R {\mathcal E}_1(R). \label {StimaBasso}
\end{eqnarray}
Here, $M$  is the same as in Proposition \ref{qtilde} and Definition \ref{Hmumod}, and it can be chosen arbitrarily large.

Moreover ${\mathcal A}_\mu (\RR ,h\D_\RR)$ is of the form,
\begin{eqnarray}
{\mathcal A}_\mu = \left(\begin{array}{cc} 0 & a_\mu(\RR) \cdot h \D_\RR \\ h \D_\RR \cdot \overline {a_{\mu} }(\RR)  &  0
\end{array} \right) ,  \label {eq27ter}
\end{eqnarray}
for some smooth bounded (together with its derivatives) function $a_\mu(\RR)$ independent of $z$. Finally,  ${\mathcal B}_\mu(\RR,h \D_\RR; z, h )$ is an $h $-admissible pseudodifferential operator depending analytically on $z$, with Weyl symbol $b_\mu(\RR,\RR^* ; z, h )$ holomorphic with respect to $R^*$ in a complex strip of the form $\{ |\Im R^*| < \delta\}$ (with $\delta >0$ independent of $z$ and $\mu$), such that, for any multi-index $\alpha$,
\begin{eqnarray}
\label{estbmu}
 \partial^\alpha b_\mu(\RR,\RR^* ;z,   h )  =\Oo (1)
\end{eqnarray}
uniformly with respect to $(\RR,\RR^*)\in \R^3 \times \R^3$, $h >0$ small enough, and $z$ close enough to some fix $\lambda_0\in\C$ such that ${\rm Re}\lambda_0 < \inf_R \widetilde {\E}_3 (R)$ and ${\rm Im}\lambda_0$ sufficiently small.
 \vskip 0.2cm
Finally, the operators ${\mathcal A}_\mu$ and ${\mathcal B}_\mu(\RR,h \D_\RR; z, h )$ commute with $L_\RR$, and one has the Feshbach identities,
\begin{eqnarray}
\label{feschbachRes}
&&(\widetilde H_\mu -z)^{-1} = E_\mu (z)+E_\mu^+(z)(\widetilde P_\mu(z)-z)^{-1}E_\mu^-(z),\\
&& (\widetilde P_\mu(z)-z)^{-1} =Z_\mu^+ (\widetilde H_\mu -z)^{-1}Z_\mu^-\nonumber.
\end{eqnarray}

Summing up, we have proved,

\begin{theorem} \label {Thm3}\sl
Let $\widetilde {\E}_3 (R)$ be defined as in Proposition \ref {qtilde} and let $\lambda_0\in \C$ with ${\rm Re}( \lambda_0) <
\inf_R \widetilde {\E}_3 (R)$ and ${\rm Im}( \lambda_0)$ sufficiently small. Under the previous assumptions, there exists a complex neighborhood $D_{\lambda_0}$ of $\lambda_0$ such that, for any $z\in D_{\lambda_0}$, one has the equivalence,
\be
z\in {\rm  Sp} (\widetilde H_{\mu})\Longleftrightarrow z\in {\rm  Sp} (\widetilde P_{\mu}(z)),
\ee
where $\widetilde P_{\mu}(z)$is as in (\ref{Pmu}) with (\ref{Mmu})-(\ref{estbmu}).
\end{theorem}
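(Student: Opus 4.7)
The plan is to read the theorem directly off the Grushin problem constructed in the paragraphs preceding the statement. Since the full $2\times 2$ matrix operator $\widetilde{\mathcal G}_\mu(z)$ has already been shown to be invertible on a complex neighborhood $D_{\lambda_0}$ of $\lambda_0$, with inverse of the block form
\be
\widetilde{\mathcal G}_\mu(z)^{-1} = \begin{pmatrix} E_\mu(z) & E_\mu^+(z) \\ E_\mu^-(z) & z-\widetilde P_\mu(z) \end{pmatrix},
\ee
the spectral equivalence between $\widetilde H_\mu$ and the effective operator $\widetilde P_\mu(z)$ at the point $z$ is a purely algebraic consequence of the Schur-complement relationship between the upper-left block of $\widetilde{\mathcal G}_\mu(z)$ and the lower-right block of its inverse.

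I would then argue the two implications separately, using the two Feshbach identities in (\ref{feschbachRes}) directly. If $z \in D_{\lambda_0} \setminus \mathrm{Sp}(\widetilde P_\mu(z))$, then $(\widetilde P_\mu(z)-z)^{-1}$ exists as a bounded operator on $L^2(\R_\RR^3)\oplus L^2(\R_\RR^3)$, and the first identity of (\ref{feschbachRes}) expresses $(\widetilde H_\mu-z)^{-1}$ as a composition of bounded $h$-admissible operators with $(\widetilde P_\mu(z)-z)^{-1}$, producing a bounded inverse, hence $z \notin \mathrm{Sp}(\widetilde H_\mu)$. Conversely, if $(\widetilde H_\mu-z)^{-1}$ exists, the second identity of (\ref{feschbachRes}) furnishes the explicit bounded inverse $Z_\mu^+(\widetilde H_\mu-z)^{-1}Z_\mu^-$ for $\widetilde P_\mu(z)-z$, whence $z \notin \mathrm{Sp}(\widetilde P_\mu(z))$. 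Taking contrapositives gives the stated equivalence.

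The only nontrivial point is to check that both Feshbach identities are genuine operator identities on the appropriate domains, not merely formal. This follows directly from the matrix identities $\widetilde{\mathcal G}_\mu(z)\widetilde{\mathcal G}_\mu(z)^{-1}=\mathrm{Id}$ and $\widetilde{\mathcal G}_\mu(z)^{-1}\widetilde{\mathcal G}_\mu(z)=\mathrm{Id}$: expanding the four block equations in each matrix product and using that $Z_\mu^- = (Z_{\bar\mu}^+)^\ast$ together with the mapping properties of the twisted $h$-admissible operators $E_\mu(z)$ and $E_\mu^\pm(z)$ between the relevant Sobolev spaces, one obtains (\ref{feschbachRes}) algebraically. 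No genuine obstacle arises at this stage; the real work, already carried out in the paragraphs preceding the theorem, is to construct the Grushin problem, to guarantee via (\ref{reducedop}) that the auxiliary operator $\widetilde H_\mu'$ appearing in (\ref{defopeff}) has no spectrum near $\lambda_0$, and to control the regularity of $\widetilde{\mathcal G}_\mu(z)^{-1}$ in the twisted $h$-admissible calculus of \cite{MaSo2}.
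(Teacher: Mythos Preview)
Your proposal is correct and matches the paper's approach exactly: the paper presents Theorem \ref{Thm3} with the phrase ``Summing up, we have proved,'' meaning it is simply a restatement of the Grushin/Feshbach machinery already assembled in the preceding paragraphs, and your reading of the two implications off the identities (\ref{feschbachRes}) is precisely the intended argument. Your remark that the only genuine content lies in the invertibility of $\widetilde{\mathcal G}_\mu(z)$ (via (\ref{reducedop})) and the twisted $h$-admissible control of its inverse is also on point.
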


Now, taking advantage of Lemma \ref{invrotG}, we can consider the restriction of the Grushin problem $\widetilde {\mathcal G}_\mu(z)$ on $Ker (\LL_\RR +\LL_\rr)\oplus Ker (\LL_\RR )\oplus Ker (\LL_\RR )$, and we also immediately obtain,

\begin{corollary}\sl
\label{CorThm3}
Denote by $\widetilde H_\mu^0$ the restriction of $\widetilde H_{\mu}$ on the invariant subspace $Ker (\LL_\RR +\LL_\rr)$, and by $\widetilde P_{\mu}^0(z)$ the restriction of $\widetilde P_{\mu}(z)$ on the invariant subspace $Ker (\LL_\RR )$. Then, for any $\lambda_0\in \C$ with ${\rm Re}( \lambda_0) <\inf_R \widetilde {\E}_3 (R)$ and ${\rm Im}( \lambda_0)$ sufficiently small, there exists a complex neighborhood $D_{\lambda_0}$ of $\lambda_0$ such that, for any $z\in D_{\lambda_0}$, one has the equivalence,
\be
z\in {\rm  Sp} (\widetilde H_\mu^0)\Longleftrightarrow z\in {\rm  Sp} (\widetilde P_\mu^0(z)).
\ee
\end{corollary}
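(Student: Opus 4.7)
The plan is to deduce the corollary directly from Theorem \ref{Thm3} by restricting the Grushin construction to the joint invariant subspace identified by $\mathcal{L}$. The starting observation is that by Lemma \ref{invrotG}, the operator $\widetilde{\mathcal{G}}_\mu(z)$ commutes with
\[
\mathcal{L} = \left(\begin{array}{cc} \LL_\RR + \LL_\rr & 0 \\ 0 & \LL_\RR \end{array}\right),
\]
and hence so does its inverse $\widetilde{\mathcal{G}}_\mu(z)^{-1}$ on the neighborhood $D_{\lambda_0}$ where the latter exists. In particular, the closed subspace
\[
\mathcal{F} := Ker(\LL_\RR + \LL_\rr)\oplus Ker(\LL_\RR) \oplus Ker(\LL_\RR)
\]
is invariant for $\widetilde{\mathcal{G}}_\mu(z)$ and for $\widetilde{\mathcal{G}}_\mu(z)^{-1}$.

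Next, using the intertwining relations derived in the proof of Lemma \ref{invrotG}, namely $L_\RR Z_\mu^+ = Z_\mu^+(L_\RR + L_\rr)$ and $(L_\RR + L_\rr) Z_\mu^- = Z_\mu^- L_\RR$, one sees that $Z_\mu^+$ sends $Ker(\LL_\RR + \LL_\rr)$ into $Ker(\LL_\RR) \oplus Ker(\LL_\RR)$, and dually for $Z_\mu^-$. Combined with Lemma \ref{Lem4}, which shows that $\widetilde H_\mu$ preserves $Ker(\LL_\RR + \LL_\rr)$, this guarantees that the restriction of $\widetilde{\mathcal{G}}_\mu(z)$ to $\mathcal{F}$ defines a well-posed Grushin problem between the natural subspaces of $H^2$ and $L^2$ cut out by $\mathcal{L}$. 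In the same way, $\widetilde P_\mu(z)$ commutes with $L_\RR$ by the final paragraph preceding Theorem \ref{Thm3}, so $\widetilde P_\mu^0(z)$ is genuinely the restriction of $\widetilde P_\mu(z)$ to $Ker(\LL_\RR)$ and the block decomposition of $\widetilde{\mathcal{G}}_\mu(z)^{-1}$ restricts blockwise.

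The Feshbach identities in (\ref{feschbachRes}) therefore transfer verbatim to the restricted operators: for $z\in D_{\lambda_0}$,
\[
(\widetilde H_\mu^0 - z)^{-1} = E_\mu(z)\big|_{Ker(\LL_\RR + \LL_\rr)} + E_\mu^+(z)(\widetilde P_\mu^0(z) - z)^{-1} E_\mu^-(z),
\]
and symmetrically
\[
(\widetilde P_\mu^0(z) - z)^{-1} = Z_\mu^+ (\widetilde H_\mu^0 - z)^{-1} Z_\mu^-,
\]
both identities making sense precisely because of the invariance argument above. From these two identities one reads off, exactly as in the proof of Theorem \ref{Thm3}, that $z \in \mathrm{Sp}(\widetilde H_\mu^0)$ if and only if $z \in \mathrm{Sp}(\widetilde P_\mu^0(z))$.

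The only delicate point I anticipate is verifying cleanly that the restricted Grushin problem on $\mathcal{F}$ is still invertible; but this is automatic because invertibility of $\widetilde{\mathcal{G}}_\mu(z)$ on the ambient space together with invariance of $\mathcal{F}$ under both $\widetilde{\mathcal{G}}_\mu(z)$ and $\widetilde{\mathcal{G}}_\mu(z)^{-1}$ forces the restriction to be a bijection on $\mathcal{F}$. No new estimate is required, and the whole argument is essentially a book-keeping exercise once Lemma \ref{invrotG} is in hand.
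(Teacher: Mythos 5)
Your argument is correct and follows exactly the paper's approach: the paper also deduces the corollary from Theorem \ref{Thm3} by using Lemma \ref{invrotG} to restrict the Grushin problem $\widetilde{\mathcal G}_\mu(z)$ (and hence its inverse and the resulting Feshbach identities) to the invariant subspace $Ker(\LL_\RR+\LL_\rr)\oplus Ker(\LL_\RR)\oplus Ker(\LL_\RR)$. The extra details you supply, the intertwining relations for $Z_\mu^\pm$ and the automatic invertibility of the restricted Grushin problem on an invariant subspace, are exactly the book-keeping the paper leaves implicit.
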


In the sequel, we also denote by $H_\mu^0$ the restriction of $ H_{\mu}$ on the invariant subspace $Ker (\LL_\RR +\LL_\rr)$.

\section{Reduced problem}

Let us introduce the following shortcut notation,
\be
\D = h \D_R \, , \ \D_R = - i \frac {d }{d R}.
\ee
For all $R>0$, we define,
$$
W_{j,\mu }(R):= \zeta (R)W_j(\phi_\mu (R))+ \frac{M}3 (1-\zeta (R)),
$$
where $\zeta$ is as in Proposition  \ref{qtilde}, and $M$ is taken large enough. In particular, $W_{j,\mu }$ is bounded and depends analytically on $\mu$.
 \vskip 0.2cm
Then, we set,
$${\mathcal M}_\mu^0 (R):=  \left( \begin{array}{cc}
W_{\mu , 1}(R) & 0 \\ 0 & W_{\mu , 2}(R)
\end{array} \right),
$$
and we denote by $ {\mathcal A}_\mu^0 (R,h \D_R)$ the restriction of the differential operator (actually, vector-field) $ {\mathcal A}_\mu (\RR,h \D_\RR)$ on the space $Ker (\LL_\RR )$. In particular, since $[{\mathcal A}_\mu , \LL_\RR ]=0$ then ${\mathcal A}_\mu^0 $ can be represented as a differential operator in the variable $R=|\RR|$, and it can be written as,
$$
{\mathcal A}_\mu^0 ={\mathcal A}_\mu^0 (R,h \D_R)=\left(\begin{array}{cc} 0 & a_\mu^0(R) h \D_R \\  h \D_R \cdot\overline {a_\mu^0 }(R) &  0
\end{array} \right) ,
$$
where the function $a_\mu^0$ is smooth and bounded together with all its derivatives on $(0,+\infty)$.
\vskip 0.2cm
In this section, we look for the solutions to the eigenvalue equation,
\bee \label{Eq17}
P^\sharp_\mu \varphi =\lambda\varphi,\quad \lambda\in\C, \, \, \varphi = \left(\begin{array}{c}
\varphi_1 \\ \varphi_2 \end{array} \right),
\eee
where $P^\sharp_\mu$ is the differential operator formally defined as,
\bee
P^\sharp_\mu =   h ^2 \Ss_\mu \D_R^2 \Ss_\mu^{-1} +{\mathcal M}_\mu^0 (R)+h {\mathcal A}_\mu^0 (R,h \D_R) \label{PmuBis}
\eee
acting on the Hilbert space,
\bee
\Hh^\sharp = L^2 \left ( [0 , + \infty ) , dR \right )\oplus L^2 \left ( [0 , + \infty ) , dR \right ), \label {Eq32}
\eee
with zero Dirichlet boundary condition at $R=0$, and where now, with abuse of notation, we denote,
\bee
\label{abusenot}
\left ( \Ss_\mu \varphi \right ) (R) = |I' (R) |^{1/2} \varphi [I(R)] \, , \ I(R) = R \left [ 1 + \mu s(R) \right ] \, .
\eee

If we set,
\be
P_{j,\mu} = h^2 \Ss_\mu \D_R^2 \Ss_\mu^{-1} + W_{j,\mu }(R) \, ,
\ee
then equation (\ref{Eq17}) turns into
\bee
(P_{1,\mu} -\lambda)\, \varphi_1 & = -h A_\mu^0 \varphi_2 ; \label {Eq18} \\
(P_{2,\mu} -\lambda)\, \varphi_2 & = -h {A_\mu^0}^* \varphi_1 , \label {Eq19}
\eee
where
\be
A_\mu^0 = h a_\mu^0 (R) \D_R \, .
\ee
Let us also observe that, by the Weyl theorem, the essential spectrum of $P_\mu^\sharp$ is given by,
$$
{\rm Sp}_{ess} (P^\sharp_\mu) = {\mathcal E}_1^\infty+(1+\mu)^{-2}[ 0,+\infty).
$$

As before, $m_j$ is the local minima of $ W_j$ and $M^1$ is the local maximum of $W^1$, as defined in Remark \ref {Silverstone}. \ Now, for the sake of definiteness, we consider the case where
\bee
{\mathcal E}^\infty_1 < m_1 < m_2 < M_1
\eee
In the case where ${\mathcal E}^\infty_1 < m_1 < M_1 < m_2 $ then we can apply the same argument to the interval $[m_1, M_1]$.

\begin {remark} \sl \label {Minimo}
By construction, we have ${\rm Sp} (H_{\mu ,e} (\RR) )={\rm Sp} (H_e(I_\mu (\RR))$. Therefore, if the function $s(x)$ used in the distortion vanishes in a sufficiently compact set (and since $W_{j,\mu }$ and $W_j$ coincide on this set), a continuity argument shows that, for $\mu\in\C$ small enough,  the critical points of $\re W_{j,\mu }$  and $W_j$ coincide and remain non-degenerate.
\end {remark}

As a consequence, for any $\lambda \in [m_1,m_2+\alpha]$ (with $\alpha >0$ small enough), the function $W_1(R) -\lambda$ presents the shape of a {\it well in an island} in the sense of \cite{HeSj2}. Moreover, since we are in dimension one, the complementary of  the island (that is, the non compact component of $\{W_1 \leq \lambda\}$) is automatically non-trapping, and we can adopt the general strategy used in \cite{Ma2} (see also \cite{CMR, FLM}), that consists in taking $\mu =2ih\ln\frac1{h}$ in the definition of the analytic distortion. The function $s$ used in (\ref{Imu})-(\ref{Jmu}) can also be assumed to be 0 on a neighborhood of the ``greatest'' island, defined by $\{ R>0\,; W_1(R)\geq m_1\}$. Then, following \cite{FLM}, Theorem 2.2 (see also \cite{HeSj2}, Proposition 9.6), we first show that the eigenvalues of $P_\mu^\sharp$ with their real part in $[m_1,m_2+\alpha]$, coincide, up to an exponentially small error term, with  eigenvalues of the Dirichlet realization $P^\sharp_D$ of $P^\sharp_0$ on the interval $[0,R_{1,M}]$.

\begin{proposition}\sl
\label{compPmuPD}Let $\alpha >0$ small enough, and let ${\mathcal J}\subset (0,1]$, with $0\in\overline{\mathcal J}$, such that there exists a function $a(h)>0$ defined for $h\in {\mathcal J}$ and verifying,
\begin{eqnarray}
&& \mbox{For all } \varepsilon >0,\, a(h)\geq \frac1{C_\varepsilon}e^{-\varepsilon /h} \mbox{ for $h\in {\mathcal J}$ small enough}; \\
&& {\rm Sp} (P^\sharp_D)\cap [m_2+\alpha-2a(h), m_2+\alpha+2a(h)] =\emptyset.
\end{eqnarray}
Set,
$$
\Omega (h) := \{ z\in\C\,; \, {\rm dist}(\re z, [m_1,m_2+\alpha])<a(h), \, |\im z| < C^{-1}h\ln\frac1{h}\},
$$
with $C>0$ a large enough constant. Then, there exists $\delta_0>0$ and a bijection,
$$
b\, :\, {\rm Sp} (P^\sharp_D) \cap [m_1,m_2+\alpha] \rightarrow {\rm Sp} (P_\mu^\sharp)\cap \Omega (h),
$$
such that,
$$
b(\lambda )-\lambda ={\mathcal O}(e^{-\delta_0/h}),
$$
uniformly for $h\in {\mathcal J}$.
\end{proposition}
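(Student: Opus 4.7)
The plan is to follow the ``well in an island'' scheme of \cite{HeSj2} (Proposition 9.6) and \cite{FLM} (Theorem 2.2), adapted here to the matrix-valued operator $P_\mu^\sharp$. The crucial geometric feature is that the distortion function $s$ vanishes in a neighborhood of $\{R>0\,;\,W_1(R)\geq m_1\}$, so $P_\mu^\sharp$ coincides with $P_0^\sharp$ throughout this set, which contains both the island relevant to energies in $[m_1,m_2+\alpha]$ and a neighborhood of $R_{1,M}$; the distortion is effective only in the outer region $\{W_1<m_1\}$, where the classical flow is automatically non-trapping in dimension one. Combined with the choice $\mu=2ih\ln\frac{1}{h}$, standard non-trapping complex-distortion estimates (cf.~\cite{Ma2}) yield a polynomial resolvent bound
$$
\|(P_\mu^\sharp-z)^{-1}\|\leq Ch^{-N_0}
$$
for $z\in\Omega(h)\setminus\mbox{Sp}(P_\mu^\sharp)$, provided the constant $C$ in the definition of $\Omega(h)$ is taken sufficiently large.

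Given $\lambda\in\mbox{Sp}(P_D^\sharp)\cap[m_1,m_2+\alpha]$ with normalized eigenfunction $\psi_D=(\psi_{D,1},\psi_{D,2})$, I first prove a matrix-valued Agmon estimate showing that each component decays exponentially away from its well, with rate measured by the Agmon distance built from $W_{j,0}-\lambda$. This gives $|\psi_{D,1}(R)|+h|\psi_{D,1}'(R)|=\Oo(e^{-S_0/h})$ in a neighborhood of $R_{1,M}$, for some $S_0>0$. Multiplying $\psi_D$ by a smooth cutoff $\chi$ equal to one on $[0,R_{1,M}-\eta]$ and supported in $[0,R_{1,M})$ produces a quasimode $\tilde\psi\in\mbox{Dom}(P_\mu^\sharp)$ with $\|(P_\mu^\sharp-\lambda)\tilde\psi\|=\Oo(e^{-\delta_0/h})$ and $\|\tilde\psi\|\geq 1/2$, for some $\delta_0>0$. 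The resolvent bound above then forces an eigenvalue $b(\lambda)\in\mbox{Sp}(P_\mu^\sharp)\cap\Omega(h)$ with $|b(\lambda)-\lambda|=\Oo(e^{-\delta_0/h})$.

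To verify that $b$ is a bijection, I carry out a Grushin/Feshbach reduction based on the finite family of quasimodes constructed above (indexed by Dirichlet eigenvalues in $[m_1,m_2+\alpha]$): using the resolvent bound on $P_\mu^\sharp$, one inverts a Grushin problem whose associated effective finite-dimensional matrix is exponentially close to the diagonal matrix of Dirichlet eigenvalues. The resulting Fredholm equivalence shows that $\mbox{Sp}(P_\mu^\sharp)\cap\Omega(h)$ has the same cardinality, with multiplicities, as $\mbox{Sp}(P_D^\sharp)\cap[m_1,m_2+\alpha]$. The gap hypothesis ruling out eigenvalues of $P_D^\sharp$ in $[m_2+\alpha-2a(h),m_2+\alpha+2a(h)]$ prevents a Dirichlet eigenvalue from straddling the window boundary, and the semiclassical spacing $\gtrsim h$ recalled at the end of Section 2.3 ensures injectivity of $b$; the count then produces the claimed bijection.

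The main technical obstacle is the matrix-valued complex Agmon estimate itself: one must exhibit a single scalar weight $\Phi(R)$ with $|\Phi'(R)|^2<\re(W_{j,\mu}(R)-\lambda)$ on the classically forbidden region of each channel $j=1,2$, so that the conjugated operator $e^{\Phi/h}(P_\mu^\sharp-\lambda)e^{-\Phi/h}$ is coercive modulo the off-diagonal coupling $h\mathcal{A}_\mu^0$, which contributes only $\Oo(h)$ error and can be absorbed on the right-hand side of the weighted energy identity. Since $\lambda$ may exceed $m_2$, the weight for the second channel must vanish on a neighborhood of $R_{2,m}$; the familiar thickening-and-limit procedure then recovers the sharp exponential rate $S_0$ used above and closes the argument.
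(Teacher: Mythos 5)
The overall skeleton (non-trapping estimate $\to$ quasimodes from Dirichlet eigenfunctions $\to$ finite-dimensional spectral comparison) is a reasonable alternative to the paper's parametrix construction, but there is a genuine gap in the very first step. You claim a uniform bound
\[
\|(P_\mu^\sharp-z)^{-1}\|\leq Ch^{-N_0} \quad \text{for } z\in\Omega(h)\setminus\operatorname{Sp}(P_\mu^\sharp),
\]
and attribute it to ``standard non-trapping complex-distortion estimates.'' This cannot be correct as stated. First, the operator $P_\mu^\sharp$ at energies in $[m_1,m_2+\alpha]$ is \emph{trapping}: the whole point is that there is a well supporting resonances. Non-trapping complex-distortion estimates (as in \cite{Ma2}, \cite{CMR}) are therefore not directly applicable to $P_\mu^\sharp$. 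Second, the set $\Omega(h)\setminus\operatorname{Sp}(P_\mu^\sharp)$ comes arbitrarily close to the resonances, where the resolvent necessarily blows up, so no $h$-uniform polynomial bound can hold there; the correct condition must be a quantitative distance to the spectrum (and a priori one only controls the distance to $\operatorname{Sp}(P^\sharp_D)$, which is what the hypothesis with $a(h)$ is for), so the bound as you wrote it is also circular. Without a clean statement of exactly where the resolvent bound holds, the subsequent quasimode argument and the Grushin reduction (which needs a uniform inverse for the complementary block) are not grounded.

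What the paper does, and what makes the non-trapping machinery applicable, is the filled-well trick: it introduces $F\in C_0^\infty((0,R_{1,M});\R_+)$ with $\inf_{(0,R_{1,M}]}(W_1+F)>m_2+\alpha$, and proves the weighted non-trapping estimate for $Q_\mu^\sharp:=P_\mu^\sharp+F$, an operator with no well at the relevant energies. It then builds the approximate resolvent $R_\mu^\sharp(z)=\chi_1(P_D^\sharp-z)^{-1}\chi_2+(Q_\mu^\sharp-z)^{-1}(1-\chi_2)$, shows $(P_\mu^\sharp-z)R_\mu^\sharp(z)=I+K_\mu(z)$ with $\|K_\mu(z)\|=\Oo(e^{-2\delta/h})$ for $\operatorname{dist}(z,\operatorname{Sp}(P_D^\sharp))\geq a(h)$, and inverts by a Neumann series. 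This simultaneously gives the resolvent bound for $P_\mu^\sharp$ (away from $\operatorname{Sp}(P^\sharp_D)$, not away from $\operatorname{Sp}(P_\mu^\sharp)$) and, after contour integration, the exponentially close comparison of spectral projectors. Your quasimode-plus-Grushin route could in principle be made to work, but only after you introduce the filled-well operator to obtain the resolvent estimate on the complement of the quasimode span; as written, the proposal skips precisely that technical core.
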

\begin{remark}\sl
\label{remecarts}
In our situation, it is well known (see, e.g., \cite{HeRo}) that the distance between two consecutive eigenvalues of the Dirichlet realizations of $P_{1}$ and $P_{2}$ on $(0,R_{1,M})$, behaves like $h$ as $h\rightarrow 0_+$. Then, by slightly moving the parameter $\alpha$, it is not difficult to deduce that  the previous proposition actually gives a complete description of the spectrum of $P_\mu^\sharp$ in a  neighborhood of $[m_1, m_2+\alpha]$, for all sufficiently small values of $h>0$.
\end{remark}
\begin{proof}
At first, we fix a function $F =F(R)\in C_0^\infty ((0,R_{1,M}) ;\R_+)$, such that,
$$
\inf_{(0,R_{1,M}]}(W_1+F) > m_2+\alpha,
$$
and we denote by $p_{j,\mu}=p_{j,\mu}(R,R^*)$ the principal symbol of the operator $P_{j,\mu}$. We also denote by $\widetilde p_{j,\mu}$ an almost analytic extension of $p_{j,\mu}$ (see, e.g., \cite{MeSj}).
\vskip 0.2cm
Then, using the fact that the whole interval of energy $[m_1, m_2+\alpha]$ is non-trapping for the operator $P_1+F$, we can construct as in \cite{CMR} Section 7 (or \cite{Ma2} Section 4), a real valued function $f_0 =f_0(R,R^*)\in C_0^\infty((\R_+ \backslash \mbox{Supp }F)\times \R)$, such that, on the set $\{F(R) +\re  p_{1,\mu} (R,R^*) \in [m_1-\delta, m_2+\alpha+\delta]\}$ (with $\delta >0$ small enough), one has,
$$
-\im \widetilde p_{1,\mu} \left ( R-h\ln\frac1{h} \left ( \partial_R f_0 + i \partial_{R^*}f_0 \right ) , R^* -h\ln\frac1{h} \left ( \partial_{R^*}f_0 - i \partial_R f_0 \right ) \right ) \geq \delta h\ln\frac1{h}.
$$
As a consequence (see, e.g., \cite{CMR} Section 7), if $z\in\C$ is such that ${\rm dist }(z, [m_1, m_2+\alpha]) << h\ln (1/h)$, then, the operator $ P_{1,\mu}+F -z$ is invertible on $L^2 (\R_+)$, and its inverse satisfies,
\bee
\Vert h^{-f_0}T(P_{1,\mu}+F-z)^{-1}u\Vert_{L^2(\R^2)} \leq C|h\ln h|^{-1}\Vert h^{-f_0}Tu\Vert_{L^2(\R^2)}, \label {Paperino}
\eee
where $C>0$ is a constant, and $T\,:\, L^2(\R_+)\rightarrow L^2(\R^2)$ is the Bargmann tranform, defined by,
$$
Tu(R,R^*):=\frac1{2\pi h}\int_{R' >0} e^{i(R-R')R^*/h -(R-R')^2/2h}u(R')dR'.
$$
Indeed, for any $v \in C_0^\infty (\R_+ )$ we have,
\bee
\Vert h^{-f_0}T  v\Vert_{L^2(\R^2)} \leq \frac {C}{|h\ln h|} \Vert h^{-f_0}T (P_{1,\mu}+F-z) v\Vert_{L^2(\R^2)} \, , \label {pippobis}
\eee
and, by means of a density argument, we can extend such an estimate to any $v \in H^2 \cap H^1_0 (\R_+ )$. \ Then, inequality (\ref {Paperino}) holds true for the function $u = (P_{1,\mu}+F-z) v$, which belongs to the space $L^2 (\R_+)$ and satisfies the Dirichlet condition at $R=0$.

This means that the operator $(P_{1,\mu}+F-z)^{-1}$ has a norm ${\mathcal O}(|h\ln h|^{-1})$ if we consider it as acting on the space ${\mathcal H}=L^2(\R_+ )$ endowed with the norm given by: $\Vert u\Vert_{\mathcal H}:= \Vert h^{-f_0}Tu\Vert_{L^2(\R^2)}$.
\vskip 0.2cm
On the other hand, by construction, the operator $ P_{2,\mu} + F$ has a real part greater than $m_2+\alpha$, and thus, if $\re z\leq m_2+\alpha$, we also see that the operator $(P_{2,\mu}+F-z)^{-1}$ has a uniformly bounded norm when acting on ${\mathcal H}$.
\vskip 0.2cm
Then, proceeding as in \cite{HeSj2}, Section 9 (see also \cite{FLM}, Section 2), we pick up two functions $\chi_1, \chi_2\in C_0^\infty ((0,R_{1,M}); [0,1])$, such that $\chi_1 =1$ in a neighborhood of Supp${\chi_2}$, and $\chi_2 =1$ in a neighborhood of Supp$F$. Setting,
\bee
\label{resolvapp}
Q_\mu^\sharp := P_\mu^\sharp+F \quad ;\quad R_\mu^\sharp (z):= \chi_1(P^\sharp_D-z)^{-1}\chi_2 + (Q_\mu^\sharp-z)^{-1}(1-\chi_2),
\eee
we see that, if ${\rm dist} (z,{\rm Sp} (P^\sharp_D))\geq a(h)$, then (\cite{HeSj2}, Formula (9.39) and Proposition 9.8),
$$
(P_\mu^\sharp-z)R_\mu^\sharp (z) = I+ K_\mu(z) \mbox{ with }\Vert K_\mu(z)\Vert_{{\mathcal L}(\mathcal H )} =\Oo (e^{-2\delta /h}),
$$
where $\delta >0$ is some constant. Therefore, for such values of $z$ and for $h$ small enough, we have,
\bee
\label{resolvNeumann}
(P_\mu^\sharp-z)^{-1} = R_\mu^\sharp (z)\sum_{j\geq 0}(-K_\mu(z))^j,
\eee
and since $\Vert R_\mu^\sharp (z)\Vert_{\mathcal H} =\Oo (h^{-C})$ for some constant $C>0$, we deduce that, if $\gamma$ is a simple oriented loop around ${\rm Sp} (P^\sharp_D) \cap [m_1,m_2+\alpha]$ such that ${\rm dist}(\gamma , {\rm Sp} (P^\sharp_D))\geq a(h)$ and ${\rm dist}(\gamma , [m_1,m_2+\alpha])<< |h\ln h|$, then,
\begin{eqnarray}
\Pi_\mu^\sharp := \frac1{2i\pi}\int_{\gamma}(z-P_\mu^\sharp)^{-1}dz &=& -\frac1{2i\pi}\int_{\gamma}R_\mu^\sharp (z) +\Oo (e^{-\delta /h})\nonumber\\
\label{compDir}
&=& \frac1{2i\pi}\int_{\gamma} \chi_1(z-P^\sharp_D)^{-1}\chi_2dz+\Oo (e^{-\delta /h}).
\end{eqnarray}
Here, we have also used the fact that $z\mapsto (Q_\mu^\sharp-z)^{-1}$ is holomorphic in the interior  of $\gamma$, that can be taken equal to $\Omega(h)$.

Now, since $\Pi_\mu^\sharp$ is the spectral projector of $P_\mu^\sharp$ associated with $\Omega(h)$, the corresponding resonances of $P^\sharp$ are nothing but the eigenvalues of $P_\mu^\sharp\Pi_\mu^\sharp$ restricted to the range of $\Pi_\mu^\sharp$. Moreover, if we set $\{ \mu_1, \dots, \mu_m\}:= {\rm Sp} (P^\sharp_D)\cap [m_1,m_2+\alpha]$, and if we denote by $\varphi_1, \dots , \varphi_m$ an orthonormal basis of $\bigoplus_{j=1}^m {\rm Ker} (P^\sharp_D-\mu_j)$, then, by  Agmon estimates, we see on (\ref{compDir}) (see also \cite{HeSj2},Theorem 9.9 and Corollary 9.10) that the functions $\Pi_\mu^\sharp \chi_1 \varphi_j$ ($j=1,\dots,m$) form a basis of Ran$\Pi_\mu^\sharp$, and the matrix of $P_\mu^\sharp\left\vert_{{\rm Ran}\Pi_\mu^\sharp}\right.$ in this basis, is of the form ${\rm diag} (\mu_1, \dots, \mu_m) +\Oo (e^{-\delta /h})$. Then, the result follows from standard arguments on the eigenvalues of finite matrices (plus the fact that $m=\Oo (h^{-N_0})$ for some $N_0\geq 1$ constant).
\end{proof}

Now, exploiting the fact that both $W_1(R_{1,M})$ and $W_2(R_{1,M})$ are (strictly) greater than $m_2$, we consider two functions $ \widetilde W_j \in C^\infty (\R_+ ;\R)$ ($j=1,2$), such that,
$$
\widetilde W_j =W_j \mbox{ on } [0, R_{1,M}]\, ;\, \widetilde W_j \mbox{ is constant on } [2R_1^M, +\infty) \, ;\, \	 \inf_{[R_{1,M}, +\infty)}\widetilde W_j > m_2,
$$
and we set,
$$\widetilde{\mathcal M}_0 (R):=  \left( \begin{array}{cc}
\widetilde W_1(R) & 0 \\ 0 & \widetilde W_2(R)
\end{array} \right),
$$
$$
\widetilde P^\sharp := h ^2  \D_R^2  +\widetilde{\mathcal M}_0 (R)+h {\mathcal A}_0 (R,h \D_R),
$$
acting on the space ${\mathcal H}^\sharp$.
That is, $\widetilde P^\sharp$ is obtained form $P^\sharp$ by substituting $\widetilde W_1, \widetilde W_2$ to $W_1, W_2$. Then, the same arguments used in Proposition \ref{compPmuPD} (and actually simpler, since both operators are self-adjoints)  show that, under the same conditions, the spectrum of $P^\sharp_D$ and the spectrum of $\widetilde P^\sharp$ coincide in $[m_1, m_1+\alpha +a(h)]$, up to some exponentially small error-terms. Therefore, in order to know the resonances of $P^\sharp$ in $\Omega (h)$ (up to those exponentially small error-terms), it is sufficient to study the eigenvalues $\lambda$ of the self-adjoint $\widetilde P^\sharp$ in $[m_1, m_1+\alpha +a(h)]$.

For $j=1,2$, we set,
$$
\widetilde P_j := h^2 D_R^2 +\widetilde W_j,
$$
acting on $L^2(\R^+; dR)$ with Dirichlet condition at $R=0$, and we consider separately two different cases.

\subsection {Case 1: $\lambda \le (m_2-\alpha)$}\label{sub511} In that case, the operator $\widetilde P_2-\lambda$ is invertible, with a uniformly bounded inverse, and the equation,
\bee \label{Eqtilde}
\widetilde P^\sharp \varphi =\lambda\varphi,\quad  \varphi = \left(\begin{array}{c}
\varphi_1 \\ \varphi_2 \end{array} \right),
\eee
can be re-written as,
\begin{eqnarray*}
&& \varphi_2 = -h (\widetilde P_2 -\lambda)^{-1} A_0^*\varphi_1;\\
&& \left [ \widetilde P_1 - h ^2 A_0 ( \widetilde P_2 -\lambda)^{-1} A_0^* \right ]  \varphi_1 = \lambda \varphi_1.
\end{eqnarray*}
Thus,  the eigenvalues $\lambda$ are given by the equation,
\bee
\label{Eq22}
\lambda= \widetilde{f}_k (\lambda),
\eee
where the $\widetilde{f}_k (\lambda)$'s are the eigenvalues of $\hat P_1(\lambda ):= \widetilde P_1 - h ^2 A_0 (\widetilde P_2 -\lambda)^{-1} A_0^*$.
\vskip 0.2cm
Writing,
$$
\hat P_1(\lambda )-z = (1-h^2A_0(\widetilde P_2-\lambda)^{-1}A_0^*(\widetilde P_1-z)^{-1})(\widetilde P_1-z),
$$

and observing that, for $z\notin \mbox{\rm Sp}(\widetilde P_1)$, $A_0^*(\widetilde P_1-z)^{-1}$ is bounded and has a norm $\Oo (\mbox{\rm dist } (z, \mbox{\rm Sp}( \widetilde P_1))^{-1})$, we conclude that, if $\mbox{\rm dist } (z, \mbox{\rm Sp}( \widetilde P_1))>> h^2$, then $\hat P_1(\lambda )-z$ is invertible, and its inverse satisfies,
$$
(\hat P_1(\lambda )-z)^{-1} = ( \widetilde P_1-z)^{-1} (1+\Oo (h^2/\mbox{\rm dist } (z, \mbox{\rm Sp}( \widetilde P_1)))).
$$
Differentiating with respect to $\lambda$, we also obtain,
$$
\frac{d}{d\lambda}(\hat P_1(\lambda )-z)^{-1} =( \widetilde P_1-z)^{-1} \Oo (h^2/\mbox{\rm dist } (z, \mbox{\rm Sp}( \widetilde P_1))) =\Oo (h^2/\mbox{\rm dist } (z, \mbox{\rm Sp}( \widetilde P_1))^2).
$$
Then, using the fact that, under the non degenerate condition discussed in Remark \ref {Minimo}, the eigenvalues $E_{1,k}$ ($k\geq 1$) of $ \widetilde P_1$ are distant at least  of order $h$ between each other, for each of them we can define the projection,
$$
\hat \Pi_1(\lambda) := \frac1{2i\pi}\int_{\gamma_k}(z-\hat P_1(\lambda ))^{-1} dz,
$$
where $\gamma_k$ is a complex oriented simple circle  centered at $E_k^1$ of radius $\delta h$ with $\delta >0$ small enough. Applying standard regular perturbation theory, we easily conclude that the $k$-th eigenvalue of $\widetilde{f}_k (\lambda)$ of $\hat P_1(\lambda)$ satisfies,
\bee
\label{estfk}
f_k(\lambda) = E_{1,k} +\Oo (h^2)\quad ;\quad \frac{df_k}{d\lambda}(\lambda) =\Oo (h),
\eee
uniformly with respect to $h$ small enough, $k\geq 1$ such that $E_k^1\leq m_2-\frac12\alpha$, and $\lambda \le m_2 - \alpha$.
\vskip 0.2cm
By the implicit function theorem, it follows that the $k$-th eigenvalue $\lambda_k$ of $\widetilde P^\sharp$ satisfies,
$$
\lambda_k = E_{1,k} +\Oo (h^2),
$$
uniformly with respect to $h>0$ small enough and to $k=\Oo (h^{-1})$, such that $E_{1,k}\leq m_2 -\frac12\alpha$.

\subsection {Case 2: $\lambda \in [m_2-\alpha, m_2+\alpha ]$ with $\alpha >0$ small enough}

We denote by $\phi_1,\dots,\phi_n$ an orthonormal family of  eigenfunctions of $\widetilde P_1$ with eigenvalues in the interval $[m_2-2\alpha ,\, m_2+2\alpha]$ and by $\psi_1,\dots, \psi_m$ an orthonormal family of  eigenfunctions of $\widetilde P_2$ with eigenvalues in the interval $[m_2,\, m_2 +2\alpha ]$ (in particular, we have $n,m=\Oo (h^{-1})$).
\vskip 0.2cm
For $\alpha \oplus \beta\in \C^n\oplus \C^m$, we set,
$$
R_-(\alpha\oplus\beta ):= \alpha\cdot\phi \oplus \beta\cdot\psi \in \Hh^\sharp,
$$
where we have used the notation,
$$
\alpha\cdot\phi:= \sum_{k=1}^n \alpha_k\phi_k\quad ;\quad \beta\cdot\psi := \sum_{\ell=1}^m \beta_\ell\psi_\ell.
$$
We also denote by $R_+$ the adjoint of $R_-$, given by,
$$
R_+(u\oplus v) = (\langle u,\phi_k\rangle)_{1\leq k\leq n} \oplus (\langle v,\psi_\ell\rangle)_{1\leq \ell\leq m}.
$$

Then, we consider the operator valued matrix,
\be
G(\lambda) = \left(
\begin{array}{cc}
\widetilde P^\sharp -\lambda &R_- \\
R_+ &  0
\end{array} \right),
\ee
on
\be
\Hh^\sharp  \oplus \C^n \oplus\C^m,
\ee
with domain $(H^2\cap H^1_0)(\R_+)\oplus (H^2\cap H^1_0)(\R_+)\oplus \C^n \oplus\C^m$,
and we want to know whether $G(\lambda)$ is invertible.

We denote by $\Pi_1$ and $\Pi_2$ the orthogonal projections on the subspaces $S_n$ and $S_m$ of $L^2(\R_+)$ spanned by the eigenfunctions $\phi_1, \dots, \phi_n$ and $\psi_1, \dots, \psi_m$ respectively, and we set,
$$
\Pi:= \left(
\begin{array}{cc}
\Pi_1 &0 \\
0 &  \Pi_2
\end{array} \right)\, ;\, \Pi^\bot =  \left(
\begin{array}{cc}
\Pi_1^\bot &0 \\
0 &  \Pi_2^\bot
\end{array} \right):= \left(
\begin{array}{cc}
1-\Pi_1 &0 \\
0 &  1-\Pi_2
\end{array} \right).
$$

We first prove,
\begin{lemma}\sl For $\lambda\in[m_2-\alpha, m_2+\alpha]$, the operator $\Pi^\bot \widetilde P^\sharp \Pi^\bot-\lambda =:\widetilde P^\sharp_\bot -\lambda$ is invertible on the range $\mbox{Ran }\Pi^\bot$ of $\Pi^\bot$, and its inverse $(\widetilde P^\sharp_\bot -\lambda)^{-1}$ is uniformly bounded.
\end{lemma}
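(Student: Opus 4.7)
The plan is to decompose $\widetilde P^\sharp = P_0 + h\mathcal{A}_0$ with $P_0 := \mathrm{diag}(\widetilde P_1, \widetilde P_2) = h^2 \D_R^2 + \widetilde{\mathcal M}_0$ diagonal, and treat $h\mathcal{A}_0$ as a small perturbation. Since $\Pi_1$ and $\Pi_2$ are spectral projections of the self-adjoint operators $\widetilde P_1$ and $\widetilde P_2$, the projection $\Pi$ commutes with $P_0$, so $\Pi^\bot P_0 \Pi^\bot$ is simply the restriction of the self-adjoint $P_0$ to the invariant subspace $\mathrm{Ran}\,\Pi^\bot$.

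First I would verify that $\Pi^\bot P_0 \Pi^\bot - \lambda$ is invertible on $\mathrm{Ran}\,\Pi^\bot$ with norm at most $1/\alpha$. The key point is that $\widetilde W_2 \geq m_2$ pointwise by construction (since $m_2$ is the minimum of $W_2$ on $[0, R_{1,M}]$ and the extension is chosen so that $\inf_{[R_{1,M}, +\infty)} \widetilde W_2 > m_2$), so $\mathrm{Sp}(\widetilde P_2) \subset [m_2, +\infty)$ and the eigenvalues of $\widetilde P_2$ excluded by $\Pi_2$ from $[m_2, m_2+2\alpha]$ must lie in $[m_2+2\alpha, +\infty)$, hence at distance $\geq \alpha$ from any $\lambda \in [m_2-\alpha, m_2+\alpha]$. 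For $\widetilde P_1$, the eigenvalues outside $[m_2-2\alpha, m_2+2\alpha]$ are manifestly at distance $\geq \alpha$ from the same range of $\lambda$. The spectral theorem applied to each diagonal block then yields the desired bound.

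The main step is to absorb the off-diagonal perturbation. For $f \in \mathrm{Ran}\,\Pi^\bot$ and $v = (\Pi^\bot P_0 \Pi^\bot - \lambda)^{-1} f$, the quadratic form identity (the Dirichlet condition at $R=0$ kills any boundary term) together with the uniform boundedness of $\widetilde{\mathcal M}_0$ yields
$$
\|h\D_R v\|^2 = \mathrm{Re}\langle v, f\rangle + \lambda\|v\|^2 - \langle v, \widetilde{\mathcal M}_0 v\rangle \leq C_\alpha \|f\|^2,
$$
where one uses $\|v\| \leq \|f\|/\alpha$. Since $\mathcal{A}_0$ is a first-order semiclassical differential operator with smooth bounded coefficients, one has $\|\mathcal{A}_0 u\| \leq C(\|h\D_R u\| + h\|u\|)$, hence $\|h\Pi^\bot \mathcal{A}_0 \Pi^\bot (\Pi^\bot P_0 \Pi^\bot - \lambda)^{-1}\| = \Oo(h)$. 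For $h$ small enough this norm is less than $1/2$, and the Neumann series
$$
(\widetilde P^\sharp_\bot - \lambda)^{-1} = \bigl(I + (\Pi^\bot P_0 \Pi^\bot - \lambda)^{-1} h\Pi^\bot \mathcal{A}_0 \Pi^\bot\bigr)^{-1} (\Pi^\bot P_0 \Pi^\bot - \lambda)^{-1}
$$
converges in operator norm and gives the uniform bound of order $\Oo(1/\alpha)$. The only delicate point I foresee is ensuring that $v$ lies in the form domain so that $\langle v, h^2 \D_R^2 v\rangle = \|h\D_R v\|^2$ is rigorous, but this follows from the invariance of $\mathrm{Ran}\,\Pi^\bot$ under the self-adjoint Dirichlet realization of $P_0$ and standard approximation by functions in its operator domain.
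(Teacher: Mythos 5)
Your proof is correct and follows essentially the paper's own strategy: both treat $\Pi^\bot\widetilde P^\sharp\Pi^\bot$ as the block-diagonal piece $\Pi^\bot P_0\Pi^\bot$ (whose restriction to ${\rm Ran}\,\Pi^\bot$ has a spectral gap of size $\sim\alpha$ around $[m_2-\alpha,m_2+\alpha]$, by construction of $\Pi_1,\Pi_2$) plus the off-diagonal coupling $h\mathcal A_0$, and both close the argument via a Neumann series after checking that the diagonal resolvent composed with the first-order operator $\mathcal A_0$ is $\Oo(h)$. The only notable difference is that you make the semiclassical elliptic gain explicit through an $H^1$-type quadratic-form bound (which suffices here since $\mathcal A_0$ is first order, and which in fact only needs $\widetilde{\mathcal M}_0$ bounded from below rather than bounded), whereas the paper cites the corresponding $h$-scaled $H^2$ resolvent estimate for $\widetilde P_j^\bot$ as standard.
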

\begin{proof}
We have,
$$
\Pi^\bot (\widetilde P^\sharp -\lambda)\Pi^\bot=\left(
\begin{array}{cc}
\Pi_1^\bot (\widetilde P_1-\lambda)\Pi_1^\bot &h\Pi_1^\bot A_0 \Pi_2^\bot \\
h\Pi_2^\bot A_0^* \Pi_1^\bot &  \Pi_2^\bot (\widetilde P_2-\lambda)\Pi_2^\bot
\end{array} \right),
$$
and, denoting by $\widetilde P_j^\bot$ the restriction of $\widetilde P_j$ to $\mbox{Ran }\Pi_j^\bot$, we know that $\widetilde P_j^\bot -\lambda$ is invertible, and it is standard to show that its inverse is uniformly bounded from $\mbox{Ran }\Pi_j^\bot$ to $\mbox{Ran }\Pi_j^\bot \cap (H^2\cap H_1^0)(\R_+)$, if one takes the $h$-dependent norm on $H^2(\R_+)$ defined by: $\Vert u\Vert^2_{H^2} := \Vert h^2\Delta u\Vert_{L^2}^2 + \Vert u\Vert_{L^2}^2$. As a consequence, $A_0 \Pi_2^\bot (\widetilde P_2^\bot -\lambda )^{-1}\Pi_2^\bot$ and $A_0 \Pi_1^\bot (\widetilde P_1^\bot -\lambda )^{-1}\Pi_1^\bot$ are uniformly bounded on $L^2(\R_+)$ (together with their adjoint), and we find,
$$
\Pi^\bot (\widetilde P^\sharp -\lambda)\Pi^\bot \left(
\begin{array}{cc}
(\widetilde P_1^\bot -\lambda)^{-1} &0 \\
0 &  (\widetilde P_2^\bot -\lambda)^{-1}
\end{array} \right)\Pi^\bot = \Pi^\bot (1+\Oo (h))\Pi^\bot ;
$$
$$
\Pi^\bot \left(
\begin{array}{cc}
(\widetilde P_1^\bot -\lambda)^{-1} &0 \\
0 &  (\widetilde P_2^\bot -\lambda)^{-1}
\end{array} \right)\Pi^\bot (\widetilde P^\sharp -\lambda)\Pi^\bot= \Pi^\bot (1+\Oo (h))\Pi^\bot .
$$
Thus, the result follows by taking the restriction to $\mbox{Ran }\Pi^\bot$, and by using the Neumann series in order to inverse $\Pi^\bot (1+\Oo (h))\Pi^\bot\left|_{\mbox{Ran }\Pi^\bot}\right. =(1+\Pi^\bot \Oo (h))\left|_{\mbox{Ran }\Pi^\bot}\right.$.
\end{proof}

Using the previous lemma, it is easy to show that $G(\lambda)$ is invertible, and to check that its inverse is given by,
$$
G(\lambda)^{-1} =\left(
\begin{array}{cc}
\Pi^\bot(\widetilde P^\sharp_\bot -\lambda)^{-1}\Pi^\bot & (1-\Pi^\bot(\widetilde P^\sharp_\bot -\lambda)^{-1}\Pi^\bot \widetilde P^\sharp)R_- \\
R_+(1-\widetilde P^\sharp \Pi^\bot(\widetilde P^\sharp_\bot -\lambda)^{-1}\Pi^\bot) &
\lambda-Q(\lambda)
\end{array} \right)
$$
with,
\bee
\label{defQ}
Q(\lambda):=R_+\widetilde P^\sharp(1 -\Pi^\bot(\widetilde P^\sharp_\bot -\lambda)^{-1}\Pi^\bot \widetilde P^\sharp)R_-.
\eee
In particular, $Q(\lambda)$ is an $(n+m)\times (n+m)$ matrix with $n,m=\Oo(h^{-1})$.

\begin {proposition}\sl
 \label{Lem8} The matrix $Q(\lambda)$ satisfies,
\be
Q(\lambda)=
\mbox{\rm diag } (E_{1,1}, \dots, E_{1,n}, E_{2,1}, \dots, E_{2,m})
+ S(\lambda),
\ee
where $E_{1,j}, E_{2,k} \in [m_2-2\alpha ,\, m_2+2\alpha]$ are the eigenvalues associated with $\phi_j$ and $\psi_k$, respectively, and
with,
$$
\Vert S(\lambda)\Vert  + \Vert \frac{d}{d\lambda} S(\lambda) \Vert =\Oo (h^2),
$$
in the sense of the norm of operators on $\C^{n+m}$, and uniformly with respect to $h>0$ small enough and $n,m=\Oo(h^{-1})$.
\end {proposition}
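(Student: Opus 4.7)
The plan is to expand $Q(\lambda)$ from its definition and use the eigenvalue equations for $\phi_j,\psi_k$ to reduce everything to controlled matrix elements of the off-diagonal coupling $h\mathcal A_0$. Writing $\widetilde P^\sharp=\mathrm{diag}(\widetilde P_1,\widetilde P_2)+h\mathcal A_0$, and using $\widetilde P_1\phi_j=E_{1,j}\phi_j$ and $\widetilde P_2\psi_k=E_{2,k}\psi_k$, one gets $\widetilde P^\sharp R_-=R_- E+h\mathcal A_0R_-$ with $E=\mathrm{diag}(E_{1,1},\ldots,E_{1,n},E_{2,1},\ldots,E_{2,m})$. Applying $R_+$ (with $R_+R_-=I$) gives $R_+\widetilde P^\sharp R_-=E+hR_+\mathcal A_0R_-$; applying $\Pi^\bot$ (with $\Pi^\bot R_-=0$ since $\mathrm{Ran}\,R_-\subset\mathrm{Ran}\,\Pi$) gives $\Pi^\bot\widetilde P^\sharp R_-=h\Pi^\bot\mathcal A_0R_-$, and by self-adjointness of $\widetilde P^\sharp$ and $\mathcal A_0$, also $R_+\widetilde P^\sharp\Pi^\bot=hR_+\mathcal A_0\Pi^\bot$. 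Substituting into the definition (\ref{defQ}) of $Q(\lambda)$ yields
\[
Q(\lambda)=E+hR_+\mathcal A_0R_-\;-\;h^2\,R_+\mathcal A_0\,\Pi^\bot(\widetilde P^\sharp_\bot-\lambda)^{-1}\Pi^\bot\mathcal A_0\,R_-.
\]

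The third (quadratic in $\mathcal A_0$) term is directly $\Oo(h^2)$ in operator norm on $\C^{n+m}$: the uniform bound $\|h\D_R\phi_j\|^2=E_{1,j}-\langle\widetilde W_1,|\phi_j|^2\rangle=\Oo(1)$ (and the analogue for $\psi_k$), together with a Gram-matrix argument, yields $\|\mathcal A_0R_-\|=\|R_+\mathcal A_0\|=\Oo(1)$, while $(\widetilde P^\sharp_\bot-\lambda)^{-1}$ is uniformly bounded by the preceding lemma. Differentiating in $\lambda$ only affects $(\widetilde P^\sharp_\bot-\lambda)^{-1}$, which becomes $(\widetilde P^\sharp_\bot-\lambda)^{-2}$, still uniformly bounded; so the corresponding contribution to $\|dS/d\lambda\|$ is again $\Oo(h^2)$.

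The main obstacle is showing that the middle term $hR_+\mathcal A_0R_-$ is in fact $\Oo(h^2)$, not merely the naive $\Oo(h)$. Writing it out gives a block-antidiagonal matrix whose nontrivial entries are $T_{jk}=\langle A_0\psi_k,\phi_j\rangle$. My plan is a microlocal, non-stationary-phase argument based on phase-space disjointness. Each $\phi_j$ is microlocalized on the classical energy shell $\Sigma_{1,j}=\{(R,R^*)\in T^*\R_+:(R^*)^2+\widetilde W_1(R)=E_{1,j}\}$, and similarly $\psi_k$ on $\Sigma_{2,k}$. By the non-degeneracy and ordering $\E_1<\E_2$ of the first two electronic levels (Hypothesis 2), one has $\widetilde W_2-\widetilde W_1\geq c_0>0$ uniformly on the set where $\zeta=1$ (which contains the whole classically allowed region at energies near $m_2$, since outside that set $\widetilde W_j=M/3$ with $M$ large). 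For $\alpha<c_0/4$, a point of $\Sigma_{1,j}\cap\Sigma_{2,k}$ would force $\widetilde W_2(R)-\widetilde W_1(R)=E_{2,k}-E_{1,j}\in[-4\alpha,4\alpha]$, contradiction; the two shells are therefore disjoint in phase space. Since $A_0=ha_0(R)\D_R$ is an $h$-differential operator with bounded symbol that preserves microsupport, $A_0\psi_k$ remains microlocalized on $\Sigma_{2,k}$, and standard Bargmann/FBI-transform non-stationary-phase estimates (as already used in the proof of Proposition \ref{compPmuPD} and in the semiclassical calculus of \cite{Ma1,MaSo2}) yield $|T_{jk}|=\Oo(h^\infty)$ uniformly in $j,k$, with enough decay to make $\|R_+\mathcal A_0R_-\|_{op}=\Oo(h^\infty)$ even though $n,m=\Oo(h^{-1})$. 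Combined with the previous step this gives $\|S(\lambda)\|=\Oo(h^2)$, and since $T_{jk}$ is independent of $\lambda$, the derivative estimate follows automatically.

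The hard part is precisely this last microlocal estimate: while the phase-space disjointness is clear from the uniform gap $\widetilde W_2-\widetilde W_1\geq c_0$, converting it into a uniform-in-$(j,k)$ $\Oo(h^\infty)$ bound on $T_{jk}$ requires careful handling of the turning points of $\widetilde W_1,\widetilde W_2$, where WKB degenerates into Airy-like behaviour and the classical shells touch the zero-momentum section. This is where Agmon exponential estimates and the FBI/Bargmann machinery already invoked in the paper must be deployed with care.
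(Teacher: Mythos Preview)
Your approach is essentially the same as the paper's. Both arguments expand $Q(\lambda)$ as $E+hR_+\mathcal A_0R_- - h^2R_+\mathcal A_0\Pi^\bot(\widetilde P^\sharp_\bot-\lambda)^{-1}\Pi^\bot\mathcal A_0R_-$, bound the quadratic term directly as $\Oo(h^2)$ via uniform boundedness of $\mathcal A_0R_-$ (the paper phrases this as boundedness of $\Pi_1A_0$ and $\Pi_2A_0^*$), and then reduce the linear term to the matrix elements $\langle A_0\psi_k,\phi_j\rangle$, $\langle A_0^*\phi_j,\psi_k\rangle$; the key step---that these are $\Oo(h^\infty)$ uniformly in $j,k$ because the energy shells $\Sigma_1,\Sigma_2$ are disjoint for $\alpha$ small---is exactly the content of the paper's Lemma~\ref{estmicro}, proved there by Agmon estimates near $R=0$ and $R=\infty$ combined with a parametrix construction that microlocalizes $\phi_j,\psi_k$ on their respective shells, which is precisely the machinery you anticipate in your last paragraph.
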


\begin {proof}
Since $R_+\Pi^\bot=0$ and $\Pi^\bot R_-=0$, by (\ref{defQ}), we have,
\bee
\label{Q(lambda)bis}
Q(\lambda)=R_+\widetilde P^\sharp R_- - R_+\Pi \widetilde P^\sharp \Pi^\bot(\widetilde P^\sharp_\bot -\lambda)^{-1}\Pi^\bot  \widetilde P^\sharp\Pi R_-,
\eee
\bee
\label{dQ(lambda)}
\frac{d}{d\lambda}Q(\lambda)= R_+\Pi \widetilde P^\sharp \Pi^\bot(\widetilde P^\sharp_\bot -\lambda)^{-2}\Pi^\bot  \widetilde P^\sharp\Pi R_-,
\eee
and, since $\Pi_j \widetilde P_j \Pi_j^\bot=0$ ($j=1,2$),
\bee
\label{commutPiP}
\Pi \widetilde P^\sharp \Pi^\bot =\left(
\begin{array}{cc}
0& h\Pi_1 A_0\Pi_2^\bot \\
h\Pi_2 A_0^*\Pi_1^\bot &0
\end{array}
\right).
\eee
Moreover, using that $\Vert \widetilde P_j\Pi_j\Vert_{{\mathcal L}(L^2)} \leq | m_2| +2\alpha$ and the ellipticity of $\widetilde P_j$, it is easy to see that both $A_0^*\Pi_1$ and $A_0\Pi_2$ are uniformly bounded, thus so are their adjoints $\Pi_1A_0$ and $\Pi_2 A_0^*$, and we deduce from (\ref{Q(lambda)bis})-(\ref{commutPiP}) (plus the  fact that $\Vert R_\pm\Vert \leq 1$),
\bee
\label{estQlambda}
Q(\lambda)=R_+\widetilde P^\sharp R_- +\Oo(h^2)\quad ; \quad \frac{d}{d\lambda}Q(\lambda)=\Oo (h^2).
\eee
Therefore, in order to complete the proof of Proposition \ref{Lem8}, it is enough to show,

\begin{lemma}\sl
\label{estmicro}
For all $N\geq 0$, there exists  a constant $C_N>0$ such that, for all $j\in\{ 1, \dots, n\}$ and $k\in\{ 1, \dots, m\}$, one has,
$$
|\langle A_0\phi_j,\psi_k\rangle| + |\langle A_0\psi_k,\phi_j\rangle| \leq C_N h^N.
$$
\end{lemma}
\begin{proof}
We use the equations,
\bee
\label{equations}
(\widetilde P_1-E_{1,j})\phi_j=0 \quad ; \quad (\widetilde P_2-E_{2, k})\psi_k=0.
\eee
At first, we observe that, for $R$ close enough to 0 (say, $0<R<r_0$), and $R$ large enough (say, $R>R_0$), both $W_1(R)-E_{1,j}$ and $W_{2}(R) - E_{2,k}$ remain greater than some fix constant $C>0$. Therefore, by standard Agmon estimates (see, e.g., \cite{Ma1}, Chapter 3, exercise 8), it is easy to show that, for $h$ small enough,
\bee
\label{agmon}
\Vert \phi_j\Vert_{H^s((0,r_0)\cup (R_0, +\infty))}+\Vert \psi_k\Vert_{H^s((0,r_0)\cup (R_0, +\infty))}\leq e^{-c_0/h},
\eee
where the positive constant $c_0$ does not depend on $j,k =\Oo (h^{-1})$, and $s\geq 0$ is arbitrary.
\vskip 0.2cm
For $\ell=1,2$, we set,
$$
\Sigma_\ell:= \{ (R,R^*) \in \R_+ \times \R\, ;\, \widetilde p_\ell(R,R^*)\in [m_2-2\alpha, m_2+2\alpha]\}
$$
(where we have used the notation $\widetilde p_\ell(R,R^*):= (R^*)^2 +\widetilde W_\ell(R)$), and we chose
$\chi_\ell\in C_0^\infty ((\frac12 r_0, 2R_0)\times \R)$, supported near  $\Sigma_\ell$, such that $\chi_\ell =1$ in a neighborhood of $\Sigma_\ell$. We also fix $\chi_0=\chi_0(R)\in C_0^\infty (\frac12 r_0,2R_0)$,  such that $\chi_0=1$ near $[r_0, R_0]$.
\vskip 0.2cm

Then, using standard pseudodifferential calculus,  for any $E\in [m_2-2\alpha, m_2+2\alpha]$ one can construct a symbol $q_\ell(E)=q_\ell(E, R,R^* ; h) \in S(\langle R^*\rangle^{-2})$, supported in  $(\frac12 r_0, 2R_0)\times \R$
and depending smoothly on $E$,  such that,
\bee
\label{parametrix}
q_\ell(E) \# (\widetilde p_\ell- E) (R,R^*) \sim \chi_0(R)(1-\chi_\ell(R,R^*)).
\eee
Here, $\#$ stands for the Weyl-composition of symbols, and the asymptotic equivalence holds in $S(1)$, uniformly with respect to  $E\in [m_2-2\alpha, m_2+2\alpha]$ (see, e.g., \cite{Ma1}). Then,  first multiplying  (\ref{equations}) by $\chi_0$, then, commuting $\chi_0$ and $\widetilde P_j$, and finally applying  the usual  Weyl-quantization of $q_\ell(E)$ (with $E=E_{1,j}, E_{2,k}$, respectively),
we deduce from (\ref{equations}), (\ref{agmon}) and (\ref{parametrix}),
\begin{eqnarray}
&& \Vert (1-\chi_1(R, hD_R))\chi_0\phi_j\Vert_{H^s} =\Oo (h^\infty); \label{estphij}\\
&& \Vert (1-\chi_2(R, hD_R))\chi_0\psi_k\Vert_{H^s} =\Oo (h^\infty)\label{estpsik}
\end{eqnarray}
uniformly with respect to $j,k$ (here, $\chi_\ell(R, hD_R)$ stands for the Weyl-quantization of $\chi_\ell$).

\vskip 0.2cm
Now, if $\alpha$ is taken sufficiently small, the sets $\Sigma_1$ and $\Sigma_2$ are disjoints, and thus the supports of $\chi_1$ and $\chi_2$ can be taken disjoints, too. Since they are also disjoints from $\mbox{\rm Supp }(1-\chi_0)$, one can find $\chi_3 \in C^\infty ( \R_+\times\R ; [0,1])$, supported in $(\frac12 r_0, 2R_0)\times \R$, such that the family  $\{\chi_1, \chi_2, \chi_3 \}$ forms a partition of unity on $\mbox{\rm Supp }\chi_0 \times \R$. In particular, on $L^2(\R_+)$, one has,
$$
1-\chi_0 (R) +\sum_{\ell=1}^3 \chi_\ell(R, hD_R)\chi_0(R) =I,
$$
and now, it is clear that, inserting this microlocal partition of unity in the products $\langle A_0\phi_j,\psi_k\rangle$ and $\langle A_0\psi_k,\phi_j\rangle$, the estimates (\ref{agmon}), (\ref{estphij}) and (\ref{estpsik}) give the required result.
\end{proof}

\begin{remark}\sl
Actually, following more precisely the construction of $\widetilde H_e^\mu$ made in Section 4, one can prove that the functions $\widetilde W_j$ ($j=1,2$) and $a_0$ depend in an analytic way of $R$ in a neighborhood of the relevant classically allowed region $\{ \widetilde W_1(R) \leq m_2+2\alpha\}$. As a consequence, one can use the standard microlocal analytic techniques in this region (see, e.g., \cite{Sj, Ma1}), and obtain the existence of a constant $c_0>0$ (independent of $j,k$), such that,
$$
|\langle A_0\phi_j,\psi_k\rangle| + |\langle A_0\psi_k,\phi_j\rangle| \leq e^{-c_0/h}.
$$
\end{remark}
{\it Completion of the proof of the proposition}: Since the matrix,
$$
R_+\widetilde P^\sharp R_- - \mbox{\rm diag } (E_{1,1}, \dots, E_{2,m}) =
\left(
\begin{array}{cc}
0 & (\langle A_0\psi_k,\phi_j) \\
 (\langle A_0^*\phi_j,\psi_k\rangle)&  0
\end{array} \right)
$$
is of size $\Oo (h^{-1})$, Lemma \ref{estmicro} implies that it has a norm $\Oo (h^\infty)$ on $\C^{n+m}$, uniformly with respect to $n,m$. Thus, Proposition \ref{Lem8} is a consequence of (\ref{estQlambda}).
\end {proof}

By the Min-Max principle, it results from Proposition \ref{Lem8} that, for $\lambda\in [m_2-\alpha, m_2+\alpha]$, the eigenvalues $g_1(\lambda), \dots , g_{m+n}(\lambda)$ of $Q(\lambda)$ satisfy,
\begin{eqnarray}
&& \{ g_1(\lambda), \dots, g_{m+n}(\lambda)\} =\{ E_{1,1}, \dots, E_{1,n}, E_{2,1}, \dots, E_{2,m}\} +\Oo (h^2);\nonumber \\
&& \lambda\mapsto g_\ell (\lambda) \mbox{ is Lipschitz continuous } (\ell=1, \dots n+m); \nonumber\\
\label{derivgell}
&&\left| \frac{dg_\ell}{d\lambda} \right| =\Oo (h^2)\,\, a.e.\,\,  (k=1, \dots n+m).
\end{eqnarray}
Note that the values $E_{1,1}, \dots, E_{1,n}$ are at a distance of order $h$ from each other, and the same is true for the values $E_{2,1}, \dots, E_{2,m}$. So the only problem that may appear in the computation of $g_\ell(\lambda)$ is when, along some sequence $h=h_j\rightarrow 0_+$, two values $E_{1,j}$ and $E_{2,k}$ become closer than $\Oo(h^2)$. But, in that case, the two corresponding values of $g_\ell(\lambda)$ are given by,
\bee
\label{eigenmat2x2}
g_\ell(\lambda) = \frac12\left(E_{1,j}+E_{2,k} +h^2r_1 \pm \sqrt{(E_{1,j}-E_{2,k}+h^2r_2)^2+h^4r_3^2}\right),
\eee
with $r_t =r_t(\lambda)$ smooth, $r_t=\Oo (1)$, $dr_t/d\lambda =\Oo (1)$ ($t=1,2,3$). Thus, actually, a correct ($h$-depending) indexing of the $g_\ell$'s make them smooth functions of $\lambda$, and then (\ref{derivgell}) becomes true everywhere.

\vskip 0.2cm
Anyway, (\ref{derivgell}) is enough to insure that all the values of $\lambda\in [m_2-\alpha, m_2+\alpha]$ such that $\lambda\in \mbox{\rm Sp }Q(\lambda)$ verify,
$$
\mbox{\rm dist }(\lambda, \{ E_{1,1}, \dots, E_{1,n}, E_{2,1}, \dots, E_{2,m}\} )=\Oo (h^2),
$$
and, conversely, at any $E\in \{ E_{1,1}, \dots, E_{1,n}, E_{2,1}, \dots, E_{2,m}\} \cap [m_2-\alpha +Ch^2, m_2+\alpha -Ch^2]$ ($C>0$ large enough), can be associated a unique  $\lambda\in [m_2-\alpha , m_2+\alpha ]$ such that $\lambda\in \mbox{\rm Sp }Q(\lambda)$.
\vskip 0.2cm
Finally, using the fact that, by construction, the eigenvalues of $\widetilde P^\sharp$ that lie in  $[m_2-\alpha , m_2+\alpha ]$ coincide with the solutions there of $\lambda\in \mbox{\rm Sp }Q(\lambda)$, and summing up with the results of Subsection \ref{sub511} and Proposition \ref{compPmuPD}, we finally obtain,

\begin{theorem}
For $h >0$ small enough  the resonances of $P^\sharp$ with real part in $[m_1,m_2+\alpha]$ and with imaginary part $<< |h\ln h|$, coincide, up to $\Oo (h^2)$ error-terms, with  eigenvalues of the Dirichlet realizations of $P_{1}$ and $P_{2}$ on $(0,R_{1,M})$, where $R_{1,M}>0$ is the point where $W_1$ admits a local maximum with value greater than $m_2$.
\end{theorem}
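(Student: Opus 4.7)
The plan is to assemble the pieces already built in Sections 3--5 of the excerpt. First, I invoke Proposition \ref{compPmuPD} to identify the resonances of $P^\sharp$ (that is, the eigenvalues of the distorted operator $P_\mu^\sharp$) whose real part lies in $[m_1,m_2+\alpha]$ and whose imaginary part is $\ll h|\ln h|$, with the eigenvalues of the self-adjoint Dirichlet realization $P_D^\sharp$ of $P_0^\sharp$ on $[0,R_{1,M}]$, up to an error $\Oo(e^{-\delta_0/h})$. Next, the filled-well construction producing $\widetilde P^\sharp$---in which $\widetilde W_j=W_j$ on $[0,R_{1,M}]$ and $\widetilde W_j$ is raised strictly above $m_2$ on $[R_{1,M},+\infty)$---allows the same Agmon-type argument to equate the spectra of $P_D^\sharp$ and $\widetilde P^\sharp$ in $[m_1,m_2+\alpha+a(h)]$ modulo exponentially small errors. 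Applying the same well-filling procedure to the scalar operators $P_j$ identifies their Dirichlet eigenvalues on $(0,R_{1,M})$ with the eigenvalues $E_{1,k},E_{2,k}$ of $\widetilde P_j$ on $\R_+$, again up to $\Oo(e^{-\delta/h})$. Thus the proof reduces to showing that every eigenvalue of $\widetilde P^\sharp$ in $[m_1,m_2+\alpha]$ lies within $\Oo(h^2)$ of $\mathrm{Sp}(\widetilde P_1)\cup\mathrm{Sp}(\widetilde P_2)$, and conversely.

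For this spectral matching I split the analysis as in Subsections 5.1--5.2. In Case 1, $\lambda\le m_2-\alpha$, the operator $\widetilde P_2-\lambda$ is invertible with uniformly bounded inverse, and a Schur-complement reduction of the eigenvalue equation for $\widetilde P^\sharp$ produces the effective operator $\hat P_1(\lambda):=\widetilde P_1-h^2 A_0(\widetilde P_2-\lambda)^{-1}A_0^*$. Standard regular perturbation theory on circles of radius $\delta h$ around each $E_{1,k}$ yields $\widetilde f_k(\lambda)=E_{1,k}+\Oo(h^2)$ together with $d\widetilde f_k/d\lambda=\Oo(h)$, so the implicit function theorem applied to $\lambda=\widetilde f_k(\lambda)$ gives an eigenvalue $\lambda_k=E_{1,k}+\Oo(h^2)$. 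In Case 2, $\lambda\in[m_2-\alpha,m_2+\alpha]$, the Schur reduction breaks down because $\widetilde P_2-\lambda$ now carries small eigenvalues; instead I use the Grushin problem $G(\lambda)$ built from orthonormal bases of the relevant eigenspaces of $\widetilde P_1$ and $\widetilde P_2$. The invertibility of the orthogonal block, combined with Proposition \ref{Lem8}, produces a finite effective matrix $Q(\lambda)=\mathrm{diag}(E_{1,1},\dots,E_{2,m})+S(\lambda)$ with $\|S(\lambda)\|+\|S'(\lambda)\|=\Oo(h^2)$, and the eigenvalues of $\widetilde P^\sharp$ in this range are precisely the $\lambda$'s satisfying $\lambda\in\mathrm{Sp}\,Q(\lambda)$.

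The last step is then to extract from the equations $\lambda=g_\ell(\lambda)$, $\ell=1,\dots,n+m$, a bijection between their solutions and $\{E_{1,j},E_{2,k}\}\cap[m_2-\alpha+Ch^2,m_2+\alpha-Ch^2]$. The a.e.\ bound $|dg_\ell/d\lambda|=\Oo(h^2)$ turns each fixed-point equation into a contraction on the $\Oo(h^2)$-scale, yielding exactly one solution $\lambda=E+\Oo(h^2)$ near each eigenvalue $E$. The main obstacle is the loss of smoothness in the naive labeling of $g_\ell$ when an $E_{1,j}$ and an $E_{2,k}$ happen to lie within $\Oo(h^2)$ of one another along a subsequence $h=h_j\to 0_+$; the explicit two-by-two formula \eqref{eigenmat2x2} shows that such a pair of branches can be relabeled in an $h$-dependent way so as to remain smooth, which preserves the derivative estimate. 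Chaining this matching with the reductions of the first paragraph then delivers the $\Oo(h^2)$ correspondence between resonances of $P^\sharp$ in $[m_1,m_2+\alpha]$ with sufficiently small imaginary part and Dirichlet eigenvalues of $P_1$ and $P_2$ on $(0,R_{1,M})$.
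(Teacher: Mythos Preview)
Your proposal is correct and follows essentially the same approach as the paper: the chain $P_\mu^\sharp \to P_D^\sharp \to \widetilde P^\sharp$ via Proposition~\ref{compPmuPD} and the filled-well argument, then the two-case analysis (Schur complement for $\lambda\le m_2-\alpha$, Grushin reduction with Proposition~\ref{Lem8} for $\lambda\in[m_2-\alpha,m_2+\alpha]$), and finally the fixed-point extraction using the $\Oo(h^2)$ derivative bound together with the explicit two-by-two formula~\eqref{eigenmat2x2} at near-crossings. Your explicit remark that the eigenvalues $E_{j,k}$ of $\widetilde P_j$ coincide, up to exponentially small errors, with the Dirichlet eigenvalues of $P_j$ on $(0,R_{1,M})$ makes explicit a step the paper leaves implicit.
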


\section {Comparison between the spectrum of the operators $P_\mu^\sharp$ and $\widetilde H_\mu^0$}
\label{secV}
Here we prove,
\begin{proposition}\sl Let $\alpha >0$ fixed small enough, and
let ${\mathcal J}\subset (0,1]$, with $0\in\overline{\mathcal J}$, such that there exists $\delta >0$ such that,
\begin{eqnarray}
{\rm Sp} (P^\sharp_D)\cap [m_2+\alpha-2\delta h, m_2+\alpha+2\delta h] =\emptyset.
\end{eqnarray}
Set,
$$
\Omega (h) := \{ z\in\C\,; \, {\rm dist}(\re z, [m_1,m_2+\alpha])<\delta h, \, |\im z| < C^{-1}h\ln\frac1{h}\},
$$
with $C>0$ a large enough constant. Then, there exists a bijection,
$$
b\, :\, {\rm Sp} (P_\mu^\sharp)\cap \Omega (h) \rightarrow {\rm Sp} (\widetilde H_\mu^0)\cap \Omega (h),
$$
such that,
$$
b(\lambda )-\lambda ={\mathcal O}(h^2),
$$
uniformly for $h\in {\mathcal J}$.
\end{proposition}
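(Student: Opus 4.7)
The plan is to exploit the Feshbach-Grushin reduction already set up in Corollary \ref{CorThm3} and identify the operator family $\widetilde P_\mu^0(z)$ with $P_\mu^\sharp$ up to a $z$-dependent but $\mathcal{O}(h^2)$ bounded perturbation. By Corollary \ref{CorThm3}, for $z \in \Omega(h) \subset D_{\lambda_0}$ we have the equivalence
\[
z\in {\rm Sp}(\widetilde H_\mu^0)\ \Longleftrightarrow\ z\in {\rm Sp}(\widetilde P_\mu^0(z)).
\]
Comparing the defining formula (\ref{Pmu}) for $\widetilde P_\mu(z)$ with the definition (\ref{PmuBis}) of $P_\mu^\sharp$, and restricting to $\mathrm{Ker}(\LL_\RR)$ (which, after the standard radial reduction $\psi\mapsto R\psi$, identifies the radial part of $-h^2\Ss_\mu\Delta_\RR\Ss_\mu^{-1}$ with $h^2\Ss_\mu\D_R^2\Ss_\mu^{-1}$ on $\mathcal{H}^\sharp$), we obtain the identity
\[
\widetilde P_\mu^0(z) \;=\; P_\mu^\sharp \;+\; h^2 \mathcal{B}_\mu^0(R,h\D_R;z,h),
\]
where $\mathcal{B}_\mu^0$ is the restriction of $\mathcal{B}_\mu$ to $\mathrm{Ker}(\LL_\RR)$. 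Estimate (\ref{estbmu}) guarantees that $\mathcal{B}_\mu^0(z,h)$ is a bounded $h$-admissible pseudodifferential operator on $\mathcal{H}^\sharp$, uniformly in $z \in \Omega(h)$ (for $\Omega(h)$ contained in a suitable fixed complex neighborhood of $\lambda_0$) and holomorphic in $z$.

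Next I would combine this with standard analytic perturbation theory. Fix $\mu_0 \in {\rm Sp}(P_\mu^\sharp) \cap \Omega(h)$. By Proposition \ref{compPmuPD} (and Remark \ref{remecarts}), eigenvalues of $P_\mu^\sharp$ in $\Omega(h)$ are separated from the rest of the spectrum by at least $\delta h$; choose a small complex loop $\gamma_0$ centered at $\mu_0$ of radius $\delta h /2$ that encloses only the corresponding group of eigenvalues of $P_\mu^\sharp$. Since $\|h^2 \mathcal{B}_\mu^0(z,h)\| = \mathcal{O}(h^2) \ll \delta h /2$, the spectral projector
\[
\Pi(z) := \frac{1}{2\pi i}\int_{\gamma_0} (\zeta - \widetilde P_\mu^0(z))^{-1}\, d\zeta
\]
is well-defined, of the same finite rank $r$ as the unperturbed one, and holomorphic in $z$. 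Let $\lambda_1(z),\dots,\lambda_r(z)$ denote the eigenvalues of $\widetilde P_\mu^0(z)\big|_{\mathrm{Ran}\,\Pi(z)}$ (indexed analytically after an eventual $h$-dependent relabelling as in (\ref{eigenmat2x2})). Standard estimates give $\lambda_j(z) = \mu_0 + \mathcal{O}(h^2)$ and $d\lambda_j/dz = \mathcal{O}(h^2)$ uniformly in $z$ inside the loop $\gamma_0$ (by the resolvent formula, since the $z$-derivative produces an extra factor $h^2$ from $\partial_z\mathcal{B}_\mu^0$).

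Step three is to solve the fixed-point equation $z \in {\rm Sp}(\widetilde P_\mu^0(z))$, i.e.\ $z = \lambda_j(z)$ for some $j$, by the implicit function theorem. Since $\tfrac{d}{dz}(z - \lambda_j(z)) = 1 - \mathcal{O}(h^2)$, for each $j$ there is a unique $z_j = \mu_0 + \mathcal{O}(h^2)$ solving the equation, and this gives an eigenvalue of $\widetilde H_\mu^0$ by Corollary \ref{CorThm3}. Repeating this construction at every eigenvalue of $P_\mu^\sharp$ in $\Omega(h)$ (respecting multiplicity via the group projector) produces the map $b$ one way. Conversely, if $z^* \in {\rm Sp}(\widetilde H_\mu^0) \cap \Omega(h)$, then $z^*$ is an eigenvalue of the self-adjoint-like operator $P_\mu^\sharp + h^2 \mathcal{B}_\mu^0(z^*)$, which by the same resolvent perturbation argument must lie within $\mathcal{O}(h^2)$ of some eigenvalue of $P_\mu^\sharp$, giving the inverse assignment. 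A count by spectral projectors shows the two maps are mutual inverses.

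The main obstacle is the book-keeping of multiplicities when two eigenvalues of $P_\mu^\sharp$ (typically one from $\widetilde P_1$ and one from $\widetilde P_2$, as analysed in Section~4) come $\mathcal{O}(h^2)$-close or cross along some subsequence $h\to 0$: in such cases one cannot track each $\lambda_j(z)$ individually as a smooth function of $z$ without the $h$-dependent relabelling captured by (\ref{eigenmat2x2}), but the $r\times r$ Grushin-type block $\widetilde P_\mu^0(z)\big|_{\mathrm{Ran}\Pi(z)}$ remains holomorphic in $z$, and the implicit function theorem applied to $\det(z - \widetilde P_\mu^0(z)\big|_{\mathrm{Ran}\Pi(z)}) = 0$ yields exactly $r$ solutions (counted with multiplicity) in a disk of radius $\mathcal{O}(h^2)$ around $\mu_0$. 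This is what ensures the bijection $b$ in the proposition.
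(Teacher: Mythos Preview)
Your overall strategy---reduce via Corollary \ref{CorThm3}, view $\widetilde P_\mu^0(z)$ as an $\mathcal{O}(h^2)$ perturbation, then run a fixed-point/projector argument---is the right one and matches the paper's plan. However, two genuine gaps remain.

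First, the identity $\widetilde P_\mu^0(z)=P_\mu^\sharp+h^2\mathcal{B}_\mu^0$ is \emph{false} as stated. The potential $\mathcal{M}_\mu(R)$ entering (\ref{Pmu}) is only given by (\ref{Mmu}) for $R>3/M$; for small $R$ it is produced by the regularization of Section~3 and merely satisfies (\ref{StimaBasso}). By contrast, $P_\mu^\sharp$ is built from the \emph{explicit} potential $\mathcal{M}_\mu^0(R)=\mathrm{diag}(W_{1,\mu},W_{2,\mu})$ with $W_{j,\mu}=\zeta W_j\circ\phi_\mu+\tfrac{M}{3}(1-\zeta)$. These differ by an $\mathcal{O}(1)$ term supported in $\{R<3/M\}$, so the radial reduction does not turn one into the other. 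The paper deals with this by inserting the intermediate operator $Q_\mu^0$ (the restriction to $\mathrm{Ker}(\LL_\RR)$ of $-h^2\Ss_\mu\Delta_\RR\Ss_\mu^{-1}+\mathcal{M}_\mu+h\mathcal{A}_\mu$, which \emph{is} exactly $\widetilde P_\mu^0(z)-h^2\mathcal{B}_\mu^0$), then proving a separate lemma that $\mathrm{Sp}(Q_\mu^0)$ and $\mathrm{Sp}(P_\mu^\sharp)$ agree in $\Omega(h)$ up to $\mathcal{O}(e^{-\delta_0/h})$, by rerunning the Agmon/filling-the-well argument of Proposition \ref{compPmuPD}. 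You need some such step.

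Second, your perturbation step hinges on ``$\|h^2\mathcal{B}_\mu^0\|=\mathcal{O}(h^2)\ll\delta h/2$'', which tacitly assumes $\|(\zeta-P_\mu^\sharp)^{-1}\|=\mathcal{O}(h^{-1})$ on your contour $\gamma_0$. But $P_\mu^\sharp$ is not self-adjoint (it is complex-distorted), so an $\mathcal{O}(h)$ spectral gap does \emph{not} by itself yield an $\mathcal{O}(h^{-1})$ resolvent bound in $L^2$. The paper obtains the needed bound $\|(Q_\mu^0-\lambda)^{-1}\|=\mathcal{O}(h^{-1})$ only in the weighted Bargmann-transform space $\mathcal{K}_0$ (built from the norm $\|h^{-f_0}T\cdot\|$ introduced in the proof of Proposition \ref{compPmuPD}), and must then also check that $\mathcal{B}_\mu^0$ is uniformly bounded on that same space---a nontrivial point handled via Calder\'on--Vaillancourt and the holomorphy of the symbol in $R^*$. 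Without working in this weighted space, the Neumann series for $(\widetilde P_\mu^0(z)-\lambda)^{-1}$ and hence your projector $\Pi(z)$ are not justified.
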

\begin{remark}\sl
As before, by slightly moving the parameter $\alpha$, one can actually reach all the values of $h>0$ small enough.
\end{remark}
\begin{proof}
By Corollary \ref{CorThm3}, it is enough to prove that, for any $z\in\Omega (h)$, there exists a bijection
$$
b_z\, :\, {\rm Sp} (P_\mu^\sharp)\cap \Omega (h) \rightarrow {\rm Sp} (\widetilde P_\mu^0(z))\cap \Omega (h),
$$
such that,
$$
b_z(\lambda )-\lambda ={\mathcal O}(h^2),
$$
uniformly for $h\in {\mathcal J}$ and $z\in \Omega (h)$.

By (\ref{Pmu}) we have,
\bee
\label{PmuQmu}
\widetilde P_\mu^0(z) = Q^0_\mu + h^2{\mathcal B}_\mu^0,
\eee
where $B_\mu^0$ stands for the restriction of $ {\mathcal B}_\mu(\RR,h \D_\RR ; z,h )$ to $Ker ({\mathbf L_R})$, and where we have set,
$$
Q_\mu^0:=Q_\mu\left\vert_{Ker ({\mathbf L_R})}\right.,
$$
$$
Q_\mu:=-h ^2 \Ss_\mu \Delta_\RR \Ss_\mu^{-1} +{\mathcal M}_\mu(R)+h {\mathcal A}_\mu (\RR,h \D_\RR).
$$
By passing in polar coordinates, and by conjugating $Q_\mu$ with the transform
$$
L^2( \R_+; R^2 dR)\otimes L^2(S^2)\ni \psi \mapsto R\psi \in L^2(\R_+; dR)\otimes L^2(S^2),
$$
 we see that $Q_\mu^0$ is unitarily equivalent to,
 $$
 \widetilde Q_\mu^0:= h ^2 \Ss_\mu \D_R^2 \Ss_\mu^{-1} +{\mathcal M}_\mu (R)+h {\mathcal A}_\mu^0 (R,h \D_R)
 $$
 on $L^2(\R_+; dR)$ with Dirichlet boundary condition at $R=0$ (the notations are those of the previous section, in particular (\ref{abusenot})).

We first have,
\begin{lemma}\sl
There exist $\delta_0>0$ and  a bijection,
$$
b_0\, :\, {\rm Sp} (P_\mu^\sharp)\cap \Omega (h) \rightarrow {\rm Sp} (Q_\mu^0)\cap \Omega (h),
$$
such that,
$$
b_0(\lambda )-\lambda ={\mathcal O}(e^{-\delta_0/h}),
$$
uniformly for $h\in {\mathcal J}$.
\end{lemma}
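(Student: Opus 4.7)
The two operators differ in a harmless region. Indeed, on $\{R\geq 3/M\}$ one has $\zeta\equiv 1$, so that the diagonal entries of $\mathcal M_\mu^0$ agree with those of $\mathcal M_\mu$ in (\ref{eq27bis}); the first-order parts $h\mathcal A_\mu^0$ and the kinetic parts $h^2\Ss_\mu\D_R^2\Ss_\mu^{-1}$ are identical by construction. Hence $V:=P_\mu^\sharp-\widetilde Q_\mu^0$ is a bounded, matrix-valued multiplication operator supported in $\{R\leq 3/M\}$. On this region $\re\mathcal M_\mu^0\geq M/4$ (from the definition of $W_{j,\mu}$) and $\re\mathcal M_\mu\geq M/4+\inf_R\E_1(R)$ by (\ref{StimaBasso}), and we are free to take $M\gg m_2+\alpha$. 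So the two operators only disagree on a region where both have real part much larger than any $\re z$ with $z\in\Omega(h)$.

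\textbf{Strategy.} I would follow the Helffer--Sjöstrand resolvent-gluing scheme already exploited in the proof of Proposition~\ref{compPmuPD}. Choose cutoffs $\chi_1,\chi_2\in C^\infty(\R_+;[0,1])$ with $\chi_j=1$ on $\{R\geq 4/M\}$, $\chi_j=0$ on $\{R\leq 7/(2M)\}$, and $\chi_1=1$ on $\mathrm{supp}\,\chi_2$. Introduce an auxiliary comparison operator $Q_\mu^\sharp:=P_\mu^\sharp+F$, where $F\in C_0^\infty$ is supported in $\{R\leq 3/M\}$ and takes $\re$-value large enough so that $Q_\mu^\sharp-z$ is boundedly invertible for $z\in\Omega(h)$. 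Then, in the spirit of (\ref{resolvapp}), set
\[
R_\mu^\sharp(z):=\chi_1(\widetilde Q_\mu^0-z)^{-1}\chi_2+(Q_\mu^\sharp-z)^{-1}(1-\chi_2).
\]
A direct computation yields $(P_\mu^\sharp-z)R_\mu^\sharp(z)=I+K_\mu(z)$, where $K_\mu(z)$ is built from the commutators $[h^2\Ss_\mu\D_R^2\Ss_\mu^{-1},\chi_j]$ and $[h\mathcal A_\mu^0,\chi_j]$, both supported in the transition zone $\{7/(2M)\leq R\leq 4/M\}$, sandwiched with the auxiliary resolvents.

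\textbf{Exponential smallness of $K_\mu(z)$.} This is the crux. On the support of $\nabla\chi_j$, the potentials $W_{j,\mu}$ and the diagonal of $\mathcal M_\mu$ have real parts of order $M$, while $\re z\leq m_2+\alpha+O(h|\ln h|)$. Standard semiclassical Agmon estimates applied to $\widetilde Q_\mu^0$ and to $Q_\mu^\sharp$ therefore provide exponential decay with rate $\geq c/h$ across this transition zone, for any quasi-mode whose spectral parameter lies in $\Omega(h)$. The complex distortion is harmless because $\mu=O(h|\ln h|)$ only perturbs the Agmon weight by $O(|\ln h|)$, leaving the bound $\|K_\mu(z)\|=\Oo(e^{-\delta/h})$ intact, uniformly for $z\in\Omega(h)$ and $h\in\mathcal J$ (this is exactly the same mechanism as in the proof of Proposition~\ref{compPmuPD}). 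From this, by a Neumann-series argument as in (\ref{resolvNeumann}) valid away from an $\Oo(e^{-\delta/h})$ neighborhood of $\mathrm{Sp}(\widetilde Q_\mu^0)\cap\Omega(h)$, one obtains $(P_\mu^\sharp-z)^{-1}=R_\mu^\sharp(z)+\Oo(e^{-\delta_0/h})$.

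\textbf{Bijection of spectra.} Integrating $(z-P_\mu^\sharp)^{-1}$ on a simple loop $\gamma\subset\Omega(h)$ encircling a cluster of eigenvalues of $\widetilde Q_\mu^0$ and staying at distance $\gtrsim e^{-\delta_0/h}$ from $\mathrm{Sp}(\widetilde Q_\mu^0)$, the formula above gives, as in (\ref{compDir}),
\[
\Pi_{P_\mu^\sharp,\gamma}=\frac{1}{2i\pi}\int_\gamma\chi_1(z-\widetilde Q_\mu^0)^{-1}\chi_2\,dz+\Oo(e^{-\delta_0/h}),
\]
and the corresponding spectral projector of $\widetilde Q_\mu^0$ differs from the full integral by the Agmon-exponentially small contribution of the cutoffs. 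Standard finite-dimensional perturbation theory on matrices of the restricted operators in the ranges of these projectors then yields the bijection $b_0$ with $b_0(\lambda)-\lambda=\Oo(e^{-\delta_0/h})$. The main technical obstacle is ensuring that the Agmon estimates are robust under the complex distortion and give a rate $c>0$ independent of the index of the eigenvalue inside $\Omega(h)$; this is handled by the same argument as in Sect.~5, because the gap between $\Omega(h)$ and the large-$M$ threshold is of order $1$.
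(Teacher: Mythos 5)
Your overall diagnosis is sound: the two operators differ only by a bounded multiplication supported in $\{R\le 3/M\}$, where both real parts are of size $M$, and the comparison should be exponentially small. But the specific Helffer--Sj\"ostrand gluing you set up has a structural flaw that cannot be patched without changing the construction.

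The problem is your auxiliary operator $Q_\mu^\sharp := P_\mu^\sharp + F$ with $F$ supported in $\{R\le 3/M\}$. In that region the potential $W_{j,\mu}$ already equals $M/3$ up to $\Oo(1)$ terms (this is precisely (\ref{StimaBasso}) and the analogous bound for $\mathcal{M}_\mu^0$), so adding a further bump $F$ there has essentially no effect. In particular, $Q_\mu^\sharp$ retains the potential well of $W_1$ near $R_{1,m}$ and the barrier at $R_{1,M}$, both of which sit in $\{R\gg 3/M\}$ when $M$ is taken large, and therefore $Q_\mu^\sharp$ has exactly the same resonances (up to an exponentially small perturbation) as $P_\mu^\sharp$ in $\Omega(h)$. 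Consequently $(Q_\mu^\sharp - z)^{-1}$ is \emph{not} uniformly bounded for $z\in\Omega(h)$ near those resonances, so the approximate resolvent $R_\mu^\sharp(z)=\chi_1(\widetilde Q_\mu^0-z)^{-1}\chi_2+(Q_\mu^\sharp-z)^{-1}(1-\chi_2)$ is not well-defined precisely where you need it, and the Neumann-series inversion $(P_\mu^\sharp-z)R_\mu^\sharp(z)=I+K_\mu(z)$ collapses. The role of the ``filled'' operator in a gluing argument is to \emph{remove} the resonating well so that its resolvent is globally bounded on $\Omega(h)$; this requires $F$ to be supported in the well $(0,R_{1,M})$ with $\inf(W_1+F)>m_2+\alpha$, as in the proof of Proposition \ref{compPmuPD}, not at small $R$. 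But then your cutoffs $\chi_1,\chi_2$ (transitioning at $[7/(2M),4/M]$) no longer excise the well from the ``exterior'' piece, so the construction needs rethinking from scratch.

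The paper avoids this entirely by never comparing $P_\mu^\sharp$ and $\widetilde Q_\mu^0$ directly: it observes (using (\ref{StimaBasso})) that the proof of Proposition \ref{compPmuPD} --- which \emph{does} use a well-filling $F\in C_0^\infty((0,R_{1,M});\R_+)$ --- applies verbatim to $\widetilde Q_\mu^0$, so that both spectra are exponentially close to that of the self-adjoint Dirichlet operator $P_D^\sharp$ on $[0,R_{1,M}]$; the lemma then follows by triangulation, together with the unitary equivalence of $\widetilde Q_\mu^0$ and $Q_\mu^0$. A correct direct comparison would be possible (e.g.\ using the Dirichlet realization of $P_\mu^\sharp$ on $(0,4/M)$, whose spectrum is $\ge M/4-\Oo(1)$ and hence far from $\Omega(h)$, as the small-$R$ block of the gluing), but your $Q_\mu^\sharp$ does not serve this purpose.
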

\begin{proof} This is just a slight modification of the proof of Proposition \ref{compPmuPD}. Indeed, using (\ref{StimaBasso}), we see that the proof can be repeated exactly in the same way by substituting $\widetilde Q_\mu^0$ to $P_\mu^\sharp$. Thus, both the spectra of $\widetilde Q_\mu^0$ and $P_\mu^\sharp$ are close to that of $P_D^\sharp$ up to  exponentially small error terms, and since $\widetilde Q_\mu^0$ and $ Q_\mu^0$ have the same spectrum, the result follows.
\end{proof}
Therefore, it only remains to compare the spectra of $\widetilde P_\mu^0(z)$ and $Q_\mu^0$.
\vskip 0.2cm
For any fixed integer $k\geq 1$, let us denote by $E_k=E_k(h)$ the $k$-th eigenvalue of $P_{2}$.
By the previous lemma and Remark \ref{remecarts}, we see that, if we fix $\delta >0$ sufficiently small, then,  the disc $\{ \lambda\in\C\,;\, |\lambda -E_k(h)|\leq \delta h\}$ contains at most two eigenvalues of $Q_\mu^0$ (for $h>0$ small enough) and, on the set ${\mathcal J}_k$ of those values of $h$ for which it contains two eigenvalues, the domain $\{ \lambda\in\C\,;\, \delta h< |\lambda -E_k(h)|\leq 2\delta h\}$ does not meet ${\rm Sp} (Q_\mu^0)$.

Then, for $k\geq 1$ we define,
\bee
\label{petitcontour}
\gamma_k (h) =\left\{\begin{array}{ll}
\{ \lambda\in\C\,;\, |\lambda -E_k(h)|=3 \delta h/2\} \,\mbox{ if } \,h\in {\mathcal J}_k;\\
\{ \lambda\in\C\,;\, |\lambda -E_k(h)|= \delta h/2\}
\,\mbox{ if }\, h\in {\mathcal J}\backslash {\mathcal J}_k.
\end{array}\right.
\eee
In the same way, the set $\{ \lambda\in\C\,;\, |\lambda -m_2|\leq \delta h\}$ contains at most one  eigenvalue of $Q_\mu^0$, and on the set ${\mathcal J}_0$ of those values of $h$ for which it contains one eigenvalue, the domain $\{ \lambda\in\C\,;\, \delta h< |\lambda -m_2|\leq 2\delta h\}$ does not meet ${\rm Sp} (Q_\mu^0)$. Then, we set,
$$
\gamma_0 (h) =\left\{\begin{array}{ll}
\{ \lambda\in\C\,;\,{\rm dist}(\lambda, [m_1,m_2])=3 \delta h/2\} \,\mbox{ if } \,h\in {\mathcal J}_0;\\
\{ \lambda\in\C\,;\,{\rm dist}(\lambda, [m_1,m_2])= \delta h/2\}
\,\mbox{ if }\, h\in {\mathcal J}\backslash {\mathcal J}_0.
\end{array}\right.
$$

When $\lambda\in\gamma_k (h)$ ($k\geq 0$), we see as in  the proof of Proposition \ref{compPmuPD} (see (\ref{resolvNeumann})) that the inverse of $\widetilde Q_\mu^0 -\lambda$ can be written as,
$$
(\widetilde Q_\mu^0 -\lambda)^{-1} = \chi_1 (\widetilde Q_D -\lambda)^{-1}\chi_2 + {\mathcal R}(\lambda),
$$
where $\widetilde Q_D$ is the Dirichlet realization of $\widetilde Q_\mu^0$ on $[0, R_{1,M}]$, $\chi_1, \chi_2$ are as in (\ref{resolvapp}), and ${\mathcal R}(\lambda)$ satisfies,
$$
\Vert {\mathcal R}(\lambda)\Vert_{{\mathcal L}({\mathcal H})} ={\mathcal O}(|h\ln h|^{-1})
$$
as in (\ref {pippobis}). \ Here, ${\mathcal H}$ is the  space introduced in the proof of Proposition \ref{compPmuPD}. \ In particular, we obtain,
$$
\Vert (\widetilde Q_\mu^0 -\lambda)^{-1}\Vert_{{\mathcal L}({\mathcal H})} ={\mathcal O}(h^{-1}),
$$
and thus, if we denote by ${\mathcal K}_0$ the space $Ker (L_\RR)$ endowed with the norm
$$
\Vert \psi\Vert_{{\mathcal K}_0}:= \Vert R\psi (R\omega)\Vert_{{\mathcal H}\otimes L^2(S^2)},
$$
we have,
\bee
\label{estresQmu0}
\Vert (Q_\mu^0 -\lambda)^{-1}\Vert_{{\mathcal L}({\mathcal K}_0)} ={\mathcal O}(h^{-1}).
\eee
On the other hand, thanks to (\ref{estbmu}), and by using the Calderon-Vaillancourt theorem (see, e.g., \cite{Ma1}), it is not difficult to show that the operator ${\mathcal B}_\mu^0$ is uniformly bounded on ${\mathcal K}_0$ (for instance, one can start by working on $C_0^\infty (\R^3\backslash 0)$ with the so-called right-quantization, in order to be able to pass in polar coordinates without problem, and then use a density argument and the fact that the polynomial weight used in the definition of ${\mathcal H}$ becomes trivial near $R=0$).

Therefore, we see on (\ref{PmuQmu}) that, for $h>0$ small enough and $\lambda\in\cup_{k\geq 0}\gamma_k (h)$, the operator $\widetilde P_\mu^0(z)-\lambda$ is invertible, and its inverse can be written as,
\bee
\label{estresPmu0}
(\widetilde P_\mu^0(z)-\lambda)^{-1}=(Q_\mu^0 -\lambda)^{-1}\left(I -h^2{\mathcal B}_\mu^0(Q_\mu^0 -\lambda)^{-1}+{\mathcal O}(h^2)\right),
\eee
in ${\mathcal L}({\mathcal K}_0)$.

For $k\geq 0$, we set,
\begin{eqnarray*}
&& \Pi_{P,k}(h) := \frac1{2i\pi }\oint_{\gamma_k (h)}(\lambda-\widetilde P_\mu^0(z))^{-1}d\lambda;\\
&& \Pi_{Q,k}(h) := \frac1{2i\pi }\oint_{\gamma_k (h)}(\lambda-Q_\mu^0)^{-1}d\lambda.
\end{eqnarray*}
In particular, for $k\geq 1$, the rank of $\Pi_{Q,k}(h)$ is 1 or 2, depending if $h\in {\mathcal J}_k$ or not. In both cases, (\ref{estresQmu0})-(\ref{estresPmu0}) show that the ranks of $\Pi_{P,k}(h)$ and $\Pi_{Q,k}(h)$ are identical, and that the eigenvalues of $\widetilde P_\mu^0(z)$ inside $\gamma_k(h)$ coincide to those of $Q_\mu^0$ up to ${\mathcal O}(h^2)$ (the computation is similar to that of  (\ref{eigenmat2x2})).

For $k=0$, the situation is even simpler, because we know that the eigenvalues of $Q_\mu^0$ that lie inside $\gamma_0(h)$ are simple and separated by a distance of order $h$, and the same result holds.

 Finally, for $\lambda\in \Omega(h)$ in the exterior of all the $\gamma_k(h)$'s, the estimate (\ref{estresQmu0}) is still valid, and thus so is (\ref{estresPmu0}).
Therefore, the spectral projector of $\widetilde P_\mu^0(z)$ on $\Omega(h)$ can be split into a finite sum of the $\Pi_{P,k}(h)$ (with a number of $k$'s that is ${\mathcal O}(h^{-1})$), and the previous arguments show that the eigenvalues of $\widetilde P_\mu^0(z)$ in $\Omega(h)$ coincide with those of $Q_\mu^0$ up to ${\mathcal O}(h^2)$.
\end{proof}

\section {Comparison between the spectrum of the operators $\widetilde H_\mu^0$ and $H_\mu^0$}
We have,
\begin{proposition}\sl
\label{compHtildeH}Let $\alpha >0$ fixed small enough, and
let ${\mathcal J}\subset (0,1]$, with $0\in\overline{\mathcal J}$, such that there exists $\delta >0$ such that,
\begin{eqnarray}
{\rm Sp} (P^\sharp_D)\cap [m_2+\alpha-2\delta h, m_2+\alpha+2\delta h] =\emptyset.
\end{eqnarray}
Set,
$$
\Omega (h) := \{ z\in\C\,; \, {\rm dist}(\re z, [m_1,m_2+\alpha])<\delta h, \, |\im z| < C^{-1}h\ln\frac1{h}\},
$$
with $C>0$ a large enough constant. Then, there exists $\delta_0>0$ and a bijection,
$$
b\, :\, {\rm Sp} (\widetilde H_\mu^0)\cap \Omega (h) \rightarrow {\rm Sp} (H_\mu^0)\cap \Omega (h),
$$
such that,
$$
b(\lambda )-\lambda ={\mathcal O}(e^{-\delta_0/h}),
$$
uniformly for $h\in {\mathcal J}$.
\end{proposition}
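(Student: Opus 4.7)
The strategy mirrors that of Proposition \ref{compPmuPD}, exploiting the fact that the two operators $H_\mu^0$ and $\widetilde H_\mu^0$ coincide on $\Omega(4/M)$. The only place where they differ is the region $\Omega_0(4/M) = \{R < 4/M\}$, and on that region both operators have a real part of the potential that is at least $M/4 + \inf_R \E_1(R)$: for $\widetilde H_\mu^0$ this is exactly (\ref{StimaBasso}), while for $H_\mu^0$ it follows from the Coulomb term $1/\phi_\mu(R)$ together with the lower bound of $H_{\mu,e}(\RR)$. Choosing $M$ large enough, this common lower bound lies strictly above $m_2 + \alpha$, so the whole window $\Omega(h)$ is classically forbidden in $\Omega_0(4/M)$ for both operators.

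The plan is to construct a quasi-resolvent for $H_\mu^0 - z$ using $(\widetilde H_\mu^0 - z)^{-1}$ on the exterior region together with a suitably defined inverse on the interior, and to estimate the error via Agmon-type exponential decay. Pick $\chi_1, \chi_2 \in C^\infty(\R_+ ; [0,1])$ with $\chi_1 \equiv 1$ on a neighborhood of $\mathrm{supp}(1-\chi_2)$, $\chi_2 \equiv 1$ on $\Omega(5/M)$ and $\mathrm{supp}(1-\chi_2) \subset \Omega_0(4/M)$. Let $H_{\mu,D}^{int}$ be any self-adjoint Dirichlet-type realization of $H_\mu^0$ on $\Omega_0(6/M)$ for which the distorted Agmon estimates apply; by the same construction as in \cite{HeSj2,FLM}, using the weight function $d(R, \R_+\setminus \Omega_0(4/M))\cdot \sqrt{M/4 - m_2 -\alpha}$, one obtains for $z\in \Omega(h)$,
\begin{equation*}
\Vert (H_{\mu,D}^{int} - z)^{-1}\Vert = \mathcal{O}(h^{-N}),
\end{equation*}
for some $N$ and uniformly in $z$. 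Then set
\begin{equation*}
R_\mu(z) := (1-\chi_2)(H_{\mu,D}^{int} - z)^{-1}(1-\chi_1) + (\widetilde H_\mu^0 - z)^{-1}\chi_1,
\end{equation*}
and compute
\begin{equation*}
(H_\mu^0 - z) R_\mu(z) = I + K_\mu(z),
\end{equation*}
where $K_\mu(z)$ is supported in the transition region $\{\chi_1' \neq 0\}\cup\{\chi_2'\neq 0\}\subset \Omega_0(6/M)\setminus \Omega(3/M)$, with coefficients that contain commutators $[-h^2\Ss_\mu \Delta_\RR \Ss_\mu^{-1},\chi_j]$. Since both $(H_{\mu,D}^{int}-z)^{-1}$ and $(\widetilde H_\mu^0 -z)^{-1}$ map into functions whose weighted $L^2$-norm (with the Agmon weight above) is controlled, one obtains $\Vert K_\mu(z)\Vert = \mathcal{O}(e^{-\delta_1/h})$ for some $\delta_1>0$ depending only on $M$ and $\alpha$.

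The same construction, with the roles of $H_\mu^0$ and $\widetilde H_\mu^0$ exchanged, gives an analogous quasi-resolvent $\widetilde R_\mu(z)$ with exponentially small error. Inverting by Neumann series, one obtains the existence of both resolvents on all of $\Omega(h)\setminus (\mathrm{Sp}(H_\mu^0)\cup \mathrm{Sp}(\widetilde H_\mu^0))$, and the comparison of the spectral projectors
\begin{equation*}
\Pi^{H}_\mu := \frac{1}{2i\pi}\oint_\gamma (z - H_\mu^0)^{-1}\, dz,\qquad \Pi^{\widetilde H}_\mu := \frac{1}{2i\pi}\oint_\gamma (z - \widetilde H_\mu^0)^{-1}\, dz,
\end{equation*}
along a contour $\gamma \subset \Omega(h)$ avoiding both spectra, yields $\Pi^{H}_\mu - \Pi^{\widetilde H}_\mu = \mathcal{O}(e^{-\delta_0/h})$ in operator norm for some $0<\delta_0 <\delta_1$. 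Consequently $\mathrm{rank}\,\Pi^{H}_\mu = \mathrm{rank}\,\Pi^{\widetilde H}_\mu$, and, by standard matrix perturbation of the reduced finite-dimensional operator $H_\mu^0 \Pi^{H}_\mu$ vs.\ $\widetilde H_\mu^0\Pi^{\widetilde H}_\mu$ (of size $\mathcal{O}(h^{-N_0})$), one constructs the desired bijection $b$ with $b(\lambda)-\lambda = \mathcal{O}(e^{-\delta_0/h})$.

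The main technical point is the validity of the Agmon estimates in the present distorted, non-selfadjoint setting near the Coulomb singularity; this is handled by using the twisted pseudodifferential framework of \cite{MaSo2} together with the unitary covering $(\U_\ell,\Omega_\ell)$, which removes the $\RR$-dependent singularities, so that the weighted commutator estimates go through uniformly in $h$ and in the distortion parameter $\mu$.
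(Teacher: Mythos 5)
Your main idea — that $H_\mu^0$ and $\widetilde H_\mu^0$ differ only in the interior region $\Omega_0(4/M)$, where both operators have a large real part in the potential, and that Agmon estimates should therefore give exponentially small discrepancies — is in the right spirit. But the construction of the quasi-resolvent has a genuine gap: your exterior piece is $(\widetilde H_\mu^0 - z)^{-1}\chi_1$, and you never establish a uniform bound on $\|(\widetilde H_\mu^0 - z)^{-1}\|$ for $z$ along a contour inside $\Omega(h)$. Since $\Omega(h)$ contains the resonances of $\widetilde H_\mu^0$, this resolvent blows up there, and the $O(e^{-\delta_1/h})$ bound you claim for $K_\mu(z)$ (which involves this resolvent) cannot be obtained without first knowing the spacing of the spectrum — which is precisely what the comparison is meant to establish. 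Writing ``on a contour avoiding both spectra'' does not resolve this: to choose such a contour with a quantitative resolvent bound, one needs an a priori estimate, and your sketch offers none.

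The paper avoids this circularity by the filling-the-well device. It introduces $\widehat H_\mu^0 = H_\mu^0 + F(R)$ with $F \in C_0^\infty((4/M,R_{1,M}))$ chosen so that $\inf (W_1 + F) > m_2 + \alpha$, and proves (Lemma \ref{Lem9}, via a two-piece decoupling $H_I \oplus H_E$ where $H_E = \widetilde H_\mu^0 + F$) that $\|(\widehat H_\mu^0 - z)^{-1}\| = \Oo(|h\ln h|^{-1})$ \emph{uniformly} on all of $\Gamma(h)$, because the filled operator has no spectrum there. This filled-well resolvent — not that of $\widetilde H_\mu^0$ — is what serves as the outer piece in the parametrix $\Rc_1(z)$, which is then combined with the self-adjoint Dirichlet realization $H_D$ on the compact set $[0,R_3]$ (inside the island, where $\Ss_\mu = 1$). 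The cutoffs $\chi_1, \chi_2$ in the paper live near the barrier $R_{1,M}$ rather than at the Coulomb regularization boundary as in your sketch, and this is not cosmetic: it is what makes $H_D$ self-adjoint with real spectrum and makes the finite-rank projector comparison through $A_k$ and $\Pi_{A_k}$ work. Your interior operator $H_{\mu,D}^{int}$ on $\Omega_0(6/M)$ sits entirely in the classically forbidden region, so its resolvent is indeed $\Oo(1)$ there, but its spectral data is irrelevant — the spectral information one actually needs lives at the wells and the barrier, and your decomposition loses track of it.

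In short: without the filling-the-well operator you cannot control the outer resolvent, and the Neumann series / contour-integral step of your argument does not close. You should also note that the paper derives $\|\Pi_{\mu,k} - \widetilde\Pi_{\mu,k}\| = \Oo(e^{-\delta''/h})$ through a careful intermediate step ($\Pi_{\mu,k}\approx A_k \approx \Pi_{A_k}$, then $\operatorname{rank}\Pi_{\mu,k}\leq 2$, then an appeal to \cite{MaMe} Prop.\ 6.1) rather than by a direct exchange-of-roles quasi-resolvent argument, and that the Agmon estimates it uses are proved in a region ($\operatorname{supp} D_R\chi_1$ and $\operatorname{supp} F$) where the distortion is trivial and the operator coefficients are regular, which avoids the delicate singular-coefficient issue you flag only vaguely at the end.
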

\begin{proof}
To prove this result, we use the arguments of Proposition 6.1 in \cite{MaMe}. \ In particular we have to check that condition (6.6) in \cite {MaMe}
holds true. \
We consider the oriented loop,
$$
\gamma(h):=\{ z\in\C\,;\,{\rm dist}(z, [m_1,m_2+\alpha])= \delta h/2\}.
$$
By (\ref{feschbachRes}) and the results of the previous section (in particular (\ref{estresQmu0})-(\ref{estresPmu0})), we already know that there exists some constant $C>0$ such that,
$$
\sup_{z\in \gamma (h)}\Vert (z-\widetilde H_\mu^0)^{-1}\Vert_{{\mathcal L}(Ker (\LL_\RR+\LL_\rr))} ={\mathcal O}(h^{-C}).
$$
We set,
$$
\widetilde \Pi_\mu:= \frac1{2i\pi}\oint_{\gamma (h)}(z-\widetilde H_\mu^0)^{-1} dz,
$$
and we denote by $F=F(R)\in C_0^\infty ((\frac4{M}, R_{1,M}); \R_+) $ a function that `fills the wells', in the same sense as in the proof of Proposition \ref{compPmuPD}, that is,
\bee
\label{tappo}
\inf_{(0,R_{1,M}]}(W_1+F)> m_2+\alpha.
\eee
Then, we set,
\bee
\widehat H^0_\mu = H_\mu^0 + F(R),
\eee
and we first prove,
\begin{lemma}\sl
\sl \label{Lem9}  Let $\Gamma(h)$ be the closure of the complex domain surrounded by $\gamma(h)$. Then, there exists a constant $C>0$ such that, for all $z\in \Gamma(h)$, one has,
$$
\left \|( \widehat H^0_\mu - z )^{-1} \right \|_{{\mathcal L}(Ker (\LL_\RR+\LL_\rr))} = \Oo (h ^{-C}),
$$
uniformly for $h>0$ small enough.
\end{lemma}
\begin {proof}
We use a standard method of localization that consists in decoupling the effects of the barrier from those of the remaining part of the operator (see, e.g., \cite{BCD}).

We fix $a,b>0$ such that $\frac3{M} < a < b < \frac4{M}$, and we denote by $J_I, J_E \in C^\infty (\R_+ ; [0,1])$ two functions satisfying,
\begin{eqnarray}
&& {\rm Supp} J_I\subset [0,b) \, ; \, {\rm Supp} J_E\subset (a, +\infty);\\
&& J_I = 1 \, \mbox{ on }\,  [0, a] \, ; \,  J_E =1\mbox{ on }\,  [b, +\infty);\\
&& J_I^2 +J_E^2 =1.
\end{eqnarray}
Next we denote by ${\mathcal H}_I$ the space $\{ u\left\vert_{\{ |\RR|\leq b\}} \, ; \, u\in Ker (\LL_\RR +\LL_\rr)\}\right.$ endowed with the standard $L^2$-norm, and ${\mathcal H}_E$ the space $Ker (\LL_\RR +\LL_\rr)$ endowed with the norm,
$$
\Vert u\Vert_{{\mathcal H}_E}:= \Vert Ru(R\omega, \rr)\Vert_{{\mathcal H}\otimes L^2(S_\omega^2)\otimes L^2(\R_\rr^3)}.
$$
We also define $\widetilde{\mathcal H}$ as the space $Ker (\LL_\RR +\LL_\rr)\}$ endowed with the norm,
$$
\Vert u\Vert_{\widetilde {\mathcal H}}:=\left( \Vert J_I u\Vert_{L^2}^2 + \Vert J_E u\Vert_{{\mathcal H}_E}^2\right)^{\frac12}.
$$
All these norms are clearly equivalent to the standard $L^2$-norms, with constants of equivalence of order $h^{\pm C}$ with $C>0$ constant, that is
\be
h^C \| u \|_{L^2} \le \| u \|_{{\mathcal H}_E} \le h^{-C} \| u\|_{L^2} \, .
\ee
In particular, it is enough to prove the result with $Ker (\LL_\RR +\LL_\rr)$ substituted by $\widetilde{\mathcal H}$.

Moreover, we have the so-called identifying operators,
$$
\begin{array}{cccc}
J\,: & {\mathcal H}_I\oplus {\mathcal H}_E & \rightarrow &\widetilde{\mathcal H}\\
{} & u\oplus v & \mapsto & J_I u+ J_E v\\
{}\\
\widetilde J\,: & \widetilde{\mathcal H}  & \rightarrow &{\mathcal H}_I\oplus {\mathcal H}_E\\
{} & w & \mapsto & J_I w\oplus J_E w
\end{array}
$$
that satisfy $J\widetilde J = {\bf 1}_{\widetilde{\mathcal H}}$, $\Vert \widetilde J\Vert =1$ (actually, $\widetilde J$ is an isometry, and is nothing but the adjoint of $J$ for the standard $L^2$-scalar product). By standard estimates on the transform $T$ (see, e.g., \cite{Ma1}), one can also easily see that $\Vert J\Vert ={\mathcal O}(1)$ uniformly as $h\rightarrow 0$.

Observing that the operator $\widehat H^0_\mu$ is differential with respect to $\RR$ (with operator-valued coefficients acting on $L^2(\R^3_\rr)$), we can consider the zero Dirichlet boundary condition at $|\RR |=b$ realizations $H_I$  of $\widehat H^0_\mu$ on ${\mathcal H}_I$ (note that it is nothing else but the restriction to $Ker (\LL_\RR +\LL_\rr)$ of the Dirichlet realizations of $H_\mu + F$ on $L^2(|\RR | <b)$). Finally, we set
$$
H_E := \widetilde H_\mu^0 + F,
$$
acting on ${\mathcal H}_E$, and
 we define,
$$
H_A:= H_I \oplus H_E,
$$
as an operator acting on ${\mathcal H}_I\oplus {\mathcal H}_E$. Then, setting,
$$
\Theta := \widehat H^0_\mu J - J H_A,
$$
it is elementary to check the identity,
\bee
\label{identRes}
(\widehat H^0_\mu - z)^{-1} = J (H_A-z)^{-1} \widetilde J - (\widehat H^0_\mu -z)^{-1}\Theta (H_A-z)^{-1} \widetilde J.
\eee
Using (\ref{feschbachRes}) and proceeding as in the proof of Proposition \ref{compPmuPD} and in Section \ref{secV}, we immediately obtain,
\bee
\label{estHE}
\Vert (H_E - z)^{-1}\Vert_{{\mathcal L}({\mathcal H}_E) }={\mathcal O}(|h\ln h|^{-1}).
\eee
On the other hand, since $b< 4/M$, by (\ref{Mmu})-(\ref{Mmu1}) we have,
$$
\re H_I \geq \frac{M}4 +\inf_R{\mathcal E}_1(R) \geq m_2+\alpha +1,
$$
if $M\geq 1$ has been chosen sufficiently large. As a consequence, for $z\in \Gamma (h)$, we have,
\bee
\label{estHI}
\Vert (H_I - z)^{-1}\Vert_{{\mathcal L}({\mathcal H}_I) }={\mathcal O}(1),
\eee
uniformly. From (\ref{estHE})-(\ref{estHI}), we deduce,
$$
\Vert (H_A - z)^{-1}\Vert_{{\mathcal L}({\mathcal H}_I\oplus {\mathcal H}_E) }={\mathcal O}(|h\ln h|^{-1}),
$$
and thus also, by standard estimates on the Laplacian,
\bee
\label{estHA}
\Vert \langle h\nabla_\RR\rangle (H_A - z)^{-1}\Vert_{{\mathcal L}({\mathcal H}_I\oplus {\mathcal H}_E) }={\mathcal O}(|h\ln h|^{-1}).
\eee
Now, we compute,
\be
\Theta (u\oplus v)&=& -h^2[\Delta_\RR, J_I]u - h^2[\Delta_\RR, J_E]v \\
&=& -h^2(2(\nabla_\RR J_I)\nabla_\RR + (\Delta_\RR J_I))u-h^2(2(\nabla_\RR J_E)\nabla_\RR + (\Delta_\RR J_E))v.
\ee
Therefore, we deduce from (\ref{estHA}) that we have,
$$
\Vert \Theta (H_A - z)^{-1}\widetilde J\Vert_{{\mathcal L}(\widetilde{\mathcal H})}={\mathcal O}(|\ln h|^{-1}).
$$
In particular, for $h$ small enough we obtain $\Vert \Theta (H_A - z)^{-1}\widetilde J\Vert_{{\mathcal L}(\widetilde{\mathcal H})}\leq 1/2$, and then, by using (\ref{identRes}) and, again, (\ref{estHA}), we finally deduce,
$$
\Vert (\widehat H^0_\mu - z)^{-1}\Vert_{{\mathcal L}(\widetilde{\mathcal H})}={\mathcal O}(|h\ln h|^{-1}),
$$
and the result follows.
 \end {proof}

Now, let (see Fig. \ref {Fig2})

\begin{eqnarray*}
&& R_1:= \min \{ R>0 \, ;\, W_1(R) =m_2+\alpha,\, W_1'(R) >0\};\\
&& R_2:=\min\{ R>R_1\, ;\, W_1(R) =m_2+\alpha, \, W_1'(R) <0\}.
\end{eqnarray*}
Let also $\chi_1(R), \chi_2 (R)\in C_0^\infty ([0, R_2])$, and $R_3\in (R_1,R_2)$, such that,
\be
\chi_1 =\chi_2 =1 \mbox{ near } [0, R_1]\quad ;\quad \mbox {supp} \chi_2\subset [0,R_3] \subset \subset \{ \chi_1=1\}.
\ee
We can also assume that the function $F(R)$ used in (\ref{tappo}) is such that $\chi_1 =\chi_2 =1$ in a neighborhood of Supp$F$, too.

\begin{figure}
\begin{center}
\includegraphics[height=12cm,width=12cm]{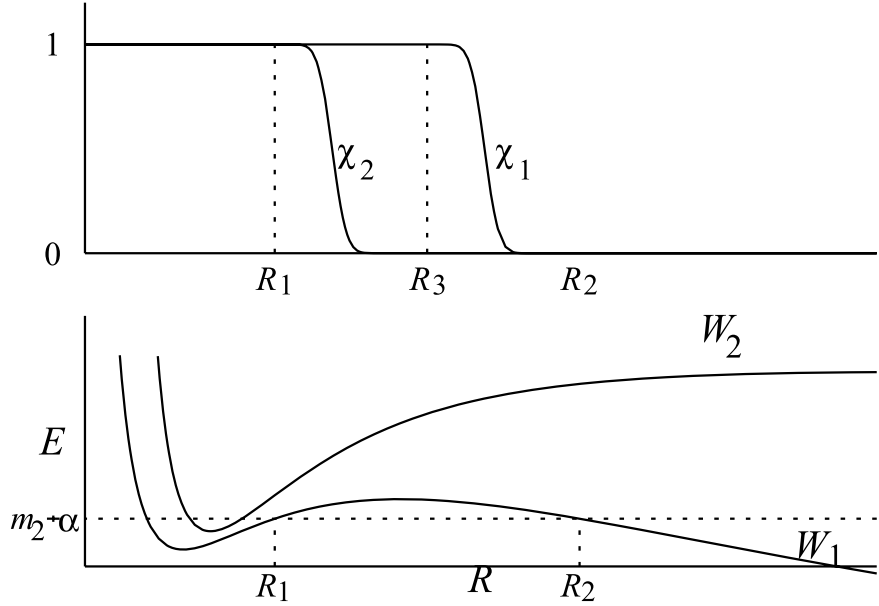}
\caption{Construction of the cut-off functions $\chi_1$ and $\chi_2$.}
\label {Fig2}
\end{center}
\end{figure}

Then, following \cite{HeSj2}, we set,
\be
\Rc_1 (z) = \chi_1 (H_D - z )^{-1} \chi_2 + (\widehat H_\mu^0 - z )^{-1} (1-\chi_2),
\ee
where $H_D $ is the Dirichlet realization of $H_\mu^0$ on $L^2(\{ |\RR |\le R_3 \} \times \R^3)$. \ Actually, $H_D$ does not depend on $\mu$ since ${\mathcal S}_\mu \equiv 1$ for $| \RR | < R_3$. \ Since $\chi_1 \chi_2 = \chi_2$, it follows from this definition (recalling that
$H_\mu^0 = \widehat H_\mu^0 - F(R)$), that we have,
\be
 (H_\mu^0 -z ) \Rc_1 (z) &=& [ h^2D_R^2 , \chi_1 ] (H_D - z)^{-1} \chi_2 +  \chi_2 + (H_\mu^0 -z ) (\hat H_\mu^0 - z )^{-1} (1-\chi_2)\\
&&  = [ h^2D_R^2 , \chi_1 ] (H_D - z)^{-1} \chi_2 +1 - F(R)
(\widehat H_\mu^0 - z )^{-1} (1-\chi_2 ) \\
&&  =1 + K_1(z) + K_2 (z),
\ee
where we have set,
\be
&& K_1 (z) := [ h^2D_R^2 , \chi_1 ] (H_D - z)^{-1} \chi_2 \\
&& K_2 (z) :=- F(R)(\widehat H_\mu^0 - z )^{-1} (1-\chi_2 ).
\ee
Here, we observe that,
$$
{\rm Supp}\hskip 1pt (D_R\chi_1)\cap {\rm Supp}\hskip 1pt (\chi_2)=\emptyset = {\rm Supp}\hskip 1pt F \cap {\rm Supp}\hskip 1pt (1-\chi_2).
$$
Moreover, introducing the notations,
\be
&& H_D = -h^2\Delta_\RR + P(R);\\
&& \widehat H_\mu^0 = -h^2 \Ss_\mu \Delta_{\RR} \Ss_\mu^{-1}+Q(R)
\ee
(where the two $R$-dependent operators $P(R)$ and $Q(R)$ act on the electronic variables only), we also have,
$$
{\rm Supp}\hskip 1pt (D_R\chi_1)\subset \{ P(R) > m_2+\alpha\} \quad ;\quad {\rm Supp}\hskip 1pt F \subset \{ \re Q(R) > m_2+\alpha\}.
$$
Therefore, proceeding as in \cite{HeSj2}, Section 9 (see also \cite{HeSj1}), by performing Agmon estimates, we deduce the existence of some constant $\delta >0$, such that,
\bee
\label{estK1K2}
\Vert K_1(z)\Vert + \Vert K_2(z)\Vert  = \Oo (e^{-\delta/h } ),
\eee
uniformly for $z\in \gamma (h)$ and $h>0$ small enough.

In the same way, setting,
\be
\Rc_2 (z) = \chi_2 (H_D - z )^{-1} \chi_1 +  (1-\chi_2)(\hat H_\mu^0 - z )^{-1},
\ee
we also have,
$$
\Rc_2 (z)(H_\mu^0 -z ) = 1+{\mathcal O}(e^{-\delta' /h}),
$$
with $\delta' >0$ constant. As a consequence, we deduce that $H_\mu^0 -z$ is invertible, and its inverse is given by,
\bee
\label{risolHmu0}
(H_\mu^0 - z)^{-1} = \Rc_1 (z) (1 +K_1 +K_2)^{-1}.
\eee
In particular, using Lemma \ref{Lem9} and the fact that $H_D$ is selfadjoint, and defining $\gamma_k(h)$ as in (\ref{petitcontour}), we conclude that, for all $z\in \gamma_k(h)$, we have,
\bee
\Vert (H_\mu^0 - z)^{-1}\Vert ={\mathcal O}(h^{-C}) \label {Equa75}
\eee
where $C>0$ is a constant.
\vskip 0.2cm
Now, we set,
\be
&& \Pi_{\mu,k} :=\frac1{2i\pi}\oint_{\gamma_k(h)} (H_\mu^0 - z)^{-1}  d z;\\
&& A_k : = \frac1{2i\pi}\chi_1 \oint_{\gamma_k(h)} (H_D - z)^{-1} \chi_2 d z.
\ee
(Note that, by construction, $\Vert A_k\Vert ={\mathcal O}(h^{-1})$ and $A_k$ is of rank at most 2.)

From (\ref{estK1K2})-(\ref{risolHmu0}) and Lemma \ref{Lem9} (plus the fact that $(\widehat H_\mu^0 - z)^{-1}$ is holomorphic inside $\gamma(h)$), we obtain,
\bee
\label{Pimu-A}
\Pi_{\mu,k} =  A_k + \Oo (e^{-\delta/h }).
\eee
In particular, since $\Pi_{\mu ,k}^2 =\Pi_{\mu ,k}$ and $\Vert \Pi_{\mu ,k}\Vert ={\mathcal O}(h^{-C})$, we deduce,
\bee
\label{A2=A}
A_k^2 =A_k+{\mathcal O}(e^{-\delta /2h}),
\eee
and thus, for any $\zeta \in\C$,
\bee
\label{eqopA}
(A_k-\zeta )(A_k+\zeta -1) = - \zeta (\zeta -1) + R_k , \, \Vert R_k\Vert= \Oo (e^{-\delta /2h }).
\eee
As a consequence, if  $\zeta \not= 0,1$ is fixed, then $(A_k-\zeta )$ is invertible, and we can consider the projection,
\be
\Pi_{A_k} := \frac1{2i\pi}\oint_{|\zeta -1|=1/2} (\zeta -A_k)^{-1} d \zeta .
\ee
Then, we prove,
\begin{lemma}\sl
\label{PiA-A}
One has,
$$
\Vert A_k-\Pi_{A_k}\Vert ={\mathcal O}(e^{-\delta/4h}).
$$
uniformly for $h>0$ small enough.
\end{lemma}
\begin{proof}
We write,
\be
\Pi_{A_k} -A_k = \frac {1}{2\pi i} \oint_{|\zeta -1|=1/2} \left [ (\zeta -A_k)^{-1} - (\zeta -1)^{-1}A_k \right ] d\zeta
\ee
and,
\be
(\zeta -A_k)^{-1} - (\zeta -1)^{-1} A_k = (\zeta -A_k)^{-1} \left(1-A_k+(\zeta -1)^{-1}(A_k-A_k^2)\right).
\ee
Moreover, by (\ref{eqopA}), we also have,
\bee
\label{invAk}
(A_k-\zeta )^{-1} = (A_k+\zeta -1) \left [ \zeta (1-\zeta )+ R_k \right ]^{-1}
\eee
In particular, $\Vert (A_k-\zeta )^{-1}\Vert ={\mathcal O}(h^{-1})$ uniformly on $\{|\zeta -1|=1/2\}$, and thus, using (\ref{A2=A}), we obtain,
\bee
\Pi_{A_k} - A_k = \Pi_{A_k} (1 -A_k) + \Oo (e^{-\delta/3h } ).
\eee
On the other hand, using (\ref{invAk}) (and the fact that $R_k=A_k^2-A_k$ commutes with $A_k$), we have,
\be
\Pi_{A_k} (1 -A_k)&=& \frac1{2i\pi}\oint_{|\zeta -1|=1/2}(A_k-\zeta )^{-1}(\zeta -1)d\zeta \\
&=& \frac1{2i\pi}\oint_{|\zeta -1|=1/2}(\zeta -1)(A_k+\zeta -1) \left [ \zeta (1-\zeta )+ R_k \right ]^{-1}d\zeta \\
&=&\frac1{2i\pi}\oint_{|\zeta -1|=1/2} \left( \zeta -1-\frac1{\zeta }R_k\right) \left ( \zeta (1-\zeta )+ R_k \right )^{-1}(A_k+\zeta -1)dz\\
&& + \frac1{2i\pi}\oint_{|\zeta -1|=1/2}\zeta^{-1}R_k \left ( \zeta (1-\zeta )+ R_k \right )^{-1}(A_k+\zeta -1)d\zeta \\
&=& \frac1{2i\pi}\oint_{|\zeta -1|=1/2}\zeta^{-1}R_k \left ( \zeta (1-\zeta )+ R_k \right )^{-1}(A_k+\zeta -1)d\zeta \\
&=& {\mathcal O}(e^{-\delta/4h }),
\ee
where we have used the fact that,
\be
\oint_{|\zeta -1|=1/2} \left( \zeta -1-\frac1{\zeta }R_k\right) \left ( \zeta (1-\zeta )+ R_k \right )^{-1}(A_k+\zeta -1)d\zeta \\
=\oint_{|\zeta -1|=1/2} -\zeta^{-1}(A_k+\zeta -1)d\zeta =0,
\ee
because the function inside the integral is analytic in the disc $\{|\zeta -1|\leq1/2\}$.
\end{proof}

We deduce from Lemma \ref{PiA-A} and (\ref{Pimu-A}) that we have,
\bee
\label{Pimu-PiA}
\Pi_{\mu,k} =  \Pi_{A_k} + \Oo (e^{-\delta/4h }).
\eee
Moreover, still by Lemma \ref{PiA-A}, we see that the restriction,
$$
A_k\left\vert_{{\rm Im} (\Pi_{A_k})}\right.\, :\, {\rm Im} (\Pi_{A_k})\rightarrow {\rm Im} (\Pi_{A_k})
$$
is invertible, and thus, since the rank of $A_k$ is at most 2, we deduce,
$$
{\rm Rank}(\Pi_{A_k})\leq 2.
$$
As a consequence, by (\ref{Pimu-PiA}), we obtain,
$$
{\rm Rank}(\Pi_{\mu,k})\leq 2.
$$
Then, we are exactly in the situation of \cite{MaMe} Proposition 6.1, (i)-(ii), and, setting,
$$
\widetilde \Pi_{\mu,k}:= \frac1{2i\pi}\oint_{\gamma_k (h)}(z-\widetilde H_\mu^0)^{-1} dz,
$$
we conclude that we have,
\bee
\Vert \Pi_{\mu,k} - \widetilde \Pi_{\mu,k}\Vert ={\mathcal O}(e^{-\delta'' /h}),
\eee
with $\delta'' >0$ constant. As before, the result on the interior of $\gamma_k(h)$ ($k\geq 1$) follows in a standard way, and one obtains the required comparison of the spectra of $H_\mu^0$ and $\widetilde H_\mu^0$ on
\be
{\mathcal B} = \{ {\rm Re}\, z \in [m_2, m_2+\alpha], \, |\im z| < C^{-1}|h\ln h|\}.
\ee
As before, the same arguments can be performed on $\{ Re z \in [m_1, m_2], \, |\im z| < C^{-1}|h\ln h|\}$ (in a simpler way, since the eigenvalues of $H_D$ are separated by a distance of order $\sim h$), and Proposition \ref{compHtildeH} follows.
\end{proof}

\begin {remark}
The same argument, using (\ref {Equa75}), also show that the spectrum of $H_{\mu}^0$ and $H_D$  on ${\mathcal B}$ coincides up to an exponentially small term. \ In particular, the eigenvalues of $H_D$ are real and then the resonances of $H_\mu^0$ have exponentially small part.
\end {remark}

\section {Main Result}

Here, by collecting the Propositions 5.1 and 6.1, and Theorem 4.8 it turns out our main result.

\begin {theorem} \label {TeoremaPrincipale}
Let $\alpha >0$ fixed small enough, and
let ${\mathcal J}\subset (0,1]$, with $0\in\overline{\mathcal J}$, such that there exists $\delta >0$ such that,
\be
{\rm Sp} (P^\sharp_D)\cap [m_2+\alpha-2\delta h, m_2+\alpha+2\delta h] =\emptyset.
\ee
Set,
$$
\Omega (h) := \{ z\in\C\,; \, {\rm dist}(\re z, [m_1,m_2+\alpha])<\delta h, \, |\im z| < C^{-1}h\ln\frac1{h}\},
$$
with $C>0$ a large enough constant. \ For $h >0$ small enough then the resonances of $H_\mu^0$ in $\Omega (h)$ coincide up to $\Oo (h^2)$ error-terms, with  eigenvalues of the Dirichlet realizations of $P_{1}$ and $P_{2}$ on $(0,R_{1,M})$, where $R_{1,M}>0$ is the point where $W_1$ admits a local maximum with value greater than $m_2$.
\end{theorem}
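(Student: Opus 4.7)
The plan is to obtain this theorem simply as the composition of the three comparison results that have been painstakingly established in Sections 4, 5, and 6, since the $\Omega(h)$ appearing in the statement is exactly the one used throughout those sections. So the proof is essentially a ``gluing'' argument together with a check that the error terms accumulate additively (rather than multiplicatively) and that the worst term dominates.

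First, I would invoke Theorem 4.8 (the unnamed ``Theorem'' at the end of Section 4), which tells us that the resonances of $P^\sharp_\mu$ with real part in $[m_1,m_2+\alpha]$ and imaginary part $o(|h\ln h|)$ coincide, modulo $\Oo(h^2)$, with the Dirichlet eigenvalues of $P_1$ and $P_2$ on $(0,R_{1,M})$. Call this bijection $b_0$. Next, I would apply Proposition~5.1 (the first proposition of Section~5), which provides a bijection $b_1 : \mathrm{Sp}(P^\sharp_\mu)\cap\Omega(h)\to \mathrm{Sp}(\widetilde H_\mu^0)\cap\Omega(h)$ with $b_1(\lambda)-\lambda = \Oo(h^2)$. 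Finally, Proposition~\ref{compHtildeH} of Section~6 yields a third bijection $b_2:\mathrm{Sp}(\widetilde H_\mu^0)\cap\Omega(h)\to\mathrm{Sp}(H_\mu^0)\cap\Omega(h)$ with $b_2(\lambda)-\lambda = \Oo(e^{-\delta_0/h})$.

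The composition $b := b_2 \circ b_1 \circ b_0^{-1}$ then maps the Dirichlet eigenvalues of $P_1,P_2$ in the appropriate window into $\mathrm{Sp}(H_\mu^0)\cap\Omega(h)$, and by the triangle inequality the total displacement is bounded by $\Oo(h^2) + \Oo(h^2) + \Oo(e^{-\delta_0/h}) = \Oo(h^2)$, since for $h$ small enough the exponentially small contribution is absorbed by $h^2$. The bijectivity of the composition is automatic from that of each factor; I would just note that the three bijections all live over the same spectral window $\Omega(h)$ (the definitions coincide), so no compatibility check is needed beyond what is already proved.

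The only point deserving attention is ensuring the hypotheses align. Proposition~5.1 and Proposition~\ref{compHtildeH} both require the very same non-accumulation condition $\mathrm{Sp}(P^\sharp_D)\cap [m_2+\alpha-2\delta h, m_2+\alpha+2\delta h]=\emptyset$ used in the statement of the present theorem, while Theorem~4.8 is stated without this exact condition but the conclusion there is formulated with a generic $\alpha$ that can be taken equal to ours by Remark~\ref{remecarts}. I would simply observe that this is the reason the theorem is phrased with the parameter set $\mathcal J$: for any other value of $h$, slightly moving $\alpha$ and applying the same chain yields the conclusion, so in the end the statement holds for all sufficiently small $h>0$. There is no truly hard step: everything difficult has already been absorbed into Sections 4--6, and what remains is bookkeeping on error terms and domain matching.
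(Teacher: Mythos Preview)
Your proposal is correct and matches the paper's own approach exactly: the paper's proof consists of the single sentence ``Here, by collecting the Propositions 5.1 and 6.1, and Theorem 4.8 it turns out our main result,'' and your argument simply spells out that collection (composing the three bijections and observing that the error terms sum to $\Oo(h^2)$). If anything, you have been more careful than the authors in checking that the hypotheses and spectral windows align.
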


\begin {remark}
As done in \S 5, by slightly moving the parameter $\alpha$ one can actually reach all the values of $h >0$ small enough.
\end {remark}

\begin {remark}
We would point out that this result still holds true even in absence of the external field; in such a case we don't have resonances for $H_\mu^0$, but real eigenvalues.
\end {remark}

\end {document}